\newtheorem{theorem}{Theorem}
\newtheorem{lemma}[theorem]{Lemma}
\newtheorem{corollary}[theorem]{Corollary}
\newtheorem{remark}[theorem]{Remark}
\newtheorem{proposition}[theorem]{Proposition}
\newtheorem{example}{Example}
\newcommand{\N}{\mathbb N}
\newcommand{\Q}{\mathbb Q}
\newcommand{\R}{\mathbb R}
\newcommand{\Z}{\mathbb Z}
\newcommand{\PowS}[2]{#1[[#2]]}
\newcommand{\poly}{\mbox{{\bf poly}}}
\newcommand{\set}[1]{\{#1\}}
\newcommand{\sem}[1]{\left\llbracket#1\right\rrbracket}
\newcommand{\ord}{\mathrm{ord}}
\newif\ifstartedinmathmode
\newcommand*{\st}{
  \relax\ifmmode\startedinmathmodetrue\else\startedinmathmodefalse\fi
  \ifstartedinmathmode{\;\cdot\;}\else{s.t.~}\fi%
}
\newcommand{\tuple}[1]{\left(#1\right)}
\renewcommand{\vec}[1]{\mathbf{#1}}
\newcommand{\coRP}{\mbox{{\bf coRP}}}
\newcommand{\Pt}{\mbox{{\bf P}}}
\newcommand{\FPt}{\mbox{{\bf FP}}}
\newcommand{\PP}{\mbox{{\bf PP}}}
\newcommand{\coNP}{\mbox{{\bf coNP}}}
\newcommand{\NP}{\mbox{{\bf NP}}}
\newcommand{\sharpP}{\mbox{{\#\bf P}}}
\newcommand{\pspace}{{\bf{PSPACE}}}
\crefname{section}{Sec.}{Sections}
\crefname{fact}{Fact}{Facts}
\crefname{claim}{Claim}{Claims}
\crefname{proposition}{Proposition}{Propositions}
\crefname{lemma}{Lemma}{Lemmas}
\crefname{corollary}{Corollary}{Corollaries}
\crefname{theorem}{Theorem}{Theorems}
\crefname{figure}{Figure}{Figures}
\crefname{equation}{Eq.}{Eqns.}
\crefname{appendix}{Appendix}{Appendices}
\definecolor{darkgreen}{rgb}{0, 0.8, 0}
\definecolor{darkred}{rgb}{0.8, 0, 0}
\newcommand{\CoeffSLP}{$\mathsf{CoeffSLP}$}
\newcommand{\EqSLP}{$\mathsf{EqSLP}$}
\newcommand{\DegSLP}{$\mathsf{DegSLP}$}
\newcommand{\PosSLP}{$\mathsf{PosSLP}$}
\newcommand{\CoeffAlg}{$\mathsf{CoeffAlg}$}
\newcommand{\EqAlg}{$\mathsf{EqAlg}$}
\newcommand{\FinAlg}{$\mathsf{FinAlg}$}
\begin{document}


\title{%
    Multiplicity Problems on Algebraic Series and Context-Free Grammars%
    \thanks{%
        Nikhil Balaji and Mahsa Shirmohammadi are supported by International Emerging Actions grant (IEA'22).
        Mahsa Shirmohammadi is supported by ANR grant VeSyAM (ANR-22-CE48-0005). 
        James Worrell is supported by EPSRC fellowship EP/X033813/1.
        Lorenzo Clemente is partially supported by the Polish NCN grant 2017/26/D/ST6/00201
        and by the European Research Council (ERC) project INFSYS (``Challenging Problems in Infinite-State Systems'', grant agreement No. 950398).%
    }%
}


\author{\IEEEauthorblockN{Nikhil Balaji\IEEEauthorrefmark{1},
Lorenzo Clemente\IEEEauthorrefmark{2},
Klara Nosan\IEEEauthorrefmark{3}, 
Mahsa Shirmohammadi\IEEEauthorrefmark{3} and
James Worrell\IEEEauthorrefmark{4}}
\IEEEauthorblockA{\IEEEauthorrefmark{1}IIT Delhi, India}
\IEEEauthorblockA{\IEEEauthorrefmark{2}University of Warsaw, Poland}
\IEEEauthorblockA{\IEEEauthorrefmark{3}Universit\'e Paris Cité, CNRS, IRIF, France}
\IEEEauthorblockA{\IEEEauthorrefmark{4}Department of Computer Science, University of Oxford, UK}}


\maketitle

\begin{abstract}
In this paper we obtain complexity bounds for computational problems
on algebraic power series over several commuting variables.
The power series are specified   by
systems of polynomial equations: a formalism closely related to  weighted context-free grammars.
We focus on
three problems---decide whether a given algebraic series
is identically zero, determine whether all but finitely many
coefficients are zero, and compute the coefficient of a specific
monomial.  We relate these questions to well-known computational
problems on arithmetic circuits and thereby show that all three
problems lie in the counting hierarchy.  Our main result improves the
best known complexity bound on deciding zeroness of an algebraic
series. This problem is known to lie in PSPACE by reduction
to the decision problem for the existential fragment of the theory of
real closed fields. Here we show that the problem lies in the counting
hierarchy by reduction to the problem of computing the degree of a
polynomial given by an arithmetic circuit.  As a corollary we obtain
new complexity bounds on multiplicity equivalence of context-free grammars restricted to a bounded language,
language inclusion of a non-deterministic finite automaton in an unambiguous context-free grammar,
and language inclusion of a non-deterministic context-free grammar in an unambiguous finite automaton.
\end{abstract}


\section{Introduction}
The subject of this paper is algebraic power series---formal 
series that satisfy a polynomial equation over the field of rational functions.  For
example, consider the generating function $C(x):=\sum_{n=0}^\infty C_n
x^n$ of the sequence $(C_n)_{n=0}^\infty$
of Catalan numbers. Recall that $C_n$ is the number of Dyck words of length $2n$.  The
series $C(x)$ satisfies the polynomial
equation  $1-C(x)+xC(x)^2 =0$ and hence is algebraic.
Algebraic power series generalise
rational series (which are the generating functions of linear
recurrence sequences) and are a subclass of D-finite power series (which are the
generating functions of holonomic sequences).  To illustrate the latter inclusion, note that the Catalan
numbers satisfy the holonomic recurrence $(n+2)C_{n+1} =
2(2n+1)C_n$. 

Algebraic power series 
have been an object of study in
formal language theory ever since
the seminal work of Chomsky and Sch\"utzenberger
(see~\cite[Chapter~IV]{SalomaaSoittola}
and~\cite[Chapters~II and~II]{KuichSalomaa} 
for the principal results and bibliographic references).
In this framework one specifies algebraic series via so-called \emph{proper systems of polynomial equations}, which can alternately be seen as weighted context-free grammars  or as a generalisation of arithmetic circuits that allows cycles.
Such equation systems can be defined both over
commuting and non-commuting variables.
In this work we focus on equation systems over 
several commuting variables with weights in $\Z$.

Two fundamental computational problems associated with an
algebraic series are to compute its coefficients and to
determine whether or not the series is identically zero.
These are generalisations of two well-studied problems on circuits, namely the problem
{\CoeffSLP} of determining a given coefficient 
of the polynomial represented by an arithmetic circuit, and the problem {\EqSLP}
of deciding zeroness of such a polynomial.  (The acronym SLP here stands for \emph{straight line program}, which we treat as synonymous with the term arithmetic circuit.  The problem {\EqSLP} is more commonly called \emph{arithmetic circuit identity testing} or \emph{polynomial identity testing}.)

We extend the problem {\CoeffSLP} from circuits to algebraic series, obtaining the
problem {\CoeffAlg}. The input to the latter
is a system of polynomial equations whose unique solution is a
multivariate 
power series $\sum_{\vec{v}\in \N^k} a_\vec{v} X^{\vec{v}}$ with integer coefficients, a vector $\vec{v}\in \N^k$, and a prime number $p$ (where $\vec{v}$, $p$, and all coefficients of the system of polynomial equations are given in binary).  The problem asks to determine $a_{\vec{v}} \bmod p$.  The reason to introduce the  modulus $p$ is because (as already in the circuit case) 
the bit-length of $a_{\vec{v}}$ is potentially exponential in the bit-length of~$\vec{v}$.

Our approach to {\CoeffAlg} involves a multivariate version of
Hensel's Lemma for computing zeros of systems of polynomials in the
ring of formal power series.  The advantage of this approach
is that, thanks to the quadratic convergence of the root approximation in Hensel's Lemma (meaning that the precision of the approximation doubles with each iteration),  one can compute the $n$-th
coefficient of a power series using a number of arithmetic operations that is polynomial in the bit length of~$n$.   We exploit this fact to reduce 
{\CoeffAlg} to {\CoeffSLP}.
The latter problem is known to be $\sharpP$-hard (see for example \cite{AllenderBurgisserKjeldgaard-PedersenMiltersen:JC:2009})
and from the proof of \cite[Theorem 4.1]{kayal-saha} it can be deduced to be in $\FPt^{\sharpP}$.
We thus have:
\begin{restatable}{theorem}{thmCoeffAlg} \label{thm:CoeffAlg}
    {\CoeffAlg} is equivalent under polynomial time reductions to {\CoeffSLP} and hence is $\sharpP$-hard and in ${\FPt}^{\sharpP}$.
\end{restatable}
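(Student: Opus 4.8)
The plan is to establish the two polynomial-time reductions separately; the stated hardness and upper bound then follow by composing them with the cited facts about {\CoeffSLP}. The direction {\CoeffSLP} $\le_p$ {\CoeffAlg} is essentially trivial: an arithmetic circuit is precisely an acyclic system of polynomial equations --- one equation $Y_g = Y_{g_1}\circ Y_{g_2}$ with $\circ\in\{+,\times\}$ per internal gate $g$, plus equations $Y_g = c$ and $Y_g = X_j$ for the input gates --- and, being triangular, such a system has a unique solution over $\Z[[X_1,\ldots,X_k]]$, namely the tuple of polynomials computed at its gates. So a {\CoeffSLP} instance $(\mathcal C,\vec v,p)$ is, verbatim, a {\CoeffAlg} instance with the same answer, which already shows {\CoeffAlg} is $\sharpP$-hard. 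The work therefore lies in the converse reduction, which also supplies membership in ${\FPt}^{\sharpP}$.

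For {\CoeffAlg} $\le_p$ {\CoeffSLP}, fix an instance: a proper system in fixpoint form $\vec Y = \vec P(\vec X,\vec Y)$ with $\vec Y = (Y_1,\ldots,Y_m)$ (the form arising from weighted context-free grammars; the general case is analogous), whose unique power series solution $\vec f$ satisfies $\vec f\equiv\vec 0\bmod\mathfrak m$ for $\mathfrak m := \ideal{X_1,\ldots,X_k}$, together with $\vec v\in\N^k$ and a prime $p$. Put $\vec F(\vec Y) := \vec Y - \vec P(\vec X,\vec Y)$, so $\vec f$ is the unique root of $\vec F$ in $\mathfrak m^m$, and approximate $\vec f$ by the multivariate Newton (Hensel) iteration $\vec Y^{(0)} := \vec 0$ and $\vec Y^{(t+1)} := \vec Y^{(t)} - M^{(t)}\,\vec F(\vec Y^{(t)})$, where $M^{(t)}$ approximates the inverse Jacobian $\bigl(I - J_{\vec P}(\vec X,\vec Y^{(t)})\bigr)^{-1}$ to the precision needed at step $t$. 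Properness (no $\e$- and no chain-rules) forces $J_{\vec P}(\vec 0,\vec 0) = 0$, hence $J_{\vec P}(\vec X,\vec Y^{(t)})\equiv 0\bmod\mathfrak m$ throughout, so the inverse Jacobian equals $\sum_{j\ge 0}J_{\vec P}(\vec X,\vec Y^{(t)})^j$ and its truncation to degree $<2^t$ is the \emph{finite} sum $M^{(t)} := \sum_{j=0}^{2^t-1}J_{\vec P}(\vec X,\vec Y^{(t)})^j$. The point that keeps everything division-free is that this partial geometric sum is the telescoping product $\prod_{i=0}^{t-1}\bigl(I + J_{\vec P}(\vec X,\vec Y^{(t)})^{2^i}\bigr)$, computable with $\bigO{t}$ multiplications of $m\times m$ matrices over $\Z[\vec X]$. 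A routine induction then gives the quadratic convergence $\vec Y^{(t)}\equiv\vec f\bmod\mathfrak m^{2^t}$: writing $\vec Y^{(t)} = \vec f + \vec E$ with $\vec E\equiv\vec 0\bmod\mathfrak m^{2^t}$ and expanding the polynomial map $\vec F$ about $\vec f$ (with integer coefficients, no division), the update error is $\bigl(I - M^{(t)}J_{\vec F}(\vec f)\bigr)\vec E - M^{(t)}\cdot(\text{quadratic remainder})$, and both summands lie in $\mathfrak m^{2^{t+1}}$ because $M^{(t)}J_{\vec F}(\vec Y^{(t)}) = I - J_{\vec P}(\vec X,\vec Y^{(t)})^{2^t}\equiv I\bmod\mathfrak m^{2^t}$ and $J_{\vec F}(\vec Y^{(t)})\equiv J_{\vec F}(\vec f)\bmod\mathfrak m^{2^t}$.

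To finish, set $T := \lceil\log_2(v_1+\cdots+v_k+1)\rceil$, which is polynomial in the input size since $\vec v$ is given in binary. Unwinding the recurrence produces a \emph{division-free} circuit $\mathcal D$ over $X_1,\ldots,X_k$ computing $\vec Y^{(T)}$: each step reuses a single shared copy of the circuit for $\vec Y^{(t)}$ and appends $\mathrm{poly}(m,\size{\vec P},T)$ new gates --- one substitution of $\vec Y^{(t)}$ into $\vec P$ and into the partials $\partial P_i/\partial Y_j$ (all obtained within size $\bigO{\size{\vec P}}$ by the Baur--Strassen theorem), followed by the $\bigO{t}$ matrix multiplications forming $M^{(t)}$ --- so $\mathcal D$ has polynomial size with polynomial-bit-length constants and is computable in polynomial time. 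Since $2^T > v_1+\cdots+v_k$ and $\vec Y^{(T)}\equiv\vec f\bmod\mathfrak m^{2^T}$, the $\vec X^{\vec v}$-coefficient of the designated component of $\vec Y^{(T)}$ equals $a_{\vec v}$; hence $(\mathcal D,\vec v,p)$ is a {\CoeffSLP} instance with the same answer $a_{\vec v}\bmod p$. Composing the two reductions yields the claimed equivalence, and the $\sharpP$-hardness and ${\FPt}^{\sharpP}$ bound transfer from {\CoeffSLP}.

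The crux is producing the Newton step as a \emph{polynomial-size, division-free} arithmetic circuit: exact Newton inverts a power series, and the naive geometric-series expansion of the inverse has exponentially many terms. The argument must therefore exploit both (i) that properness makes the Jacobian of $\vec F$ congruent to the identity modulo $\mathfrak m$, so the degree-$<2^t$ truncation of the inverse already suffices at step $t$, and (ii) that this truncation is the telescoping product $\prod_{i<t}(I + A^{2^i})$, computable in $\bigO{t}$ matrix multiplications; one must also check that substituting this truncated inverse for the exact one does not destroy the quadratic convergence, which it does not since the extra error lies in $\mathfrak m^{2^{t+1}}$. Everything else --- the size recursion across the $T$ steps, Baur--Strassen for the partials, and the observation that $T = \bigO{\log(v_1+\cdots+v_k)}$ makes exponential-degree precision reachable in polynomially many steps --- is routine book-keeping.
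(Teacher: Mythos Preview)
Your reduction {\CoeffAlg} $\le_p$ {\CoeffSLP} is correct and close in spirit to the paper's: both use Hensel/Newton iteration with quadratic convergence, and both eliminate the division by the Jacobian via a truncated geometric series realised by repeated squaring. The paper organises this slightly differently---it carries the exact Newton iterates as \emph{rational} series (separate circuits for numerator $g_{n,i}$ and denominator $1-h_{n,i}$) and only at the end applies Strassen's truncation $g_{n,i}\sum_{j<2^n}h_{n,i}^j$---whereas you fold the truncation into each Newton step so that the iterates $\vec Y^{(t)}$ are polynomials throughout. Your variant is arguably tidier; the paper's has the minor advantage that quadratic convergence is inherited directly from the standard Hensel lemma rather than needing the separate inductive check you give for the modified iteration. (Incidentally, since the $P_i$ are given as explicit lists of monomials, Baur--Strassen is not needed for the partials.)

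However, your reduction {\CoeffSLP} $\le_p$ {\CoeffAlg} has a genuine gap. You assert that the triangular system of gate equations $Y_g = Y_{g_1}\circ Y_{g_2}$, $Y_g=c$, $Y_g=X_j$ is ``verbatim'' a {\CoeffAlg} instance, but {\CoeffAlg} is defined only for \emph{proper} systems: each $P_i$ must have its constant term and its $Y$-linear coefficients in $\mathfrak m=\ideal{X_1,\ldots,X_k}$. An addition gate $Y_g=Y_{g_1}+Y_{g_2}$ violates this (the linear coefficients are $1\notin\mathfrak m$), as does a constant input $Y_g=1$. So the system you produce is not a legal input to {\CoeffAlg}; the paper explicitly flags this (``{\CoeffSLP} is not trivially subsumed by {\CoeffAlg} as the latter requires the input system of equations to be proper''). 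The fix (Claim~\ref{claim:sharpPhardCoeffAlg} in the paper) is not deep but not vacuous either: balance the circuit into alternating $+$/$\times$ layers, multiply every input by a fresh indeterminate $\tilde x$ so that all gate values land in $\mathfrak m$, and merge each $\times$-gate with its two additive children into a single equation that is purely quadratic in the $Y$'s. This yields a proper system computing $\tilde x^{\alpha} f$ for a computable $\alpha$, and one adjusts the target monomial accordingly.
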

\noindent
To the best of our knowledge, the theoretical complexity of \CoeffAlg~has not been studied before.

Our second main result concerns the complexity of the problem {\EqAlg}: \emph{given a proper system of polynomial equations, determine whether all coefficients of its power series solution are zero.}  
This problem was shown in~\cite[Chapter IV, Theorem 5.1]{SalomaaSoittola} to be polynomial-time reducible to the decision problem for the theory of real-closed fields and later in~\cite[Lemma 1]{FJKW} to be reducible to the 
decision problem  
for the \emph{existential} fragment of this theory, which we denote $\exists\mathbb{R}$.
Here we improve the complexity upper bound on {\EqAlg} by showing
that it is polynomial-time
reducible to the problem {\DegSLP}: \emph{given a polynomial $f$ represented by an algebraic circuit and an integer $d$, decide whether $\deg f\leq d$.}
The substance of this improvement is
twofold: first,
{\DegSLP} is polynomial-time reducible to
$\exists \mathbb{R}$\footnote{
The reduction follows from~\cite{AllenderBurgisserKjeldgaard-PedersenMiltersen:JC:2009}, where {\DegSLP} is shown to be polynomial-time reducible to the problem {\PosSLP} (decide whether a variable-free arithmetic circuit denotes a positive integer), which in turn belongs to $\exists \mathbb{R}$.
}; second, {\DegSLP} is known to lie in the counting hierarchy
whereas the best upper bound for $\exists \mathbb{R}$ is 
$\pspace$.  It is also worth noting that {\DegSLP} is not known to be \NP-hard, whereas 
$\exists\mathbb{R}$ is trivially so.  

\begin{restatable}{theorem}{thmAlgIT} \label{thm:AlgIT}
{\EqAlg} is polynomial-time reducible to {\DegSLP} and thereby lies in the counting class 
$\coRP^{\PP}$.
\end{restatable}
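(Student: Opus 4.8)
The plan is to convert, in polynomial time, a proper system of size~$n$ with designated output series $F = \sum_{\vec v} a_{\vec v}\, \vec X^{\vec v}$ into an arithmetic circuit $C$ together with an integer~$d$ in binary, in such a way that $F = 0$ iff the polynomial computed by $C$ has degree at most~$d$; since {\DegSLP} is known to lie in $\coRP^{\PP}$, this places {\EqAlg} in $\coRP^{\PP}$. Write the system as $\vec Y = \vec p(\vec X, \vec Y)$, so that properness means $\vec p(\vec 0, \vec 0) = \vec 0$ and the $\vec Y$-linear part of $\vec p$ vanishes at the origin; then the unique power-series solution $\vec F$ has $\vec F(\vec 0) = \vec 0$, and $\vec Q(\vec X, \vec Y) := \vec Y - \vec p(\vec X, \vec Y)$ satisfies $\vec Q(\vec 0, \vec 0) = \vec 0$ with Jacobian $\partial_{\vec Y} \vec Q(\vec 0, \vec 0) = I$, so $\vec F$ is the Hensel root of $\vec Q$ at the origin.

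\emph{An a priori bound on the order.} First I would show that $F \neq 0$ implies $\ord(F) := \min\{\, \|\vec v\|_1 : a_{\vec v} \neq 0 \,\} \le M$ for some $M = 2^{\poly(n)}$. Eliminating all but the output variable from $\vec Q = \vec 0$ by iterated resultants yields a nonzero $\Phi(\vec X, Z) = \sum_i c_i(\vec X)\, Z^i \in \Z[\vec X, Z]$ annihilating $F$, with $\deg \Phi \le 2^{\poly(n)}$ (the elimination must be carried out so that degrees stay singly exponential); applying a Newton-polygon / valuation argument to $\sum_i c_i F^i = 0$ shows that the lowest-order terms cancel only if $\ord(F) \le \max_i \ord(c_i) \le \max_i \deg(c_i)$. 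Fix $N := M + 1$, so that $F = 0$ iff $\ord(F) \ge N$.

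\emph{Newton iteration as a pair of circuits.} Substituting $X_j \mapsto s X_j$ for a fresh variable~$s$ turns truncation to total $\vec X$-degree $N$ into truncation modulo $s^N$. I would then run the Newton iteration $\vec Y^{(m+1)} = \vec Y^{(m)} - J(\vec Y^{(m)})^{-1}\, \vec Q(s \vec X, \vec Y^{(m)})$ from $\vec Y^{(0)} = \vec 0$ for $m = \lceil \log_2 N \rceil = \poly(n)$ steps; quadratic convergence gives $\vec Y^{(m)} \equiv \vec F(s \vec X) \pmod{s^{2^m}}$ with $2^m \ge N$. The inverse is taken as $\mathrm{adj}(J(\vec Y^{(m)})) / \det(J(\vec Y^{(m)}))$, which is legitimate because $\det J$ is a unit of $\K(\vec X)[[s]]$ (its value at $s = 0$ is $\det I = 1$), and each iterate is kept as a pair of polynomial-size circuits, one for a numerator and one for a common denominator. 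This produces size-$\poly(n)$ circuits for polynomials $A(s, \vec X)$ and $b(s, \vec X)$ with $b(\vec 0) \neq 0$ and $A \equiv b \cdot F(s \vec X) \pmod{s^N}$, where $A$ is the numerator of the output component.

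\emph{From zeroness to a degree test; the main obstacle.} As $b$ is a unit of $\K(\vec X)[[s]]$, we have $F = 0 \iff \ord_s F(s \vec X) \ge N \iff s^N \mid A$ in $\K[\vec X][s]$: the forward direction is immediate from $A \equiv 0 \pmod{s^N}$, and conversely $s^N \mid A$ forces $\ord_s F(s\vec X) \ge N$, hence $\ord(F) \ge N > M$, hence $F = 0$ by the first step. Writing $D = 2^{\poly(n)}$ for the bound on $\deg_s A$ read off the circuit size, the divisibility $s^N \mid A$ is equivalent to $\deg_s(s^D A(1/s, \vec X)) \le D - N$, and $s^D A(1/s, \vec X)$ --- the degree-$D$ reversal of $A$ in~$s$ --- is again a polynomial computed by a size-$\poly(n)$ circuit, obtained from that of $A$ by a reciprocal gate for $s$ and a final multiplication by $s^D$; this is an instance of {\DegSLP}. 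The step I expect to be delicate is precisely this one: faithfully encoding the \emph{order} (low-degree-vanishing) condition $\ord(F) \ge N$ as the \emph{degree} (high-degree-vanishing) condition that {\DegSLP} tests, inside a division-free total-degree formulation --- which one arranges by routine manipulations, for instance by first specialising $\vec X$ at a point (a random point preserves $\ord_s A$ with high probability, at worst adding a one-sided randomised step to the reduction) so as to make $A$ univariate in~$s$. A secondary difficulty is keeping every circuit of size $\poly(n)$ through the nested Newton iterations --- the outer one for the Hensel root and the inner one hidden in the matrix inverse. Since $N$, $D$ and $d := D - N$ all have $\poly(n)$ bit-length, the reduction is polynomial-time, which places {\EqAlg} in $\coRP^{\PP}$.
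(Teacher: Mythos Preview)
Your architecture matches the paper's: bound the order of a nonzero solution, approximate it via Newton--Hensel iteration by a polynomial-size circuit, then reverse variables to turn an order test into a degree test. The last two steps are essentially as in the paper, with minor variations: you keep a numerator/denominator pair whereas the paper truncates the geometric series to get a single polynomial approximant, but your argument that $s^N\mid A \iff F=0$ is sound; and your order-to-degree conversion via random specialisation of $\vec X$ gives only a randomised reduction---the paper instead reverses in \emph{all} variables, forming $(x_1\cdots x_k)^{D'}\,\widetilde a(1/x_1,\ldots,1/x_k)$, and tests total degree directly, which keeps the reduction deterministic as the theorem claims.

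The genuine gap is your first step. You assert that iterated resultants yield a nonzero annihilator $\Phi$ of degree $2^{\poly(n)}$, hedged with ``the elimination must be carried out so that degrees stay singly exponential.'' Both halves are problematic, and this is precisely where the paper's technical contribution lies. First, naive elimination over algebraically closed fields can return the \emph{zero} ideal: Panholzer exhibits a proper system for which the elimination ideal in $\Z[X,y_1]$ is trivial, so resultants produce no annihilator of the strong solution at all. Second, even when the result is nonzero, iterated resultants multiply degrees at each of the $\ell-1$ elimination steps, giving a bound doubly exponential in $\ell$---exactly the bound stated by Litow, who left the singly-exponential bound open. The paper's solution sidesteps both issues: it works in the real closed field of Puiseux series $\mathbb{R}\{\!\{X\}\!\}$, observes that the strong solution of a system over $\mathbb{N}$ is first-order definable there as the \emph{least nonnegative} solution (extended to $\mathbb{Z}$-systems by writing the solution as a difference of two such), and applies the singly-exponential quantifier-elimination bounds for real closed fields. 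This both isolates the strong solution (so the annihilator is guaranteed nontrivial) and keeps its degree at most $d^{c\ell^2}$.
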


The proof of Theorem~\ref{thm:AlgIT} combines Hensel's Lemma (as in Theorem~\ref{thm:CoeffAlg})
with an upper bound on the degree of an
annihilating polynomial of an algebraic series specified by a system
$\mathcal S$ of polynomial equations.  The latter bound is singly
exponential in the size of the system $\mathcal S$, and is obtained using results about
quantifier elimination over the theory of real-closed fields.
The paper~\cite[Section 5]{Litow01} states a doubly exponential degree bound on the annihilating polynomial,
and leaves open the existence of a singly exponential bound.

A third natural computational problem on algebraic series is {\FinAlg}:
\emph{given a proper system of polynomial equations, determine whether denoted series has finite support}.
To the best of our knowledge, the theoretical complexity of \FinAlg~has not been studied before.
Concerning this problem we prove:

\begin{restatable}{theorem}{thmFinAlg} \label{thm:FinAlg}
The complement of {\FinAlg} is non-deterministic polynomial-time reducible to {\CoeffSLP},
and thereby {\FinAlg}  lies in $\coNP^{\PP}$.
\end{restatable}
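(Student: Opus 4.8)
The plan is to give a nondeterministic polynomial-time procedure that, given a proper system $\mathcal S$ whose solution is $F = \sum_{\vec v \in \N^k} a_{\vec v} X^{\vec v}$, certifies that $F$ does \emph{not} have finite support, using an oracle for {\CoeffSLP}. The key is to produce a short certificate witnessing the existence of a nonzero coefficient $a_{\vec v}$ with $\vec v$ arbitrarily far out. First I would establish a \emph{degree dichotomy}: there is a bound $D$, singly exponential in the size of $\mathcal S$ (hence of polynomial bit-length), such that if every coefficient $a_{\vec v}$ with $|\vec v|_1 > D$ vanishes then in fact $F$ has finite support contained in $\{\vec v : |\vec v|_1 \le D\}$. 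This should follow from the same annihilating-polynomial machinery used for Theorem~\ref{thm:AlgIT}: $F$ satisfies a polynomial equation $P(X, F) = 0$ with $\deg P$ singly exponential in $|\mathcal S|$, and a standard elimination/Newton-polytope argument bounds how a finitely-supported algebraic series interacts with such a $P$. Consequently, $F$ has infinite support if and only if there exists $\vec v$ with $|\vec v|_1 > D$ and $a_{\vec v} \ne 0$.

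Next I would combine this with the quadratic convergence of Hensel's Lemma exactly as in the proof of Theorem~\ref{thm:CoeffAlg}. Recall that the reduction underlying Theorem~\ref{thm:CoeffAlg} builds, from $\mathcal S$ and a target index $\vec v$, an arithmetic circuit $C_{\vec v}$ (of size polynomial in $|\mathcal S|$ and the bit-length of $\vec v$) computing $a_{\vec v}$; since $|\vec v|_1$ may be exponential, the circuit output may be exponentially large, which is why {\CoeffSLP} works modulo a prime $p$. So the nondeterministic machine for the complement of {\FinAlg} does the following: guess an index $\vec v \in \N^k$ with $|\vec v|_1 > D$ and each coordinate of bit-length polynomially bounded (by the Hensel-based bound, one never needs indices beyond a doubly-exponential magnitude, i.e.\ polynomial bit-length), guess a prime $p$ of polynomially many bits, build $C_{\vec v}$, and query the {\CoeffSLP} oracle to obtain $a_{\vec v} \bmod p$; accept iff this residue is nonzero. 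If $F$ has infinite support, by the dichotomy some such $\vec v$ has $a_{\vec v} \ne 0$, and by the prime number theorem together with a bound on $|a_{\vec v}|$ (singly exponential in the bit-length of $\vec v$, again from Hensel's Lemma) a polynomially-bounded prime $p$ not dividing $a_{\vec v}$ exists, so some nondeterministic branch accepts; conversely a nonzero residue certifies $a_{\vec v} \ne 0$, hence infinite support. This places the complement of {\FinAlg} in $\mathsf{NP}^{\text{\CoeffSLP}}$, and since {\CoeffSLP} lies in $\FPt^{\sharpP}$ by Theorem~\ref{thm:CoeffAlg}, we get $\mathsf{co\text{-}FinAlg} \in \mathsf{NP}^{\sharpP}$; using $\mathsf{NP}^{\sharpP} \subseteq \mathsf{NP}^{\PP} = \mathsf{NP}^{\PP}$ and folding the $\mathsf{NP}$ layer into the $\PP$ oracle via standard counting-hierarchy closure, $\mathsf{co\text{-}FinAlg} \subseteq \mathsf{NP}^{\PP}$, whence {\FinAlg} $\in \coNP^{\PP}$.

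The main obstacle I expect is the degree dichotomy — pinning down the singly-exponential bound $D$ beyond which vanishing of all coefficients forces finite support. Naively, a finitely-supported series is a polynomial, and a polynomial of large degree cannot satisfy the same low-degree annihilating relation as a genuinely infinite-support series with small witnesses of non-vanishing; but making this quantitative requires care. I would argue via the annihilating polynomial $P(X,Y) = \sum_j c_j(X) Y^j$: if $F$ were a polynomial, then examining the top-degree behavior in each variable and using that the leading monomials cannot cancel (because $\mathcal S$ is proper, so $F$ has no constant term and the Hensel iteration is well-defined) forces $\deg F$ to be controlled by $\deg_X P$ and $\deg_Y P$, both singly exponential in $|\mathcal S|$. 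A cleaner route may be to observe that the finite-support case of {\FinAlg} is essentially the zeroness question of Theorem~\ref{thm:AlgIT} applied to the ``tail'' of $F$, which is itself algebraic with a system of size polynomial in $|\mathcal S|$ and $D$; then the singly-exponential degree bound from Theorem~\ref{thm:AlgIT} recursively supplies $D$. One must also verify the bit-length bound on $|a_{\vec v}|$ needed for the prime-guessing step, but this is routine from the explicit Hensel recurrence, whose numerators and denominators grow by at most a constant factor per iteration over $O(\log |\vec v|_1)$ iterations.
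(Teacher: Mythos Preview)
Your overall strategy matches the paper's: establish a degree bound $D$ (singly exponential in $|\mathcal S|$) such that if $A$ has finite support then $\deg A \le D$, and then nondeterministically guess a witnessing monomial beyond degree $D$ and verify via {\CoeffSLP}. The paper does exactly this, using the annihilating-polynomial degree bound of Theorem~\ref{thm:degree} (what the paper calls Corollary~\ref{cor:orderbound}(2)).

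There is, however, a real gap in your argument at the step where you bound \emph{from above} the index of the witnessing coefficient. You write that ``by the Hensel-based bound, one never needs indices beyond a doubly-exponential magnitude''---but Hensel's Lemma says nothing about where nonzero coefficients of $A$ itself lie; it only bounds how fast the polynomial iterates approximate $A$. Knowing that some $a_{\vec v}$ with $|\vec v|>D$ is nonzero does not, a priori, tell you that one such $\vec v$ has polynomial bit-length. The paper closes this gap via a separate quantitative fact (Corollary~\ref{cor:orderbound}(3)): if $A$ has infinite support then $\ord(\mathrm{tail}_D(A))\le D^2+D$. This is proved by observing that $g:=\mathrm{tail}_D(A)$ has annihilating polynomial $Q(y)=P(y+A-g)$, whose coefficients have degree at most $ND+D\le D^2+D$, so Item~1 of Proposition~\ref{prop:annihilate} applies to $g$. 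The upshot is a clean two-sided window: $A$ has infinite support iff the approximant $\widetilde a$ (with $\ord(A-\widetilde a)\ge D^2+D$) contains a monomial of total degree in $[D{+}1,\,D^2{+}D]$.

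Your ``cleaner route'' suggestion---treat the tail as the solution of a new proper system and recurse on Theorem~\ref{thm:AlgIT}---is morally the same idea, but as stated it does not quite work: the head $A-\mathrm{tail}_D(A)$ has degree up to $D$, which is already exponential in $|\mathcal S|$, so a proper system ``of size polynomial in $|\mathcal S|$ and $D$'' is not of polynomial size. The paper sidesteps this by working directly with the shifted annihilating polynomial rather than constructing a new system. Finally, your explicit handling of the prime $p$ in the {\CoeffSLP} query is a detail the paper elides; it is correct and necessary, but once the window $[D{+}1,D^2{+}D]$ is in place the coefficient bound you need follows immediately from the size of the circuit $E_n$ in Claim~\ref{prop:poly_circuit}.
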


The problems {\EqAlg} and {\FinAlg} have a natural characterisation in terms of context-free grammars.  Associated with a context-free grammar
over a $k$-element alphabet we have the \emph{census generating function} (see~\cite{Panholzer05})
\[ f(x_1,\ldots,x_k) := \sum_{\vec{v}\in\N^k} a_\vec{v} x_1^{v_1}\cdots x_k^{v_k} \, , \]
where $a_\vec{v}$ is the number of leftmost derivations that produce a word with Parikh image $\vec{v}$.  Then {\EqAlg} is polynomial-time equivalent to the problem of whether two grammars have identical census generating functions, whereas {\FinAlg} is polynomial-time equivalent to the problem of whether two grammars have respective census generating functions that differ in only finitely many entries.
Specialising \EqAlg~to the case of unary grammars we obtain the following corollary of~\cref{thm:AlgIT}.
\begin{corollary}
    \label{cor:unary CFG}
    Multiplicity equivalence of unary context-free grammars is in $\coRP^{\PP}$.
\end{corollary}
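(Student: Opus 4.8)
The plan is to reduce multiplicity equivalence of unary grammars to \EqAlg~and then invoke \cref{thm:AlgIT}. This is the polynomial-time equivalence between \EqAlg~and equality of census generating functions noted above, specialised to a one-letter alphabet: over a unary alphabet each word $a^m$ is the unique word with Parikh image $(m)$, so two unary grammars are multiplicity equivalent exactly when their (univariate) census generating functions are equal. I would spell the reduction out for completeness.

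First I would recall the Chomsky--Sch\"utzenberger correspondence. Given a unary grammar $G$ over $\{a\}$ with nonterminals $A_1,\dots,A_n$ and start symbol $A_1$, associate with each $A_i$ a power-series variable $X_i$ over $\PowS{\Z}{x}$, and with each production $A_i\to\alpha$ the monomial obtained from $\alpha$ by the substitution $a\mapsto x$, $A_j\mapsto X_j$; summing these monomials over all productions for $A_i$ yields a polynomial $P_i$ and the system $X_i = P_i(x,X_1,\dots,X_n)$, $i=1,\dots,n$. After a polynomial-time, multiplicity-preserving normalisation of $G$---elimination of $\e$-productions and of chain productions $A_i\to A_j$, allowing integer weights written in binary---the resulting system is \emph{proper}; hence it has a unique power-series solution, whose $X_1$-component is the census generating function $f_G=\sum_{m\ge 0}a_m x^m$, with $a_m$ the number of leftmost derivations of $a^m$ in $G$.

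Next, given two unary grammars $G$ and $H$ with start symbols $A_1$ and $B_1$, I would normalise both, take the disjoint union of their systems, and add a fresh variable $Y$ with the equation $Y = P_{A_1} - Q_{B_1}$, where $P_{A_1}$ and $Q_{B_1}$ are the right-hand sides associated with $A_1$ and $B_1$. As $P_{A_1}$ and $Q_{B_1}$ have no constant term and no monomial equal to a single series variable, the combined system is again proper, and the $Y$-component of its unique solution is $f_G-f_H$. Thus $G$ and $H$ are multiplicity equivalent iff $f_G=f_H$ iff the $Y$-component of the solution vanishes identically, i.e.\ iff the combined system (with designated output variable $Y$) is a positive instance of \EqAlg. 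By \cref{thm:AlgIT}, \EqAlg~is polynomial-time reducible to \DegSLP~and lies in $\coRP^{\PP}$, and since $\coRP^{\PP}$ is closed under polynomial-time reductions the corollary follows.

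The only nonroutine step is the normalisation in the second paragraph: it must run in polynomial time and preserve the number of leftmost derivations of every word---not merely the generated language---so that the coefficients of the associated power series still count derivations. Over a unary alphabet the standard eliminations of $\e$-productions and of chain productions achieve this once one allows large integer weights encoded in binary, which is exactly the model of proper systems used throughout the paper. The remaining points---that disjoint union together with a subtraction variable keeps the system proper, and that a proper system has a unique power-series solution---are immediate.
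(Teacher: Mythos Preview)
Your proposal is correct and follows essentially the same approach as the paper: reduce multiplicity equivalence to equality of census generating functions (which over a unary alphabet is immediate since the Parikh map is a bijection), observe that the census functions form the strong solution of a proper polynomial system built syntactically from the grammar, and invoke \cref{thm:AlgIT}. The paper presents this argument in \cref{sec:letter-bounded} (specifically \cref{lem:reduction from letter-bounded to EqAlg}), of which the unary case is the simplest instance.

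One minor remark: your normalisation step (eliminating $\e$- and chain productions while preserving multiplicities) is unnecessary here, because the paper \emph{assumes} throughout \cref{sec:cfg} that grammars are proper, so the associated polynomial system is automatically proper. Your discussion of this step is not wrong, but it addresses a more general setting than the one the corollary is stated for.
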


Some consequences of the corollary above are worth mentioning. 
First, the universality problem for (not necessarily unary) unambiguous context-free grammars is in $\coRP^{\PP}$,
since it reduces to multiplicity equivalence over a unary alphabet.
This problem was previously known to be in \pspace~(\cite[Theorem 2]{FJKW} and \cite[Theorem 10]{Clemente:EPTCS:2020}).
In turn, by the reductions in \cite[Theorems 8 and 9]{Clemente:EPTCS:2020}
the following two problems are also in $\coRP^{\PP}$:
deciding language inclusion of
1) a nondeterministic finite automaton in an unambiguous context-free grammar, and 
2) a nondeterministic context-free grammar in an unambiguous finite automaton.
The last two problems were known to be in \pspace~\cite{Clemente:EPTCS:2020}.

Corollary~\ref{cor:multiplicity equivalence-both}, below, generalises \cref{cor:unary CFG} from the unary case
to the more general case of letter-bounded context-free languages 
and context-free languages restricted to a given bounded language; we formally define these classes of languages in \cref{sec:cfg}.
In both cases the proof goes via a deterministic polynomial-time reduction to {\EqAlg},
and thus to {\DegSLP} by \cref{thm:AlgIT}.

%

\begin{corollary}
    \label{cor:multiplicity equivalence-both}
   The following two problems lie in $\coRP^{\PP}$:
   \begin{enumerate}
   \item Multiplicity equivalence of context-free grammars recognising  letter-bounded languages;
   \item Multiplicity equivalence of context-free grammars 
    restricted to a given bounded language $w_1^*\cdots w_k^*$.
   \end{enumerate}
\end{corollary}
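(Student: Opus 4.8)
The plan is to reduce both problems, in deterministic polynomial time, to \EqAlg; by \cref{thm:AlgIT} this places them in $\coRP^{\PP}$. The common ingredient is that over a \emph{bounded} domain the a priori semantic notion of multiplicity equivalence collapses to equality of census generating functions, which, as recalled above, is polynomial-time equivalent to \EqAlg. The second item will in turn be reduced to the first by relabelling the factors $w_1,\ldots,w_k$ into a fresh block alphabet.

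For the first item, let $G_1,G_2$ be context-free grammars with $L(G_i)\subseteq a_1^*\cdots a_k^*$ over a common ordered alphabet (these notions are made precise in \cref{sec:cfg}; if the pattern is not part of the input it can be computed). Every word of $a_1^*\cdots a_k^*$ is uniquely determined by its Parikh image, so the number of leftmost derivations of $a_1^{n_1}\cdots a_k^{n_k}$ in $G_i$ equals the coefficient of $x_1^{n_1}\cdots x_k^{n_k}$ in the census generating function $f_i$ of $G_i$. Hence $G_1$ and $G_2$ are multiplicity equivalent if and only if $f_1=f_2$, i.e.\ $(G_1,G_2)$ is a positive instance of the grammar reformulation of \EqAlg recalled above. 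Concretely: after the routine elimination of $\varepsilon$- and unit productions the grammars become proper systems of polynomial equations with solutions $f_1-c_1$ and $f_2-c_2$, where $c_i$ is the number of derivations of $\varepsilon$ (finite by properness, computable in polynomial time); if $c_1\neq c_2$ the grammars are inequivalent, and otherwise one combines the two systems --- disjoint copies, with their start equations inlined into one fresh equation so as to keep the system proper --- into a proper system whose solution is $f_1-f_2$, and calls the algorithm for \EqAlg. The case $k=1$, $w_1=a$ recovers \cref{cor:unary CFG}.

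For the second item, given $G_1,G_2$ over an alphabet $\Sigma$ and words $w_1,\ldots,w_k\in\Sigma^*$, we construct a finite transducer $T$ with $O(|w_1|+\cdots+|w_k|)$ states that scans its input block by block --- first some number of copies of $w_1$, then some number of copies of $w_2$, and so on, using nondeterminism only at block boundaries (whether to consume another copy of the current $w_j$ or to proceed to $w_{j+1}$) --- and that emits the letter $b_j$ each time it finishes a copy of $w_j$. The accepting runs of $T$ on a word $u$ are then in bijection with the factorizations $u=w_1^{n_1}\cdots w_k^{n_k}$, and such a run outputs $b_1^{n_1}\cdots b_k^{n_k}$. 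Taking the grammar--transducer product $\bar G_i := G_i\times T$ yields, in polynomial time, a context-free grammar over $\{b_1,\ldots,b_k\}$ with $L(\bar G_i)\subseteq b_1^*\cdots b_k^*$ in which $b_1^{n_1}\cdots b_k^{n_k}$ has exactly as many leftmost derivations as $w_1^{n_1}\cdots w_k^{n_k}$ has in $G_i$: the nondeterminism of $T$ supplies, for each derivation of $u$ in $G_i$ and each factorization of $u$, precisely one derivation of the corresponding block word in $\bar G_i$. Therefore $G_1,G_2$ are multiplicity equivalent on $w_1^*\cdots w_k^*$ if and only if $\bar G_1,\bar G_2$ are multiplicity equivalent, and we conclude by the first item.

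The step I expect to require the most care is the correctness of the product construction in the second item: one has to check that absorbing the (acyclic) $\varepsilon$-transitions of $T$ into the grammar causes no blow-up, and that $T$ contributes exactly one run per factorization, so that the multiplicity-preservation identity holds on the nose; checking that all the systems produced are genuinely proper in the sense required by \EqAlg is a further, but routine, bookkeeping matter. All of the genuinely complexity-theoretic content --- the singly exponential degree bound on the annihilating polynomial and the reduction to \DegSLP --- is already encapsulated in \cref{thm:AlgIT}, invoked here as a black box.
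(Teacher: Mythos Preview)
Your proposal is correct and follows essentially the same approach as the paper: for item~1 you use the bijectivity of the Parikh map on letter-bounded languages to reduce multiplicity equivalence to equality of census generating functions (i.e., \EqAlg), and for item~2 you reduce to item~1 by relabelling the blocks $w_1,\ldots,w_k$ to fresh letters. The only cosmetic difference is in the implementation of item~2: the paper performs the relabelling via the classical inverse-homomorphism construction on pushdown automata (grammar $\to$ PDA $\to$ inverse-image PDA $\to$ grammar, then product with a DFA for $a_1^*\cdots a_k^*$), citing \cite{KuichSalomaa,HopcroftMotwaniUllman:2000,Raz:DLT:1993} for multiplicity preservation, whereas you do it in one shot with a grammar--transducer product; both realise the same multiplicity-preserving map $\sem{N'}_{b_1^{n_1}\cdots b_k^{n_k}} = \sem{N}_{w_1^{n_1}\cdots w_k^{n_k}}$.
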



\subsection{Related Work}
\label{sec:related}

Hensel's Lemma is based on Newton iteration applied to the ring of multivariate power series.
However, while the classical analysis of Newton's iteration provides convergence bounds  with respect to the usual Euclidean metric of $\R^k$,
Hensel's Lemma gives bounds in terms of an ultrametric that is more suitable to the context of power series.
Thus our approach to is related to the
scheme of so-called Newtonian program analysis~\cite{EsparzaKL10},  which involves the use of Newton's method to solve polynomial equations in a variety of different semirings.
The method has been applied to interprocedural
dataflow analysis as well as to
computing termination and reachability probabilites in
quasi-birth-death processes \cite{EtessamiWojtczakYannakakis:PE:2010}, multi-type branching processes \cite{EtessamiStewartYannakakis:STOC:2012}, and stochastic grammars \cite{EtessamiYannakakis:JACM:2009,EtessamiStewartYannakakis:ICALP:2013}.

Aside from language theory, algebraic series are an essential tool in combinatorics, where they are used to derive growth estimates on various types of combinatorial objects.   In this context, Flajolet and Soria~\cite[Theorem~1]{BanderierD15} have developed an explicit formula for the $n$-th coefficient of an algebraic power series.  A disadvantage of this formula 
for obtaining complexity bounds for {\CoeffAlg} is that it requires first computing an annihilating polynomial for the series, and when algebraic series are succinctly encoded as \emph{systems} of polynomial equations
annihilating polynomials require exponential degree in general.
Note also that if one has to hand the holonomic recurrence satisfied by the sequence of coefficients of an algebraic series, one can compute the coefficients one-by-one.   
However, even setting aside the overhead of computing such a recurrence, it is not clear to us whether one can prove Theorem~\ref{thm:CoeffAlg} via this route.


\section{Background}
\subsection{Complexity Theory}
\label{sec:complexity}
We briefly summarise some relevant notions from complexity theory (see~\cite[Chapter 7]{arora-barak} for more details).  
The class $\FPt$ is the function problem version of the decision problem class $\Pt$.
Let
$\Sigma$ be a finite alphabet.  The class $\sharpP$ is the
collection of functions ${f:\Sigma^* \rightarrow \N}$ for which there is
a non-deterministic polynomial-time Turing machine $M$ such that
$f(x)$ is the number of accepting computations of $M$ on input $x$.
The class $\PP$ (probabilistic polynomial time) is a decision analog
of $\sharpP$.  A language $L\subseteq\Sigma^*$ is in $\PP$ if there is a
non-deterministic polynomial-time Turing machine $M$ such that $x \in
L$ if and only if on input $M$ at least one half of the computations
of $M$ on input $x$ are accepting.  The \emph{counting hierarchy} is
the family of complexity classes inductively defined by
$\boldsymbol{C}_0 := \Pt$ and $\boldsymbol{C}_{k+1} :=
\PP^{\boldsymbol{C}_k}$.  In particular, the complexity class
$\coRP^{\PP}$ is included in $\boldsymbol{C}_2$: the second level of the
counting hierarchy.  We have $\bigcup_{k} \boldsymbol{C}_k \subseteq \pspace$.

\subsection{Polynomials and Arithmetic Circuits}
Let $X:=(x_1,\ldots,x_k)$ be a tuple of indeterminates.
Given $\vec{v}=(v_1,\ldots,v_k)$ in $\mathbb{N}^k$, we denote by $X^{\vec{v}}$
the monomial~$x_1^{v_1} \cdots x_k^{v_k}$. The \emph{total degree} of
$X^{\vec{v}}$ is defined to be~$|\vec{v}|:=\sum_{i=1}^k v_i$.  We
denote by $\mathbb{Z}[X]$ the (commutative) ring of polynomials with integer
coefficients and (commuting) variables in $X$.
The total degree of a polynomial is the maximum total degree of its constituent monomials.
The \emph{size} of a polynomial is the number of bits required to represent it
when its coefficients and degrees are written in binary.

An \emph{arithmetic circuit} over variables~$X$ is a directed acyclic graph with input gates labelled with the constants $0,1$ or with the indeterminates~$x_i$.  Internal gates are labelled with
one of the operations $+,-,\times$ and there is a distinguished output
gate.  Each gate of such a circuit represents polynomial in
$\Z[X]$ that is computed in an obvious bottom-up manner starting from the input gates.

The \emph{size} of a circuit is the number of its gates; see Figure~\ref{fig:circuit}.
The following proposition shows that circuits can be exponentially more succinct than polynomials.
\begin{restatable}{proposition}{propgpcircuit} \label{prop:gpcircuit}
Given $m \in \mathbb{N}$, there is a circuit of size $O(\log{m})$ that represents the polynomial~$\sum_{i=0}^{m} x^i$.
\end{restatable}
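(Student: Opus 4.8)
The plan is to use repeated squaring together with the identity
$\sum_{i=0}^{2n-1} x^i = \bigl(\sum_{i=0}^{n-1} x^i\bigr)(1 + x^n)$, processing the binary digits of $m+1$ from most significant to least significant and carrying along the relevant power of $x$ as we go.

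Write $n := m+1$ and let $b_t b_{t-1}\cdots b_0$ be the binary representation of $n$, with $b_t = 1$ and $t = O(\log m)$. For $0 \le j \le t$ set $n_j := \sum_{\ell=j}^{t} b_\ell\, 2^{\ell-j}$, the integer formed by the top bits; then $n_t = 1$, $n_0 = n$, and $n_{j-1} = 2 n_j + b_{j-1}$. I would construct the circuit in stages $j = t, t-1, \ldots, 0$ so that after stage $j$ it contains two gates computing the values $y_j := x^{n_j}$ and $z_j := \sum_{i=0}^{n_j-1} x^i$. Stage $t$ is immediate: since $n_t = 1$, take $y_t := x$ (an input gate) and $z_t := 1$ (an input gate).

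For the step from $j$ to $j-1$, note first that $z_j(1+y_j) = \sum_{i=0}^{2n_j-1} x^i$ and $y_j^2 = x^{2n_j}$, each obtainable with $O(1)$ new gates. If $b_{j-1} = 0$, then $n_{j-1} = 2n_j$, and I set $z_{j-1} := z_j(1+y_j)$ and $y_{j-1} := y_j^2$. If $b_{j-1} = 1$, then $n_{j-1} = 2n_j + 1$, and using $\sum_{i=0}^{2n_j} x^i = \sum_{i=0}^{2n_j-1} x^i + x^{2n_j}$ I set $z_{j-1} := z_j(1+y_j) + y_j^2$ and $y_{j-1} := y_j^2\cdot x$. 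A routine induction on $t-j$ confirms both invariants $y_j = x^{n_j}$ and $z_j = \sum_{i=0}^{n_j-1} x^i$. After stage $0$ the gate $z_0$ computes $\sum_{i=0}^{n_0-1} x^i = \sum_{i=0}^{m} x^i$, which we declare the output.

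For the size bound: stage $t$ uses $O(1)$ gates and each of the $t$ subsequent stages adds $O(1)$ gates, so the circuit has $O(t) = O(\log m)$ gates, as claimed. There is no genuine obstacle here beyond fixing the invariant and checking the two elementary polynomial identities; the one point worth flagging is that we must thread the auxiliary power $x^{n_j}$ through the stages rather than recompute it from scratch at each step — this is precisely what keeps the number of stages, and hence the circuit size, logarithmic in $m$.
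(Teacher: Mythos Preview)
Your proof is correct and is essentially the paper's argument unpacked. The paper observes that $S_m$ can be read off from the $m$-th power of the matrix $\left(\begin{smallmatrix} x & 1 \\ 0 & 1 \end{smallmatrix}\right)$ and appeals to repeated squaring; the two nontrivial entries of that matrix power are exactly your invariants $y_j=x^{n_j}$ and $z_j=\sum_{i=0}^{n_j-1}x^i$, and squaring the matrix (followed by an optional multiplication by the base matrix) is precisely your bit-by-bit update step.
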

\noindent
Conversely, the total degree and the bit-length of the coefficients of a polynomial represented by a circuit
is always at most exponential in the size of the circuit.
%
The following is well-known~\cite{berkowitz, mahajan-vinay}:

\begin{proposition}
\label{prop:det-circuit}
Let $m \in \mathbb{N}$. Given an $m \times m$ matrix whose entries are distinct indeterminates there is an algorithm that runs in $\poly(m)$ time and produces
an algebraic circuit that represents the determinant of the matrix.
\end{proposition}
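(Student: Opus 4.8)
The plan is to use the Samuelson--Berkowitz method for computing the characteristic polynomial, which expresses the determinant as an iterated product of structured matrices whose entries are simple polynomials in the matrix entries.

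Write the input matrix as $A = (x_{ij})_{1 \le i,j \le m}$, where the $x_{ij}$ are the given distinct indeterminates, and for $1 \le \ell \le m$ let $A_\ell$ be the $\ell \times \ell$ leading principal submatrix of $A$, so that $A_m = A$. For $\ell \ge 2$ decompose
\[
A_\ell = \begin{pmatrix} x_{\ell\ell} & R_\ell \\ C_\ell & A_{\ell-1} \end{pmatrix},
\]
with $R_\ell$ a row vector and $C_\ell$ a column vector, each of length $\ell - 1$. The Samuelson identity expresses the coefficient vector $p_\ell \in \Z[X]^{\ell+1}$ of the characteristic polynomial of $A_\ell$ as $p_\ell = T_\ell \, p_{\ell-1}$, where $T_\ell$ is the $(\ell+1)\times \ell$ lower-triangular Toeplitz matrix whose first column is
\[
\bigl(1,\ -x_{\ell\ell},\ -R_\ell C_\ell,\ -R_\ell A_{\ell-1} C_\ell,\ \ldots,\ -R_\ell A_{\ell-1}^{\ell-2} C_\ell\bigr)^{\!\top} .
\]
Iterating from $T_1 = (1,\ -x_{11})^{\!\top}$ gives $p_m = T_m T_{m-1} \cdots T_1$, an $(m+1)\times 1$ vector, and $\det A$ equals, up to the sign $(-1)^m$, its last entry.

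First I would build, for each $\ell$, a sub-circuit for the entries of $T_\ell$: this needs the powers $A_{\ell-1}^0, A_{\ell-1}^1, \ldots, A_{\ell-1}^{\ell-2}$, obtained by $O(m)$ matrix multiplications (each a $\poly(m)$-size circuit), followed by the bilinear forms $R_\ell A_{\ell-1}^k C_\ell$. Next I would compute the iterated product $T_m \cdots T_1$ by the standard $\poly(m)$-size circuit for multiplying a sequence of $O(m)$-dimensional matrices. Since the degree of every polynomial appearing is $O(m)$ and the sub-circuits only add up, the total size is $\poly(m)$, and the construction --- which merely emits the gate list of these sub-circuits --- runs in time $\poly(m)$.

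The only substantive ingredient is the Samuelson--Berkowitz identity itself, which I would quote from \cite{berkowitz}; the rest is bookkeeping --- tracking the shapes of $R_\ell, C_\ell, T_\ell$ and checking that assembling Toeplitz matrices and iterated matrix products costs only polynomially many gates --- so there is no real obstacle. As an alternative route, one can bypass the characteristic polynomial and use the clow-sequence formula of Mahajan and Vinay \cite{mahajan-vinay}, which writes $\det A$ as a signed sum over closed ordered walks and is evaluated by a layered dynamic-programming circuit of size $O(m^4)$ whose gate weights are the indeterminates $x_{ij}$; there the only care needed is to state the clow-sequence identity correctly, after which the circuit construction is immediate.
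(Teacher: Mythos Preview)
Your proposal is correct and aligns with the paper's treatment: the paper does not give a proof of this proposition at all, but simply states it as well-known with citations to \cite{berkowitz} and \cite{mahajan-vinay}, which are precisely the two methods (Samuelson--Berkowitz and the clow-sequence construction) you sketch. Your outline of the Berkowitz algorithm is accurate, and the size bound is as claimed.
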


 A \emph{straight-line program} (SLP) is a sequence of
instructions corresponding to the sequential evaluation of an
arithmetic circuit.
In this paper, we treat arithmetic circuits and  SLPs as synonymous. 
 The following computational problems  for arithmetic circuits are well-studied, 
 see~\cite{kayal-saha, KoiranPerifel:TCS:2007, AllenderBurgisserKjeldgaard-PedersenMiltersen:JC:2009}.
Unless otherwise stated, all integers are represented in binary
and all polynomials are multivariate.
\begin{itemize}

\item {\EqSLP}: Given an arithmetic circuit computing a polynomial~$f$, decide whether $f$ is the zero polynomial.\footnote{Our usage here is non-standard in that {\EqSLP} typically refers to the problem of determining zeroness of an arithmetic circuit that represents an integer.  However the different versions of the problem are interreducible, so the distinction is not significant.}

\item {\DegSLP}: Given a positive integer $d$ and an arithmetic circuit computing
a polynomial~$f$, decide whether the total degree of~$f$ is at most~$d$.

\item {\CoeffSLP}: Given 
a multi-index $\vec{v}$ and prime $p$ (both encoded in binary), and
an arithmetic circuit computing a polynomial $f(X)$,
compute the residue modulo~$p$ of the coefficient of the monomial $X^{\vec{v}}$ in~$f$. 

\end{itemize}

We have the following 
reductions between these problems~\cite{kayal-saha, AllenderBurgisserKjeldgaard-PedersenMiltersen:JC:2009}:
\[ \mathsf{EqSLP} \leq_m \mathsf{DegSLP} \leq_r \mathsf{CoeffSLP} \, \]
where $\leq_m$ denotes a polynomial-time many-one reduction and $\leq_r$ denotes 
a randomized polynomial-time reduction. 

It is known that 
{\CoeffSLP} is $\sharpP$-hard (see for instance \cite{AllenderBurgisserKjeldgaard-PedersenMiltersen:JC:2009})
and from the proof of \cite[Theorem 4.1]{kayal-saha} it can be shown to be in $\FPt^{\sharpP}$.
Meanwhile 
{\DegSLP} is in  $\coRP^{\PP}$ \cite[Theorem 1.5]{kayal-saha},
but is not known to be $\NP$-hard.

\begin{figure}[t]
\begin{center}	
\begin{tikzpicture} 

\node[draw=none] at (2,6.5) (out) {$x^{2^n}+x^{2^n-1}+\ldots+x+1$};

\node[draw] at (2,5.5) (B6) {$+$};
\node[draw, label={left:$x^{2^n}$~}] at (1,5) (A61) {$\times$};
\node[draw, label={right:~~$x^{2^n-1}+\ldots+x+1$}] at (3,5) (A62) {$\times$};
\draw[<-]  (B6) edge (A61) ;
\draw[<-]  (B6) edge  (A62) ;
\draw[<-]  (out) edge  (B6) ;

\node[draw=none] at (1,4) (A51) {$\vdots$};
\node[draw=none] at (3,4) (A52) {$\vdots$};

\node[draw, label={left:$x^8$~~}] at (1,3) (A41) {$\times$};
\node[draw, label={right:~~$x^7+x^6+\ldots+x+1$}] at (3,3) (A42) {$+$};

\node[draw] at (2,2.5) (B3) {$\times$};
\node[draw, label={left:$x^4$~~}] at (1,2) (A31) {$\times$};
\node[draw, label={right:~~$x^3+x^2+x+1$}] at (3,2) (A32) {$+$};
\draw[<-]  (B3) edge (A31) ;
\draw[<-]  (B3) edge  (A32) ;
\draw[<-,  bend left=30]  (A41) edge (A31) ;
\draw[<-,  bend right=30]  (A41) edge (A31) ; 
\draw[<-]  (A42) edge (B3) ;
\draw[<-]  (A42) edge  (A32) ;

\node[draw] at (2,1.5) (B2) {$\times$};
\node[draw, label={left:$x^2$~~}] at (1,1) (A21) {$\times$};
\node[draw, label={right:~~$x+1$}] at (3,1) (A22) {$+$};
\draw[<-]  (B2) edge (A21) ;
\draw[<-]  (B2) edge  (A22) ;
\draw[<-,  bend left=30]  (A31) edge (A21) ;
\draw[<-,  bend right=30]  (A31) edge (A21) ; 
\draw[<-]  (A32) edge (B2) ;
\draw[<-]  (A32) edge  (A22) ;

\node[draw] at (2,.5) (B1) {$\times$};
\node[draw] at (1,0) (A11) {$x$};
\node[draw] at (3,0) (A12) {$1$};
\draw[<-]  (B1) edge (A11) ;
\draw[<-]  (B1) edge  (A12) ;
\draw[<-,  bend left=30]  (A21) edge (A11) ;
\draw[<-,  bend right=30]  (A21) edge (A11) ; 
\draw[<-]  (A22) edge (B1) ;
\draw[<-]  (A22) edge  (A12) ;

\draw[<->]  (0,5) edge node[midway,left]{$n$ times} (0,1) ;
 
\end{tikzpicture}
	\end{center}
	\caption{An arithmetic circuit
	 representing the polynomial~$\sum_{i=0}^{2^n} x^i$ of size~$O(n)$. This is a special case of \Cref{prop:gpcircuit} applied to $m = 2^n$.}
	 \label{fig:circuit}
\end{figure}
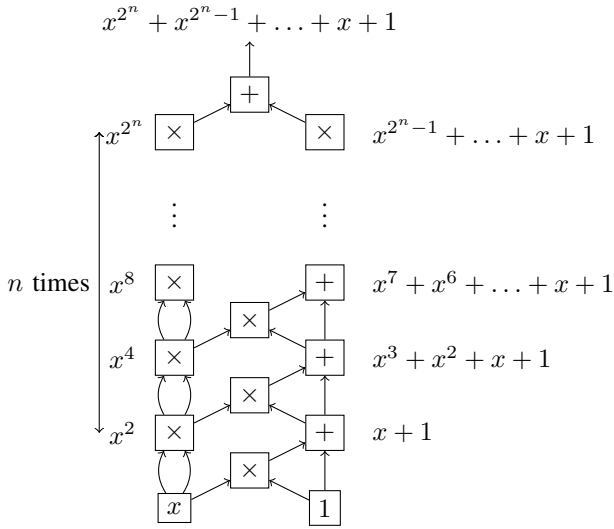

\subsection{Formal Power Series and Algebraic Equation Systems}
The ring of multivariate \emph{formal power series}
with integer coefficients over indeterminates $X=(x_1,\ldots,x_k)$ is defined by
\[ \PowS{\mathbb{Z}}{X} := \Big \{ \sum_{\vec{v}\in \mathbb{N}^k}
  a_{\vec{v}} X^{\vec{v}}\mid a_{\vec{v}} \in \mathbb{Z} \Big\}. \] 
  We denote by $\mathfrak{m}$ the ideal of the ring
$\PowS{\mathbb{Z}}{X}$ generated by $x_1,\ldots,x_k$, i.e., the ideal of all power series
with no constant term. 
For instance,  $x_2 + x_3^2 \in \mathfrak m$, 
but $1 + x_1 \not \in \mathfrak m$.
In language theory such series are
traditionally called \emph{quasiregular}~\cite[Chapter IV]{SalomaaSoittola}.
We note that the units of $\PowS{\mathbb{Z}}{X}$, that is, the elements with multiplicative inverses, are those of the form $\pm 1+ f$ where~$f\in \mathfrak{m}$.

A series $A \in \, \PowS{\mathbb{Z}}{X}$ is said to be \emph{algebraic}
if there exist polynomials $p_0,\ldots,p_d \in \Z[X]$, not all zero, such that 
\[ p_0 + p_1 A + p_2A^2 + \ldots + p_dA^d = 0. \]
We call the polynomial $\sum_{n=0}^d p_n y^n \in \Z[X][y]$ an
\emph{annihilating polynomial} of $A$.

Let $Y=(y_1,\ldots,y_\ell)$ be a tuple of variables.  
A
\emph{polynomial equation system} $\mathcal{S}$ over the
indeterminates~$X$ and variables~$Y$ consists of a collection of $\ell$
equations
\begin{align*}
y_i=P_i	\qquad (i\in \{1,\ldots,\ell\})
\end{align*}
where $P_i\in \Z[X][Y]$.  In computational problems we assume that the polynomials
$P_i$ are presented as lists of monomials and integers are written in binary.
The \emph{size} of $\mathcal{S}$ is the sum of sizes of its polynomials $P_1, \dots, P_\ell$; in particular, it is at least the number $k$ of indeterminates and $\ell$ of variables.
In case the coefficients of the $P_i$ are all polynomials with nonnegative integer coefficients from $\mathbb N$,
we say that $\mathcal{S}$ is \emph{defined over $\N$}.

A vector~$\vec{A}\in \PowS{\Z}{X}^{\ell}$ of formal power series
evaluating each polynomial $P_i$ at~$\vec{A}$
yields an identity $A_i=P_i(\vec{A})$ of formal power series.
%

We introduce a condition guaranteeing unique solutions of a certain kind.
We say that the system~$\mathcal{S}$ is \emph{proper}
if for each polynomial~$P_i=\sum a_{\vec{v}} Y^{\vec{v}}$,
we have $a_{\vec{v}} \in \mathfrak{m}\cap \, \Z[X]$
whenever $|\vec{v}|\leq 1$.
In other words, the coefficients of monomials of total degree at most one are in the ideal $\mathfrak m$.

Inductively define the \emph{approximating sequence} 
  $\boldsymbol{A}^{(0)}, \boldsymbol{A}^{(1)}, \ldots$ 
  in $\PowS{\Z}{X}^{\ell}$ by 
  $\boldsymbol{A}^{(0)} := \boldsymbol{0}$ and 
\begin{align}
    \label{eq:approx}
    \boldsymbol{A}^{(n+1)} :=
    (P_1(\boldsymbol{A}^{(n)}),\ldots,
    P_\ell(\boldsymbol{A}^{(n)}))
\end{align}
for all $n\in\mathbb{N}$.
When applied to a proper system,
every component in this sequence converges pointwise to a vector of power series $\vec{A} \in \PowS{\Z}{X}^{\ell}$
 with respect to the product topology of $\PowS{\mathbb{Z}}{X}$.
This vector $\vec A$ is a solution of the system,
called the \emph{strong solution}~\cite[Section IV.1, Theorem 1.1]{SalomaaSoittola}.
Moreover, the strong solution is the unique quasiregular solution,
and each component thereof is algebraic~\cite{Panholzer05}.
We say that the first
component of this unique solution is \emph{the formal power series computed by~$\mathcal{S}$}.
{More generally, if for an arbitrary system $\mathcal S$ the approximating sequence \eqref{eq:approx} converges to $\vec{A}$, then the quasiregular part of $\vec{A}$ (which is obtained by omitting the term of degree zero) is the solution of a proper polynomial system~\cite{KuichSalomaa}.

\begin{example}
\label{ex:proper-system}
The polynomial equation $y=x+x^2-2xy+y^2$ is proper with two  solutions~$x$ and $1+x$, of which only $x$ is quasiregular.
The approximating sequence in this case is $A_0 = 0$ and $A_n = x + x^{2^n}$, which can be proved by induction.
%
%
In particular, $A_n$ converges to $x$ as required.

 The  generating function $C(x)$ of the Catalan numbers is a solution of the polynomial equation $y = 1 + xy^2$, which is not proper.
 However,  its quasiregular part $C(x)-1$ is the solution of the proper system $y= x + 2xy + xy^2$.
 \end{example}

In analogy with the problems {\EqSLP}, {\DegSLP}, and {\CoeffSLP} for algebraic circuits,
we consider in this paper the following problems on algebraic power series.
\begin{itemize}
    \item {\EqAlg}: Given a proper system of polynomial equations with unique quasiregular solution $\vec A$,
    decide whether the first component $A_1$ of $\vec A$ is the zero power series.
    \item {\FinAlg}: Given a positive integer $d$ and a proper system of polynomial equations with unique quasiregular solution $\vec A$,
    decide whether the total degree of the first component $A_1$ of $\vec A$ is finite.
    \item {\CoeffAlg}: Given a multi-index $\vec{v}$, a prime $p$ (both encoded in binary),
    and a proper system of polynomial equations with unique quasiregular solution $\vec A$,
    compute the residue module~$p$ of the coefficient of the monomial $X^{\vec{v}}$ in the first component $A_1$ of $\vec A$. 
\end{itemize}




\subsection{Hensel's Lemma}
\label{sec:hensellemmam}
Let $R$ be a commutative ring with unity.  A \emph{valuation}
on $R$ is a map $v:R\rightarrow \mathbb{N}\cup\{\infty\}$ such that
for all $x,y \in R$ we have:
\begin{enumerate}
\item $v(x)=\infty$ iff $x=0$,
\item $v(x+y)\geq \min\{v(x),v(y)\}$,
\item $v(xy)=v(x)+v(y)$.
\end{enumerate}
It is easy to check that, if $v$ is a valuation on $R$,
then the function $d(x,y):=2^{-v(x-y)}$ (with the convention that $2^{-\infty}=0$)
defines an ultrametric on $R$.
We say that $R$ is \emph{complete} with respect to~$v$
if it is a complete metric space with respect to~$d$ (in the standard sense).


We now state a version of Hensel's Lemma
that is convenient for our purposes. 
This combines the multivariate Hensel Lemma found in
~\cite[Section~4.6, Theorem~2]{Bourbaki85} and~\cite[Exercise 7.26]{Eisenbud95}
with an assertion of quadratic convergence.
We will use Hensel's Lemma in \cref{sec:rational} to define a sequence of rational approximations
converging to the quasiregular solution of a proper system of polynomial equations.
The proof is a straightforward adaptation of classical arguments from the literature,
however we include a proof for the convenience of the reader. 
 
Assume that the ring $R$ is
complete with respect to a valuation $v$.  Let
$f_1,\ldots,f_\ell$ lie in the polynomial ring $R[Y]$, where $Y=(y_1,\ldots,y_\ell)$ is a tuple of
distinct indeterminates.  Recall that the \emph{derivative matrix} of
$\vec{f}=(f_1,\ldots,f_\ell)$ is
\begin{align}
\label{eq-der-mat-f}
    D\vec{f} := \left( \frac{\partial f_i}{\partial y_j} \right)_{1 \leq i,j \leq \ell}  \in R[Y]^{\ell \times \ell}
\end{align}
and its \emph{Jacobian} is $J_{\vec{f}} := \det (D \vec{f}) \in R[Y]$.

For a positive integer $\ell$, we extend $v$ to a map
$v:R^\ell \rightarrow \mathbb{N}$ by writing
$v(x_1,\ldots,x_\ell):=\min \{ v(x_i):1\leq i \leq \ell\}$.
Note that the extension of $v$ to $R^\ell$ is not in general a valuation.

\begin{theorem}[Hensel's Lemma]
\label{thm:hensel}
  Let $\vec{a} \in R^\ell$ be such that $v(\vec{f}(\vec{a}))>0$ and
  $J_{\vec{f}}(\vec{a})$ is a unit in $R$.
Consider the sequence $(\vec{a}_n)_{n=0}^\infty$, defined inductively
by $\vec{a}_0=\vec{a}$ and
\begin{gather} \vec{a}_{n+1} = \vec{a}_n - (D\vec{f}(\vec{a}_n))^{-1}
  \vec{f}(\vec{a}_n) \qquad (n\in\mathbb{N}) \, .
  \label{eq:RECUR}
  \end{gather}
Then $(\vec{a}_n)_{n=0}^\infty$ is a well-defined sequence in $R^\ell$ 
and there exists 
$\vec{\alpha} \in R^\ell$ such that $\vec{f} (\vec{\alpha}) = 0$ and $v(\vec{\alpha}-\vec{a}_n)  \geq 2^{n}$ for all~$n\in \mathbb{N}$.
\end{theorem}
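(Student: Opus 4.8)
The plan is to run the Newton iteration \eqref{eq:RECUR} and to maintain, by induction on $n$, three invariants: that $\vec a_n$ is well-defined in $R^\ell$, that $J_{\vec f}(\vec a_n)$ is a unit of $R$, and that $v(\vec f(\vec a_n))\geq 2^n$; the asserted convergence rate will then be read off from the last of these. First I would record two elementary facts about $v$. A unit $u$ of $R$ has $v(u)=0$ (from $v(u)+v(u^{-1})=v(1)=0$ with both summands nonnegative); consequently, if $M\in R^{\ell\times\ell}$ has $\det M$ a unit, then by the adjugate identity $M$ is invertible over $R$, every entry of $M^{-1}$ has valuation $\geq 0$, and hence $v(M^{-1}\vec w)\geq v(\vec w)$ for all $\vec w\in R^\ell$. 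Moreover, if $v(z)\geq 1$ then $1+z$ is a unit: the partial sums of $\sum_{i\geq 0}(-z)^i$ are Cauchy because $v((-z)^i)=i\,v(z)\to\infty$, so by completeness the series converges, and telescoping $(1+z)\sum_{i=0}^N(-z)^i=1-(-z)^{N+1}$ identifies its limit as $(1+z)^{-1}$; in particular $u+\delta=u(1+u^{-1}\delta)$ is a unit whenever $u$ is a unit and $v(\delta)\geq 1$.

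Next I would set up the algebraic core, a first-order Taylor expansion with controlled remainder. For any $g\in R[Y]$ and $\vec b,\vec h\in R^\ell$,
\[ g(\vec b+\vec h)=g(\vec b)+\sum_{j=1}^\ell \frac{\partial g}{\partial y_j}(\vec b)\,h_j+Q_g(\vec b,\vec h), \]
where, viewed as a polynomial in $\vec h$, every monomial of $Q_g$ has total degree at least $2$; since valuations add on products and all coefficients lie in $R$, this gives $v(Q_g(\vec b,\vec h))\geq 2\,v(\vec h)$. Applying this componentwise to $\vec f$ yields $\vec f(\vec b+\vec h)=\vec f(\vec b)+D\vec f(\vec b)\,\vec h+\vec Q(\vec b,\vec h)$ with $v(\vec Q(\vec b,\vec h))\geq 2\,v(\vec h)$, and applying it to $g=J_{\vec f}$ yields $v\bigl(J_{\vec f}(\vec b+\vec h)-J_{\vec f}(\vec b)\bigr)\geq v(\vec h)$.

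Then I would run the induction. The base case $n=0$ is the hypothesis, since $v(\vec f(\vec a))>0$ means $v(\vec f(\vec a))\geq 1=2^0$ and $J_{\vec f}(\vec a)$ is a unit. For the step, assume $\vec a_n$ is defined with $J_{\vec f}(\vec a_n)$ a unit and $v(\vec f(\vec a_n))\geq 2^n$. By the first fact $D\vec f(\vec a_n)$ is invertible over $R$, so $\vec a_{n+1}$ is defined and
\[ v(\vec a_{n+1}-\vec a_n)=v\bigl((D\vec f(\vec a_n))^{-1}\vec f(\vec a_n)\bigr)\geq v(\vec f(\vec a_n))\geq 2^n. \]
Substituting $\vec b=\vec a_n$, $\vec h=\vec a_{n+1}-\vec a_n$ into the Taylor expansion of $\vec f$ and using $D\vec f(\vec a_n)(\vec a_{n+1}-\vec a_n)=-\vec f(\vec a_n)$, the first two terms cancel and leave $\vec f(\vec a_{n+1})=\vec Q(\vec a_n,\vec a_{n+1}-\vec a_n)$, whence $v(\vec f(\vec a_{n+1}))\geq 2\,v(\vec a_{n+1}-\vec a_n)\geq 2^{n+1}$. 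From the $J_{\vec f}$-expansion, $v\bigl(J_{\vec f}(\vec a_{n+1})-J_{\vec f}(\vec a_n)\bigr)\geq v(\vec a_{n+1}-\vec a_n)\geq 1$, so by the second fact $J_{\vec f}(\vec a_{n+1})$ is again a unit, closing the induction.

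Finally, $v(\vec a_{n+1}-\vec a_n)\geq 2^n\to\infty$ makes $(\vec a_n)$ Cauchy in the complete space $R^\ell$, hence convergent to some $\vec\alpha\in R^\ell$; continuity of the polynomial map $\vec f$ (built from $+, -, \times$) together with $v(\vec f(\vec a_n))\geq 2^n\to\infty$ forces $\vec f(\vec\alpha)=0$; and writing $\vec\alpha-\vec a_n=\sum_{m\geq n}(\vec a_{m+1}-\vec a_m)$ and passing the ultrametric inequality to the limit gives $v(\vec\alpha-\vec a_n)\geq\min_{m\geq n}2^m=2^n$. The step I expect to need the most care is the simultaneous maintenance of the two invariants in the inductive step: the quadratic gain $v(\vec f(\vec a_{n+1}))\geq 2\,v(\vec f(\vec a_n))$ relies on the exact cancellation in the Taylor expansion, and keeping $J_{\vec f}(\vec a_n)$ a genuine unit---not merely of valuation $0$, which is too weak in rings such as $\PowS{\Z}{X}$---is exactly where completeness of $R$ is used, via the geometric-series fact.
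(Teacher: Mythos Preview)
Your proof is correct and follows essentially the same approach as the paper's: both run the Newton iteration, maintain by induction the same three invariants (well-definedness, $J_{\vec f}(\vec a_n)$ a unit, $v(\vec f(\vec a_n))\geq 2^n$), establish quadratic convergence via the first-order Taylor expansion with a degree-$\geq 2$ remainder, and propagate the unit property of the Jacobian via the Lipschitz-type bound $v(J_{\vec f}(\vec a_{n+1})-J_{\vec f}(\vec a_n))\geq v(\vec a_{n+1}-\vec a_n)$ together with the stability of units under perturbations of positive valuation. The only cosmetic differences are that you prove the unit-perturbation fact explicitly via a geometric series (the paper cites a reference), and you obtain the Jacobian Lipschitz bound directly from the Taylor expansion rather than via a separate claim on $v(p(a)-p(b))\geq v(a-b)$.
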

\noindent The statement of well-definedness refers to the
claim that the matrix $D\vec{f}(\vec{a}_n)$ is
invertible for all $n$,
which amounts to the fact that
$J_{\vec{f}}(\vec{a}_n)$ is a unit of $R$ for all $n$.
\begin{proof}
  We will need the following elementary fact about complete rings (see, for example,~\cite[Lemma 2.6]{Dries}):
  If $x\in R$ is a unit and
  $v(x-y)>0$ then $y$ is also a unit.

  We will show by induction on $n\in\mathbb{N}$  that the sequence
  $(\vec{a}_n)_{n=0}^\infty$ is well-defined and satisfies the
  following for all $n$:
  \begin{enumerate}
  \item $v(\vec{f}(\vec{a}_n)) \geq 2^n$,
  \item $v(\vec{a}_n-\vec{a}_{n-1}) \geq 2^{n-1}$ if $n>0$,
    \item $J_{\vec{f}}(\vec{a}_n)$ is a unit.
    \end{enumerate}

    Suppose first that $n=0$.  Then, Item 2 holds vacuously, while, by the
    choice $\vec{a}_0=\vec{a}$, Items 1 and 3 are hypotheses of the
    theorem.

For the induction step, by the multivariate Taylor's Theorem, for
$\vec{\varepsilon}=(\varepsilon_1,\ldots,\varepsilon_\ell) \in R^\ell$ it
holds that
\begin{gather} \vec{f}(\vec{a}_n+\vec{\varepsilon}) =
  \vec{f}(\vec{a}_n) + 
  ((D\vec{f})(\vec{a}_n)) \vec{\varepsilon}
+ \sum_{1\leq i,j\leq \ell}
\varepsilon_i\varepsilon_j \vec{c}_{ij} \, ,
\label{eq:TAYLOR}
\end{gather}
where $\vec{c}_{ij}\in R^\ell$ for all $1\leq i,j\leq \ell$.  By Item~3 
 the matrix $(D\vec{f})(\vec{a}_n)$ has an inverse that lies in 
$R^{\ell \times \ell}$
%
Thus we may put
$\vec{\varepsilon}:=-((D\vec{f})(\vec{a}_n))^{-1} \vec{f}(\vec{a}_n) \in R^\ell$,
in which case
$v(\vec{\varepsilon}) \geq v(\vec{f}(\vec{a}_n)) \geq 2^n$.
The first inequality follows from the general fact that $v(A \cdot \vec b) \geq v(\vec b)$ for every matrix $A \in R^{\ell \time \ell}$ and vector~$\vec b \in R^{\ell}$.
%
By Equations~\eqref{eq:RECUR} and~\eqref{eq:TAYLOR},
\[ \vec{f}(\vec{a}_{n+1}) = \vec{f}(\vec{a}_n+\vec{\varepsilon}) =
  \sum_{1\leq i,j\leq \ell} \varepsilon_i\varepsilon_j \vec{c}_{ij} \,
  . \] We conclude that $v(\vec{f}(\vec{a}_{n+1})) \geq 2^{n+1}$, as
required in Item~1.  We also have
$v(\vec{a}_{n+1}-\vec{a}_n) = v(\vec{\varepsilon}) \geq 2^n$, as
required in Item~2.

Regarding Item~3, we will use the following claim (proved in \cref{sec:extendedPre}).
\begin{restatable}{claim}{claimvalpoly}
    \label{claim:valuation and polynomials}
    For a univariate polynomial $p(x) \in R[x]$ and two elements $a, b \in R$,
    $v(p(a) - p(b)) \geq v(a - b)$.
\end{restatable}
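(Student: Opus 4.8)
The plan is to reduce the claim to the elementary algebraic fact that $x-y$ divides $p(x)-p(y)$ in the two-variable polynomial ring $R[x,y]$, and then to invoke the multiplicativity axiom of the valuation.

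First I would dispose of the degenerate case $a=b$: here $p(a)-p(b)=0$, so $v(p(a)-p(b))=\infty\geq v(a-b)$ and there is nothing to prove. So assume $a\neq b$. Writing $p(x)=\sum_{i=0}^{d}c_i x^i$ with $c_i\in R$, the telescoping identity $x^i-y^i=(x-y)\sum_{j=0}^{i-1}x^{j}y^{\,i-1-j}$ (valid over any commutative ring, with the right-hand side equal to $0$ when $i=0$) yields
\[
p(x)-p(y) \;=\; (x-y)\,h(x,y), \qquad h(x,y):=\sum_{i=1}^{d}c_i\sum_{j=0}^{i-1}x^{j}y^{\,i-1-j}\in R[x,y].
\]
Specialising $x\mapsto a$ and $y\mapsto b$ gives $p(a)-p(b)=(a-b)\cdot h(a,b)$ with $h(a,b)\in R$.

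Finally I would apply axiom~(3) of a valuation together with the fact that $v$ takes values in $\N\cup\{\infty\}$: $v(p(a)-p(b))=v(a-b)+v(h(a,b))\geq v(a-b)$, since $v(h(a,b))\geq 0$. This completes the argument; the case $h(a,b)=0$ is consistent, as then both sides involve $\infty$. The only point requiring a moment's care — and it is not really an obstacle — is to carry out the factorisation over the commutative ring $R$ itself, so that the cofactor $h$ has coefficients in $R$ and $h(a,b)$ makes sense as an element of $R$; this is immediate from the telescoping identity. I therefore expect no substantive difficulty.
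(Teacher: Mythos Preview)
Your proof is correct and follows essentially the same approach as the paper: factor $p(a)-p(b)=(a-b)\cdot q(a,b)$ for some $q\in R[x,y]$ and apply multiplicativity of the valuation together with nonnegativity. Your version is just more explicit about the telescoping identity and the degenerate case $a=b$.
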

\noindent
Since $J_{\vec{f}}$ is a polynomial in $R[Y]$,
by \cref{claim:valuation and polynomials} componentwise,
it follows that 
\[ v(J_{\vec{f}}(\vec{a}_{n+1}) - J_{\vec{f}}(\vec{a}_n)) \geq v(\vec{a}_{n+1} - \vec{a}_n) > 0\] and
hence, since $J_{\vec{f}}(\vec{a}_n)$ is a unit,
by the observation right at the start of the proof,
$J_{\vec{f}}(\vec{a}_{n+1})$ is a unit as well.
This establishes Item~3 and the induction is complete.
Note also that this implies that the sequence $(\vec{a}_n)_{n=0}^{\infty}$ is well-defined, and by construction, for all~$n$, $\vec{a}_n \in R^\ell$.

By Item~2 we have $v(\vec{a}_m - \vec{a}_n) \geq 2^n$ for all $m > n$.
Hence the sequence $(\vec{a}_n)_{n=0}^\infty$ is Cauchy and so,
by completeness of $R$, it converges to a limit $\vec{\alpha} \in R^\ell$.
But by Item~1 and continuity of $\vec{f}$ we have
$\vec{f}(\vec{\alpha})=0$.  Furthermore, by continuity of $v$ we have
$v(\vec{\alpha}-\vec{a}_n) \geq 2^n$.  
  \end{proof}

\begin{remark}
  \label{rem:CONVENIENT}
  One often finds formulations of Hensel's Lemma that require the ring
  $R$ to be a \emph{valuation ring}, i.e., the subring of a valued
  field comprising all elements having non-negative valuation.  Examples
  of valuation rings are the $p$-adic integers $\mathbb{Z}_p$ (which is the
  valuation ring of the field $\mathbb{Q}_p$) and the ring $F[[x]]$ of
  univariate power series over a field $F$ (which is the valuation ring of the
  field $F((x))$ of Laurent series).  However, as we note in \Cref{sec:coeffAlg},
  Theorem~\ref{thm:hensel} applies to the ring of multivariate power
  series $\mathbb{Z}[[X]]$, which is not a valuation ring.
\end{remark}

\section{Computational Complexity of \texorpdfstring{{\CoeffAlg}}{CoeffAlg}}
\label{sec:coeffAlg}
Recall that an instance  of the {\CoeffAlg} problem comprises 
a proper polynomial equation system  computing a formal power series~$A$, a monomial $X^{\vec{v}}$, and  a prime $p$. 
The problem asks to compute the residue modulo~$p$ of the coefficient of~$X^{\vec{v}}$ in $A$.  
In analysing the complexity of the problem we assume that all integers are represented in binary.
In this section we prove the following theorem:

\thmCoeffAlg*

The proof is spread across several subsections.
Throughout we work with a tuple~$X=(x_1,\ldots,x_k)$ of commuting indeterminates and a proper polynomial equation system $\mathcal{S}$ over 
 the indeterminates~$X$ and variables~$Y=(y_1,\ldots,y_\ell)$---forming part of the input to the $\mathsf{CoeffAlg}$ problem.
We let~$s$ denote the length of the description of~$\mathcal{S}$.

Let $\vec{A}=(A_1,\ldots,A_\ell) \in \PowS{\Z}{X}^\ell$ be the unique quasiregular solution of~$\mathcal{S}$ in formal power series. 
Roughly speaking our reduction of {\CoeffAlg} to {\CoeffSLP} 
involves constructing a 
sufficiently close polynomial approximation of $A_1$ that admits an efficiently computable representation as a circuit. 
To do this we will apply Hensel's Lemma to the power series ring~$\Z[[X]]$.

\subsection{A Valuation on the Ring of Power Series}
In the following we denote by $R$ the ring of formal power series $\PowS{\Z}{X}$ and by $R_0$ the subring $\Z[X]$ of polynomials.
Recall that $\mathfrak{m}$ is the ideal in $R$ generated by 
$x_1,\ldots,x_k$.  Given~$g \in \mathfrak{m}$, the element
$1-g$ is a unit in $R$, having inverse $\sum_{n=0}^\infty g^n$.
(The latter sum converges as a power series
by virtue of the fact that $g$ is in $\mathfrak m$.)
Indeed, the units in $R$ are precisely those elements~$f$ such that $\pm f$ has the above form.  An element of $R$ is said to be \emph{rational} if it has the form $fg^{-1}$, where $f,g \in R_0$ and $g$ is a unit.%
\footnote{Since $\Z$ is a so-called Fatou ring~\cite[Chapter 7]{BerstelR},
the rational elements of $R$ according to the above definition are precisely those lying in $\Q(X)\cap R$, where $\Q(X)$ is the field of rational functions over indeterminates $X$.}

We define a map $\ord:R \to \N \cup\{\infty\}$  
by $\ord(0):=\infty$ and otherwise
\[\ord \left( \sum_{\vec{v}\in \N^k} a_{\vec{v}} X^{\vec{v}}\right) :=\min \{|\vec{v}| : a_{\vec{v}}\neq 0\} \, . \]
It is easily seen that the map $\ord$
is  a valuation:
for all $f,g \in R$
\[
\ord(f+g) \geq \min\{\ord(f),\ord(g)\},
\]
and 
\[ 
\ord(fg) = \ord(f)+\ord(g).
\]
Since elements with a strictly positive valuation are precisely the elements without a constant term,
we recover the ideal $\mathfrak{m}$ as the set $\{ f \in R : \ord(f) > 0 \}$.

It is standard that $R$ is a complete ring  with respect to this valuation; see~\cite[Section 7.1]{Eisenbud95}.
For $\ell \in \N$, we extend the map $\ord$ to vectors $R^\ell$ of power series by,
for all $\vec{B} \in R^\ell$,
\[\ord(\vec{B}) := \min \{
 \ord(B_i): 1 \leq i \leq \ell\}.\]

 \subsection{Approximation by Rational Series}
 \label{sec:rational}
For our fixed polynomial system $\mathcal{S}$, given by the equations 
\begin{align*}
y_i=P_i	\qquad (i\in \{1,\ldots,\ell\})
\end{align*}
 we write $f_i:= y_i - P_i\in R_0[Y]$. We denote by $\vec{f}$ the  vector of polynomials 
\begin{equation}\label{eq:system_f}
	\vec{f} = (f_1,\dots,f_\ell) \, .
\end{equation}
Clearly $\vec{B} \in R^\ell$ is a solution of $\mathcal S$ if and only 
$\vec{f}(\vec{B})=0$.  Since $\mathcal S$ is  proper 
it follows that $\vec{f}$ has a unique quasiregular zero, namely the  unique quasiregular solution $\vec{A}$ of~$S$.

 Our goal is to apply~\Cref{thm:hensel}---the multivariate Hensel's Lemma---to define a sequence of approximations of the unique quasiregular zero~$\vec{A}$ of $\vec{f}$.  We have already established that $R$ is complete with respect to the valuation $\mathrm{ord}$.  To apply Hensel's Lemma it remains to find a suitable initial value to start the iteration, namely,
 $\vec{a} \in R^\ell$ such that 
 $J_{\vec{f}}(\vec{a})$ is a unit in $R$ and $\ord(\vec{f}(\vec{a})) > 0$.
 The following two claims show that any $\vec{a} \in \mathfrak{m}^\ell$ will do.
Recall that $1+\mathfrak{m}$ denotes the set of elements of the form $1 + f$ with $f \in \mathfrak m$.
%
 \begin{restatable}{claim}{propmultivarhenseljacobian} \label{prop:multivar_hensel_jacobian}
$J_\mathbf{f}(\vec{a})\in 1+\mathfrak{m}$
for all  $\vec{a}\in \mathfrak{m}^{\ell}$.
\end{restatable}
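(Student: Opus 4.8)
The plan is to reduce the claim to a computation modulo the ideal $\mathfrak{m}$, using that $R/\mathfrak{m}\cong\Z$ and that $D\vec{f}$ is the identity matrix up to terms that vanish when the variables $Y$ are evaluated in $\mathfrak{m}^\ell$. First I would compute the derivative matrix explicitly. Since $f_i = y_i - P_i$, we have $\partial f_i/\partial y_j = \delta_{ij} - \partial P_i/\partial y_j$, so that $D\vec{f} = I - DP$ where $DP := \left(\partial P_i/\partial y_j\right)_{1\le i,j\le\ell}\in R_0[Y]^{\ell\times\ell}$.

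The step where properness is used is the following: for every $\vec{a}\in\mathfrak{m}^\ell$ and all $i,j$, the element $(\partial P_i/\partial y_j)(\vec{a})$ lies in $\mathfrak{m}$. To see this, write $P_i = \sum_{\vec{v}} a^{(i)}_{\vec{v}}\,Y^{\vec{v}}$ with $a^{(i)}_{\vec{v}}\in R_0$. Differentiating, $\partial P_i/\partial y_j = \sum_{\vec{v}} v_j\, a^{(i)}_{\vec{v}}\, Y^{\vec{v}-\vec{e}_j}$, and the only monomial of this polynomial with $Y$-degree $0$ is the one coming from $\vec{v} = \vec{e}_j$, whose coefficient is $a^{(i)}_{\vec{e}_j}$; properness of $\mathcal{S}$ says exactly that $a^{(i)}_{\vec{e}_j}\in\mathfrak{m}\cap R_0$, since $|\vec{e}_j|=1$. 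Every other monomial $Y^{\vec{w}}$ occurring in $\partial P_i/\partial y_j$ has $|\vec{w}|\ge 1$, so $Y^{\vec{w}}(\vec{a})$ is a product of at least one element of $\mathfrak{m}$ and hence lies in $\mathfrak{m}$; multiplying by the corresponding coefficient in $R_0$ keeps it in $\mathfrak{m}$ because $\mathfrak{m}$ is an ideal. Summing over all monomials gives $(\partial P_i/\partial y_j)(\vec{a})\in\mathfrak{m}$, i.e. $DP(\vec{a})\in\mathfrak{m}^{\ell\times\ell}$.

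To conclude, I would apply the quotient homomorphism $\pi:R\twoheadrightarrow R/\mathfrak{m}\cong\Z$ entrywise. By the previous paragraph $\pi$ sends $D\vec{f}(\vec{a}) = I - DP(\vec{a})$ to the identity matrix over $\Z$, and since determinants commute with ring homomorphisms, $\pi(J_{\vec{f}}(\vec{a})) = \det\pi(D\vec{f}(\vec{a})) = \det I = 1$. Hence $J_{\vec{f}}(\vec{a}) - 1\in\mathfrak{m}$, which is the assertion $J_{\vec{f}}(\vec{a})\in 1+\mathfrak{m}$. (Alternatively, one can argue directly from the Leibniz expansion of $\det(I - DP(\vec{a}))$: the identity permutation contributes $\prod_i(1 - (DP(\vec a))_{ii})$, which is $1$ plus a sum of terms each in $\mathfrak{m}$, while every non-identity permutation contributes a product containing an off-diagonal entry of $DP(\vec{a})$ and therefore lies in $\mathfrak{m}$.) There is essentially no serious obstacle here; the only point that needs care is the bookkeeping in the middle step, namely identifying the $Y$-constant term of $\partial P_i/\partial y_j$ with the coefficient of the degree-one monomial $y_j$ in $P_i$, as that is precisely the clause of the properness hypothesis being invoked.
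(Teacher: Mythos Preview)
Your proof is correct and follows essentially the same approach as the paper: both show that $D\vec{f}(\vec{a})$ is entrywise congruent to the identity matrix modulo $\mathfrak{m}$ (using properness to control the $Y$-constant term of each $\partial P_i/\partial y_j$), and then conclude that the determinant lies in $1+\mathfrak{m}$. Your presentation via $D\vec{f}=I-DP$ and the quotient homomorphism $\pi:R\twoheadrightarrow R/\mathfrak{m}$ is a slightly cleaner packaging of the same argument.
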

We give a sketch proof of the claim;
see \Cref{sec:app-proofs-sec-iii} for the full proof.
Assume that $\vec{a} \in \mathfrak{m}^\ell$.  Using the fact that 
$\mathcal S$ is proper, one can show that the diagonal entries of
the derivative matrix $D\vec{f}(\vec{a})$ all lie in $1+\mathfrak{m}$
whereas the off-diagonal entries all lie in~$\mathfrak{m}$.
In other words, the matrix $D\vec{f}(\vec{a})$ is elementwise congruent to the identity matrix modulo $\mathfrak{m}$.  Since the determinant of a matrix is a polynomial function of its entries, $J_{\bf f}(\vec{a}) \in 1+\mathfrak{m}$.

The fact that $\ord(\vec{f}(\vec{a}))>0$ for all
$\vec{a}\in\mathfrak{m}^\ell$ follows immediately from
\begin{restatable}{claim}{propmultivarhenselnorm} \label{prop:multivar_hensel_norm}
  For all $\vec{a}\in\mathfrak{m}^\ell$
  we have $\vec{f}(\vec{a}) \in \mathfrak{m}^\ell$.
\end{restatable}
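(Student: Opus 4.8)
The plan is a direct, termwise verification that uses only the definition of properness of $\mathcal{S}$ together with the fact that $\mathfrak{m}$ is an ideal of $R$ and that $\ord$ is multiplicative. Fix $\vec{a}=(a_1,\dots,a_\ell)\in\mathfrak{m}^\ell$. Since $f_i = y_i - P_i$ and $\mathfrak{m}$ is closed under subtraction, it suffices to show that $P_i(\vec{a})\in\mathfrak{m}$ for every $i$, because $a_i\in\mathfrak{m}$ by assumption; this will give $f_i(\vec{a}) = a_i - P_i(\vec{a})\in\mathfrak{m}$, and hence $\vec{f}(\vec{a})\in\mathfrak{m}^\ell$.

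Write $P_i = \sum_{\vec{v}} a_{\vec{v}} Y^{\vec{v}}$ with each coefficient $a_{\vec{v}}\in R_0 = \Z[X]$, so that $P_i(\vec{a}) = \sum_{\vec{v}} a_{\vec{v}}\cdot \vec{a}^{\vec{v}}$, where $\vec{a}^{\vec{v}} := a_1^{v_1}\cdots a_\ell^{v_\ell}$. I would split this sum according to the total degree $|\vec{v}|$ and show that every term lies in $\mathfrak{m}$. For a monomial with $|\vec{v}|\leq 1$, properness of $\mathcal{S}$ gives $a_{\vec{v}}\in\mathfrak{m}\cap R_0$, and since $\mathfrak{m}$ is an ideal the whole term $a_{\vec{v}}\cdot\vec{a}^{\vec{v}}$ lies in $\mathfrak{m}$. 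For a monomial with $|\vec{v}|\geq 2$, one argues instead from multiplicativity of the valuation: $\ord(\vec{a}^{\vec{v}}) = \sum_{j} v_j\,\ord(a_j) \geq |\vec{v}| \geq 2 > 0$, using that each $a_j\in\mathfrak{m}$ has $\ord(a_j)\geq 1$; hence $\vec{a}^{\vec{v}}\in\mathfrak{m}$ and again $a_{\vec{v}}\cdot\vec{a}^{\vec{v}}\in\mathfrak{m}$. Since $\mathfrak{m}$ is closed under addition, summing over all $\vec{v}$ yields $P_i(\vec{a})\in\mathfrak{m}$, as required.

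There is no genuine obstacle here; the statement follows by routine bookkeeping once the sum defining $P_i(\vec{a})$ is partitioned by monomial degree, the low-degree part being controlled by properness and the high-degree part by the multiplicativity of $\ord$. The only point that warrants a moment's attention is that the constant term of $P_i$ (the coefficient of the empty monomial, $|\vec{v}|=0$) is itself covered by the properness hypothesis, so that $P_i(\vec{a})$ has no constant term and thus genuinely lies in $\mathfrak{m}$.
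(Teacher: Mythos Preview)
Your proof is correct and follows essentially the same approach as the paper: both argue that $P_i(\vec a)\in\mathfrak m$ by observing that properness forces the constant coefficient of $P_i$ into $\mathfrak m$, while every non-constant monomial $\vec a^{\vec v}$ already lies in $\mathfrak m$ because each $a_j\in\mathfrak m$. The paper's writeup is terser (it only singles out the coefficient of $Y^{\vec 0}$ and leaves the rest implicit), whereas you spell out the case split explicitly; your split at $|\vec v|\le 1$ versus $|\vec v|\ge 2$ is slightly redundant since already for $|\vec v|\ge 1$ the monomial $\vec a^{\vec v}$ lies in $\mathfrak m$, but this is harmless.
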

The claim holds by observing that, since $\mathcal S$ is proper, the coefficient of $Y^{\vec{0}}$ in $f_i \in R_0[Y]$ lies in 
$\mathfrak{m}$. Again, see \Cref{sec:app-proofs-sec-iii} for details.




Claims~\ref{prop:multivar_hensel_jacobian} and~\ref{prop:multivar_hensel_norm} imply that choosing $\vec{a}$ to be $\vec{0}\in R^{\ell}$ satisfies the assumptions
of~\Cref{thm:hensel}.  Now define a sequence $(\vec{a}_n)_{n=0}^\infty$ by
${\bf a}_0 = {\bf 0}$ and
 \begin{eqnarray}
\nonumber
\vec{a}_{n+1} &=& \vec{a}_n - (D{\bf f})({\bf a}_n)^{-1} {\bf f}({\bf a}_n) \\
 \label{eq:update}
&=& {\bf a}_n -  \frac{1}{J_{\bf f}({\bf a}_{n})} \, 
 \mbox{Adj}(D{\bf f}({\bf a}_{n}))\, {\bf f}({\bf a}_n),
\end{eqnarray}
 where the entries of $D\vec{f}$, the derivative matrix of~$\vec{f}$, are in~$R_0[Y]$,
 $J_{\bf f}$ is the determinant of $D{\bf f}$, as defined in Equation~\eqref{eq-der-mat-f}. Here, 
 $\mbox{Adj}(D{\bf f}({\bf a}_{n}))$ is the adjugate matrix whose entries are just the cofactors of~$D{\bf f}({\bf a}_{n})$.
Applying~\Cref{thm:hensel} to ${\bf f}$,
for all $n \in \mathbb{N}$ we have 
\begin{gather}
	\ord(\vec{A} - \vec{a}_{n}) \geq 2^n \, .
\label{eq:convergence}
\end{gather}
The sequence $(\vec{a}_n)_{n=0}^\infty$ is the desired approximating sequence of $\vec{A}$. 

Note that by \cref{thm:hensel}(3), the Jacobian $J_{\bf f}({\bf a}_{n})$ is a 
unit in $R$ for all $n\in \mathbb{N}$.
Since the rational elements of $R$ (i.e., the rational power series) form a ring,
it follows by a straightforward induction on $n$, using the recurrence~\eqref{eq:update}, that each ${\bf a}_{n}$ is a vector of rational elements in $R$. Consequently, for all $i \in \{1,\ldots,\ell\}$ and $n \in \mathbb{N}$,
the component $\vec{a}_{n,i}$ can be written
for some polynomials~$g_{n,i}, h_{n,i}\in R_0$ as
\begin{align}
    \label{eq:a n i}
    {\bf a}_{n,i} =\frac{g_{n,i}}{1- h_{n,i}}, \quad \text{with } \ord(h_{n,i})\geq 1.
\end{align}


\begin{example}
\label{ex:iteration-quasiregular}
Consider again the polynomial system $y=x+x^2-2xy+y^2$, introduced in Example~\ref{ex:proper-system},
with the unique quasiregular solution~$x$. 
Write $f(y) = y -x-x^2+2xy-y^2$ for the polynomial defining the system
and $D_f(y) = 1 +2x - 2y$ for its derivative with respect to~$y$.
Then following the iterative procedure given by Hensel's Lemma,
for all $n\in \N$, we have 
\begin{equation}
\label{eq:example-rec}
    a_{n+1} = a_n - \frac{f(a_n)}{D_f(a_n)} = \frac{ - a_n^2+x^2+x}{1+2x-2a_n} \, .
\end{equation}  
 Starting the iteration with~$a_0 = 0$, we  prove by induction  that the term $a_n$, for $n\geq 1$, has the following closed-form expression:
 \begin{equation}
 \label{eq:example-closed-form}
     a_n= x- \frac{x^{2^n}}{(x+1)^{2^n}-x^{2^n}} \, .
 \end{equation}
The inductive proof is detailed in~\Cref{sec:app-example}.
We note that  $\ord(a_n-x)=2^n$, matching the bound in~\Cref{thm:hensel}. 


\end{example}

\subsection{Approximation by Polynomials}
\label{sec:polynomial}


In the previous section,
we have shown how to approximate the quasiregular solution $A$ by rational power series with good convergece speed, see~\eqref{eq:convergence}.
In this section we achieve the same convergence  by approximating \emph{even by polynomials}.
This will allow us to implement the polynomial approximants with algebraic circuits in \cref{sec:circuits}.
To this end, we define a sequence $\widetilde{\bf a}_{n} \in R_0^\ell$
of vectors of polynomials that satisfies the following convergence bound for all $n$
(see Equation~\eqref{eq:convergence}):
\begin{gather}
	\ord(A_i - \widetilde{\vec{a}}_{n,i}) \geq 2^n
\label{eq:convergence2}
\end{gather}
where $\widetilde{\vec{a}}_{n,i}$ is the $i$-th component of~$\widetilde{\vec{a}}$.
The definition of $\widetilde{\bf a}_{n,i}$ 
uses a classical technique of division elimination by Strassen~\cite{strassen-division}. While Strassen
obtains a polynomial written as a rational function,
we are computing a polynomial approximation of the power series
defined by a rational function.

%
Given the representation of $\vec{a}_{n,i}$ in Equation~(\ref{eq:a n i}),
we have
\[ {\bf a}_{n,i}  = 
g_{n,i}\sum_{j=0}^\infty h_{n,i}^j \, .\]
A polynomial approximant~$\widetilde{\bf a}_{n,i}$ can then be obtained by  truncating the above  infinite sum to the first~$2^n$ terms.  We thus write
\begin{align}
\label{eq-approx-def}
    \widetilde{\bf a}_{n,i}:= 
g_{n,i} \sum_{j=0}^{2^n-1} h_{n,i}^j \, . 
\end{align}
Then we have
\begin{eqnarray*}
    \ord ({\bf a}_{n,i} - \widetilde{\bf a}_{n,i})
    &=& \ord(g_{n,i} \sum_{j=2^n}^{\infty} h_{n,i}^j) \\
    &=&\ord(g_{n,i}) + \ord( \sum_{j=2^n}^{\infty} h_{n,i}^j ) \\
    &\geq & 2^n \quad \text{(since $\ord(h_{n,i})\geq 1$)}
\end{eqnarray*}
The desired bound in Equation~\eqref{eq:convergence2} now follows from~\eqref{eq:convergence}
and the strong triangle inequality (Property 2 of valuations in~\Cref{sec:hensellemmam}):
\begin{eqnarray*}
	\ord(A_i - \widetilde{\vec{a}}_{n,i}) &=& \ord(A_i - {\bf a}_{n,i} + {\bf a}_{n,i} - \widetilde{\vec{a}}_{n,i}) \\ &\geq & \min\{\ord(A_i - {\bf a}_{n,i}), \ord({\bf a}_{n,i} - \widetilde{\vec{a}}_{n,i}) \} \\
	&\geq & 2^n.
\end{eqnarray*}

%


\subsection{Approximation by Circuits}
\label{sec:circuits}
In this subsection we show that for all  $n\in \N$ 
the polynomial approximant~$\widetilde{\bf a}_{n}$ of~$\vec{A}$ can be represented by a circuit that can moreover be computed in time polynomial in $n$ and the size $s$ of the  equation system~$\mathcal{S}$.
We construct the circuit in two steps. 
First 
we show how to construct circuits for the two polynomials $g_{n,i},h_{n,i}\in R_0$
in Equation~(\ref{eq:a n i}) representing ${\bf a}_{n,i}$.

\begin{restatable}{claim}{propnewtonratfunction} \label{prop:newton_rat_function}
    There is an algorithm that, given an equation system~$\mathcal S$, 
    $i \in \{1,\ldots,\ell\}$, and $n \in \mathbb{N}$, 
    produces  circuits $C_{n,i}, D_{n,i}$ representing 
    polynomials $g_{n,i},$ $h_{n,i} \in R_0$ respectively,
    as in Equation~(\ref{eq:a n i}).
    The algorithm runs in $\poly(s,n)$ time and hence $C_{n,i}, D_{n,i}$ have size $\poly(s,n)$.
\end{restatable}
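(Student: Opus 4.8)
The plan is to construct, by induction on $n$, a single arithmetic circuit with $\ell+1$ output gates computing polynomials $p_{n,1},\dots,p_{n,\ell},q_n\in R_0$ with $\mathbf a_{n,i}=p_{n,i}/q_n$ for all $i$, while maintaining the invariant $q_n\in 1+\mathfrak m$ (that is, $q_n$ has constant term $1$). The base case $n=0$ is immediate since $\mathbf a_0=\mathbf 0$: take $p_{0,i}=0$ and $q_0=1$. Given such a circuit for the index $n$, I would output for each $i$ the circuit $C_{n,i}$ whose output gate computes $p_{n,i}$, and the circuit $D_{n,i}$ obtained by appending a single subtraction gate computing $1-q_n$; setting $g_{n,i}:=p_{n,i}$ and $h_{n,i}:=1-q_n$ then gives $\mathbf a_{n,i}=g_{n,i}/(1-h_{n,i})$ with $\ord(h_{n,i})\ge 1$, which is exactly the representation \eqref{eq:a n i}, and the sizes of $C_{n,i},D_{n,i}$ match that of the maintained circuit up to an additive constant.

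For the inductive step I would use the fact that the Newton update \eqref{eq:update}, $\mathbf a_{n+1}=\mathbf a_n-J_{\mathbf f}(\mathbf a_n)^{-1}\mathrm{Adj}(D\mathbf f(\mathbf a_n))\,\mathbf f(\mathbf a_n)$, is a \emph{fixed} rational map of $\mathbf a_n$ depending only on $\mathcal S$, realised by a division-free ``gadget'' of size $\poly(s)$, independent of $n$, that sits on top of the current output gates $p_{n,1},\dots,p_{n,\ell},q_n$ and the indeterminates. Substituting $y_i\mapsto p_{n,i}/q_n$ into each $f_j=y_j-P_j$ and into each entry $\partial f_i/\partial y_k=\delta_{ik}-\partial P_i/\partial y_k$, and clearing denominators by multiplying through by $q_n^{d}$ where $d$ bounds the $Y$-degrees occurring in $\mathcal S$ (so $\log d=O(s)$, and all required powers of $q_n$ and of the $p_{n,i}$ are formed by repeated squaring, costing $O(s)$ multiplications each), yields a vector $\mathbf F_n\in R_0^\ell$ and a matrix $M_n\in R_0^{\ell\times\ell}$ with $\mathbf f(\mathbf a_n)=q_n^{-d}\mathbf F_n$ and $D\mathbf f(\mathbf a_n)=q_n^{-d}M_n$, all computed by a $\poly(s)$-size sub-gadget. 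By \cref{prop:det-circuit} I can then build $\poly(\ell)$-size circuits for $\det M_n$ and for the entries of $\mathrm{Adj}(M_n)$; since the powers of $q_n$ cancel, $D\mathbf f(\mathbf a_n)^{-1}\mathbf f(\mathbf a_n)=M_n^{-1}\mathbf F_n=\mathrm{Adj}(M_n)\mathbf F_n/\det M_n$, so
\[ \mathbf a_{n+1,i}=\frac{p_{n,i}}{q_n}-\frac{[\mathrm{Adj}(M_n)\mathbf F_n]_i}{\det M_n}=\frac{p_{n,i}\det M_n-[\mathrm{Adj}(M_n)\mathbf F_n]_i\,q_n}{q_n\,\det M_n}, \]
and I would set $q_{n+1}:=q_n\det M_n$ and $p_{n+1,i}:=p_{n,i}\det M_n-[\mathrm{Adj}(M_n)\mathbf F_n]_i\,q_n$, obtained from the sub-gadget outputs by $O(\ell^2)$ further gates.

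Two things then need to be checked. For the invariant: from \eqref{eq:convergence} one has $\ord(\mathbf a_n)\ge\min\{\ord(\mathbf A),\ord(\mathbf A-\mathbf a_n)\}\ge 1$, so $\mathbf a_n\in\mathfrak m^\ell$; then $J_{\mathbf f}(\mathbf a_n)\in 1+\mathfrak m$ by \cref{prop:multivar_hensel_jacobian}, and since $\det M_n=q_n^{d\ell}\,\det D\mathbf f(\mathbf a_n)=q_n^{d\ell}\,J_{\mathbf f}(\mathbf a_n)$ with $q_n\in 1+\mathfrak m$ and $1+\mathfrak m$ closed under products, we get $\det M_n\in 1+\mathfrak m$ and hence $q_{n+1}=q_n\det M_n\in 1+\mathfrak m$. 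For the size bound: each step leaves the previous circuit untouched and appends the gadget above, which has size $\poly(s)$ independent of $n$ and refers to the $\ell$ previous numerators and the previous denominator only as shared sub-DAGs; the circuit size therefore grows additively, reaching $\poly(s,n)$ after $n$ steps, and the whole construction runs in time $\poly(s,n)$. I expect the size accounting to be the main obstacle: it relies on (i) writing the Newton step as a single $n$-independent circuit gadget, where \cref{prop:det-circuit} is essential in order to avoid an $\ell!$-size expansion of the determinant and the adjugate; (ii) using repeated squaring so that possibly exponential $Y$-degrees in $\mathcal S$ cost only $O(s)$ gates; and (iii) exploiting sub-circuit sharing so that the per-step overhead is additive rather than multiplicative — getting any of these wrong leads to an exponential blow-up.
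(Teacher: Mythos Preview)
Your proposal is correct and follows essentially the same inductive scheme as the paper: build circuits for the numerator and denominator of each $\mathbf a_{n,i}$ by unrolling the Newton recurrence~\eqref{eq:update}, appeal to \cref{prop:det-circuit} for the Jacobian and adjugate, and maintain the invariant that denominators lie in $1+\mathfrak m$ via \cref{prop:multivar_hensel_jacobian}. The one noteworthy difference is that you carry a \emph{single} common denominator $q_n$ for all $\ell$ components, whereas the paper's proof keeps a separate $1-h_{n,i}$ per component and combines them through the fraction rules $\frac{A}{B}+\frac{C}{D}=\frac{AD+BC}{BD}$ and $\frac{A}{B}\cdot\frac{C}{D}=\frac{AC}{BD}$; your choice makes the invariant $q_{n+1}=q_n\det M_n\in 1+\mathfrak m$ immediate, while the paper has to trace the constant term through iterated products of per-component denominators. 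You are also more explicit than the paper about two size issues it leaves implicit: the use of repeated squaring to handle potentially exponential $Y$-degrees in $\mathcal S$, and the additive (not multiplicative) per-step growth that yields the $\poly(s,n)$ bound.
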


Next, we follow the procedure described in Section~\ref{sec:polynomial} to
construct a circuit for the approximants~$\widetilde{\bf a}_{n,i}$ obtained from $g_{n,i}$ and $h_{n,i}$.
Since we will need these approximants just for the first component $i = 1$,
we introduce this specialisation already in the next claim.

\begin{restatable}{claim}{proppolycircuit} \label{prop:poly_circuit}
There is an algorithm that, given an equation system~$\mathcal S$
and $n \in \mathbb{N}$,
produces circuit $E_n$ representing the polynomial~$\widetilde{\bf a}_{n,1}$, defined in Equation~\eqref{eq-approx-def} for $i = 1$.
The algorithm runs in $\poly(s,n)$ time, and $E_{n}$ has size $\poly(s,n)$.
\end{restatable}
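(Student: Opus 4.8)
The plan is to combine Claim~\ref{prop:newton_rat_function}, which already hands us $\poly(s,n)$-size circuits $C_{n,1}$ and $D_{n,1}$ for the polynomials $g_{n,1}$ and $h_{n,1}$, with the truncated-geometric-sum construction in Equation~\eqref{eq-approx-def}. Recall that $\widetilde{\vec a}_{n,1} = g_{n,1} \sum_{j=0}^{2^n-1} h_{n,1}^j$, so the only work left is to build a circuit for the polynomial $\sum_{j=0}^{2^n-1} h_{n,1}^j$ and then multiply it by $C_{n,1}$.

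First I would observe that the truncated geometric sum $\sum_{j=0}^{N-1} z^j$, for $N = 2^n$, admits a circuit of size $O(n)$ built by repeated squaring, exactly as in Figure~\ref{fig:circuit} and Proposition~\ref{prop:gpcircuit}: using the identity $\sum_{j=0}^{2^{m+1}-1} z^j = (1 + z^{2^m})\sum_{j=0}^{2^m-1} z^j$, one constructs in $n$ stages a circuit with a single input gate labelled $z$ that outputs $\sum_{j=0}^{2^n-1} z^j$, together with the auxiliary powers $z^{2^0}, z^{2^1}, \dots, z^{2^n}$ produced along the way by squaring. Call this circuit $G_n(z)$; it has $O(n)$ gates and is clearly computable in $\poly(n)$ time.

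Next I would substitute the circuit $D_{n,1}$ for the input gate $z$ of $G_n$: formally, take the disjoint union of $D_{n,1}$ and $G_n$, delete the input gate $z$ of $G_n$, and redirect every wire that fed from that gate to instead read from the output gate of $D_{n,1}$. The result is a circuit computing $\sum_{j=0}^{2^n-1} h_{n,1}^j$, of size $|D_{n,1}| + O(n) = \poly(s,n)$, producible in $\poly(s,n)$ time. Finally, I would take a single multiplication gate whose two inputs are the output gate of $C_{n,1}$ and the output gate of the circuit just constructed; the resulting circuit $E_n$ represents $g_{n,1}\sum_{j=0}^{2^n-1} h_{n,1}^j = \widetilde{\vec a}_{n,1}$ and has size $\poly(s,n)$, with the whole construction running in $\poly(s,n)$ time. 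Correctness of the represented polynomial is immediate from Equation~\eqref{eq-approx-def}; the size and time bounds follow from those in Claim~\ref{prop:newton_rat_function} together with the $O(n)$ overhead of $G_n$.

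There is no serious obstacle here: the real content was already absorbed into Claim~\ref{prop:newton_rat_function} (the division-elimination argument, i.e.\ presenting $\vec a_{n,i}$ as $g_{n,i}/(1-h_{n,i})$ with circuits for numerator and denominator), and the present claim is just the bookkeeping step of truncating the implied geometric series to $2^n$ terms via repeated squaring and composing the circuits. The one point that deserves an explicit sentence is that composing a $\poly(s,n)$-size circuit into each of the $O(n)$ occurrences of $z$ in $G_n$ does \emph{not} blow up the size, because circuits are DAGs and the copy of $D_{n,1}$ is shared (its output gate simply has larger fan-out); this is why we work with circuits rather than formulas.
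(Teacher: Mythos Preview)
Your proposal is correct and follows essentially the same approach as the paper: invoke Claim~\ref{prop:newton_rat_function} for the circuits $C_{n,1},D_{n,1}$, build the truncated geometric sum $\sum_{j=0}^{2^n-1} z^j$ by repeated squaring (which is exactly Proposition~\ref{prop:gpcircuit}/Figure~\ref{fig:circuit}), substitute $D_{n,1}$ for $z$, and multiply by $C_{n,1}$. The paper's proof is terser but identical in content; your extra remark about DAG sharing is a welcome clarification rather than a different idea.
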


\subsection{The upper and lower complexity bounds}
  
 We are now ready to conclude the proof of Theorem~\ref{thm:CoeffAlg}. By Equation~\eqref{eq:convergence2} and Claim~\ref{prop:poly_circuit}, for our fixed proper polynomial system~$\mathcal{S}$, for all~$n\in \N$, we compute a circuit~$E_{n}$ representing a polynomial approximant of the formal power series solution~$A$ computed by~$\mathcal{S}$.
 Indeed,  the polynomial represented by~$E_{n}$ agrees with~$A$   in all monomials with total degree at most~$2^n$. 
 Moreover, there is an algorithm computing~$E_n$ in   $\poly(s,n)$ time, and $E_{n}$ has size $\poly(s,n)$.
 It remains to observe that for the  input monomial~$X^{\vec{v}}$ and  prime~$p$,  the residue modulo~$p$ of the coefficient of ~$X^{\vec{v}}$ in~$A$ and $E_{\log(|\vec{v}|)+1}$ agree. This yields   a polynomial-time reduction from   {\CoeffAlg}  to \CoeffSLP. 

For the converse direction, note that {\CoeffSLP} is not trivially subsumed by {\CoeffAlg} as the latter requires the input system of equations to be proper.
The following claim, proven in \cref{sec:app-proofs-sec-iii}, outlines a straightforward reduction from {\CoeffSLP} to {\CoeffAlg}. 

\begin{restatable}{claim}{sharpPhardCoeffAlg} \label{claim:sharpPhardCoeffAlg}
There is a polynomial-time reduction from  {\CoeffSLP} to {\CoeffAlg}. 
\end{restatable}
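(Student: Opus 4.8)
The plan is to turn the input arithmetic circuit, gate by gate, into an acyclic system of quadratic equations, and then massage it into a \emph{proper} system by a linear-algebra step, working modulo the input prime $p$ throughout so that all integers stay small. First dispose of the degenerate case $\vec v=\vec 0$: there the answer is just $f(\vec 0)\bmod p$, obtained by evaluating the circuit over $\Z/p\Z$ with every indeterminate set to $0$. So assume $\vec v\neq\vec 0$; the point of this assumption is that $[X^{\vec v}]f=[X^{\vec v}]\tilde f$, where $\tilde f$ denotes the quasiregular part of $f$ (that is, $f$ with its constant term deleted), so it suffices to build a proper system whose first component agrees with $\tilde f$ on the monomial $X^{\vec v}$ modulo $p$.

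Topologically order the gates $1,\dots,m$ of the circuit so that gate $m$ is the output, introduce a variable $z_t$ for gate $t$, and precompute in polynomial time $c_t := \bigl(\text{value of gate }t\text{ at }X=\vec 0\bigr)\bmod p$ by evaluating the circuit over $\Z/p\Z$ at $X=\vec 0$. The intended meaning of $z_t$ is ``the quasiregular part of the polynomial computed at gate $t$, reduced modulo $p$''. Accordingly I would write $z_t=x_i$ for an input gate labelled $x_i$, $z_t=0$ for an input gate labelled by a constant, $z_t=z_a\pm z_b$ for an addition/subtraction gate $t=a\pm b$, and $z_t=z_az_b+c_bz_a+c_az_b$ for a multiplication gate $t=a\times b$ (this last being the identity $\widetilde{PQ}=\tilde P\tilde Q+Q(\vec 0)\tilde P+P(\vec 0)\tilde Q$ reduced mod $p$). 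Since the circuit is acyclic this system has a unique solution $\vec z^{\ast}$ over $\Z[[X]]$, and a routine induction on gates shows that each $z_t^{\ast}$ is quasiregular and is congruent, coefficientwise modulo $p$, to the quasiregular part of the polynomial computed at gate $t$; in particular $z_m^{\ast}\equiv\tilde f\pmod p$.

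This system is not proper: addition gates contribute linear monomials with coefficient $\pm1$, and multiplication gates contribute linear monomials with the constant coefficients $c_a,c_b$. To repair this, write the system as $\vec z=\vec D(X)+M\vec z+\vec H(\vec z)$, where $\vec D(X)$ collects the pure‑$X$ parts (each a single indeterminate or $0$, hence in $\mathfrak m$), $M\in\Z^{m\times m}$ is the matrix of linear coefficients, and $\vec H(\vec z)$ collects the degree‑$\geq2$ parts (each component being $z_az_b$ for a multiplication gate, or $0$). Acyclicity makes $M$ strictly lower triangular, hence nilpotent, so $I-M$ is invertible over $\Z$ with $(I-M)^{-1}=\sum_{j=0}^{m-1}M^{j}$; its entries may be exponentially large, so I would instead compute $\bar N:=(I-M)^{-1}\bmod p$ in polynomial time by Gaussian elimination over $\Z/p\Z$ and output the system
\[
 z_t=\sum_{s=1}^{m}\bar N_{ts}\bigl(D_s(X)+H_s(\vec z)\bigr)\qquad(t=1,\dots,m),
\]
reordered so that $z_m$ becomes the first variable. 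Every right‑hand side has its pure‑$X$ part in $\mathfrak m$ and all of its $Y$‑monomials of degree $\geq2$, so the system is proper; it has $O(m)$ monomials per equation with coefficients below $p$, hence polynomial size, and is produced in polynomial time. Feeding it, together with $\vec v$ and $p$, to {\CoeffAlg} completes the reduction.

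For correctness one must check that the unique quasiregular solution $\vec z^{\ast\ast}$ of the output proper system satisfies $z_m^{\ast\ast}\equiv\tilde f\pmod p$, whence $[X^{\vec v}]z_m^{\ast\ast}\equiv[X^{\vec v}]f\pmod p$. Reducing everything modulo $p$, both $\vec z^{\ast}\bmod p$ (using $\vec z^{\ast}=(I-M)^{-1}(\vec D+\vec H(\vec z^{\ast}))$ and $(I-M)^{-1}\equiv\bar N$) and $\vec z^{\ast\ast}\bmod p$ are quasiregular solutions, over $(\Z/p\Z)[[X]]$, of one and the same proper system, which has a unique quasiregular solution by the same fixed‑point argument as over $\Z$; hence they coincide. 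The main obstacle in carrying this out is precisely the properness requirement: eliminating the unavoidable linear terms forces the matrix‑inversion step, and making that step run in polynomial time is what forces the systematic reduction modulo $p$ together with the (easy but necessary) verification that this reduction preserves both properness and the mod‑$p$ solution. The gate‑by‑gate inductions and the product identity for quasiregular parts are routine by comparison.
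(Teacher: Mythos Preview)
Your argument is correct, but it takes a genuinely different route from the paper. The paper's reduction is purely syntactic and independent of the prime $p$: it first balances the circuit and makes the layers alternate between $\times$ and $\pm$, then replaces every input label $m\in\{0,1\}\cup X$ by $m\tilde x$ for a fixed indeterminate $\tilde x\in X$, and finally merges each multiplication gate with the two $\pm$-gates feeding it into a single equation of the form $Y_1=Y_4Y_6\pm Y_4Y_7\pm Y_5Y_6\pm Y_5Y_7$. Because every input now carries a factor~$\tilde x$, all degree-$\leq 1$ coefficients land in~$\mathfrak m$ automatically, and the system is proper; its strong solution is exactly $\tilde x^{\alpha}f$ for a computable~$\alpha$, so one queries the monomial $\tilde x^{\alpha}X^{\vec v}$.

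Your approach instead strips off constant terms via the identity $\widetilde{PQ}=\tilde P\tilde Q+Q(\vec 0)\tilde P+P(\vec 0)\tilde Q$, which inevitably produces linear $Y$-monomials with non-$\mathfrak m$ coefficients, and then kills those by inverting $I-M$. Since $(I-M)^{-1}$ may have exponentially large integer entries you are forced to work modulo~$p$ throughout and to justify that the quasiregular solution is preserved under reduction mod~$p$; this last point relies on uniqueness of quasiregular solutions for proper systems over $(\Z/p\Z)[[X]]$, which indeed follows by the same order-increasing argument as over~$\Z$. The trade-off is clear: the paper's construction avoids any linear algebra and yields a single proper system valid for all primes (only the queried monomial changes), at the cost of rebalancing the circuit; your construction avoids the circuit restructuring but produces a $p$-dependent system and needs the extra mod-$p$ correctness check. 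Both are polynomial-time and both are valid.
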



\section{Complexity of 
\texorpdfstring{{\EqAlg}}{EqAlg} and \texorpdfstring{{\FinAlg}}{FinAlg}}

In this section we establish complexity upper bounds for {\EqAlg} and {\FinAlg}.  
A key ingredient behind these results is a new 
singly exponential upper
bound on the degree of an annihilating polynomial of the strong solution of a proper polynomial system. 

There are two different algorithms that given a proper polynomial system~$\mathcal{S}$ compute an annihilating polynomial of its strong
solution.  The algorithm of 
Kuich and Salomaa~\cite[Section~16]{KuichSalomaa} 
is based on
multiresultants and polynomial factorisation, whereas the algorithm of
Panholzer~\cite{Panholzer05} uses Gr\"{o}bner bases.  
It is noted 
in~\cite[Example~9]{Panholzer05} that one cannot always obtain
an annihilating polynomial of the strong solution merely by performing   
quantifier elimination on the system $\mathcal S$ over the first-order theory of algebraically closed fields.
This is because the elimination ideal may be trivial and does not provide any information on the strong solution.
In such a case further
work is needed to isolate the strong solution, such as decomposing the
variety of all solutions of $\mathcal S$ into its irreducible
components as in~\cite{Panholzer05}.

In this section we take an alternative approach.  We observe that the strong solution is first-order definable in the ordered field of Puiseux series and we use standard quantifier
elimination results for real closed fields to compute an annihilating
polynomial of the strong solution.  In particular, we obtain a singly exponential (in the size of $\mathcal S$)
upper bound on the degree of an annihilating polynomial.  
Paper~\cite{Litow01} states a doubly exponential upper bound on the degree of an annihilating polynomial, based on an analysis of the Kuich-Salomaa algorithm.

\subsection{Real Closed Fields and Puiseux Series}
We work with the first-order theory of real closed fields over the
language of ordered rings with constant symbols for $0$ and $1$~\cite[Chapter 2]{basu2006algorithms}.
Recall that a model of this theory is an ordered field in which the
intermediate value theorem holds for all polynomials (such as the field
of real numbers or the field of Puiseux series with real
coefficients, defined below).  Atomic formulas have the form
$P(x_1,\ldots,x_n) \sim 0$, where $P \in \mathbb{Z}[x_1,\ldots,x_n]$
and ${\sim}\in \{<,=\}$.  We say that a formula $\Phi$ is \emph{built over a
set of polynomials $\mathcal{P}$} if every polynomial mentioned in
$\Phi$ lies in $\mathcal{P}$.  It well-known that the theory of real
closed fields admits quantifier elimination.  Here we will use the
following quantitative formulation of quantifier elimination, which is
a specialisation of~\cite[Theorem 14.16]{basu2006algorithms}.

\begin{theorem}
Let $\mathcal{P}$ be a set of $s$ polynomials, each of degree at most $d$ and having coefficients of bit-size at most $\tau$. 
Given tuples $X=(x_1,\ldots,x_{k_1})$, $Y=(y_1,\ldots,y_{k_2})$ and 
$Z=(z_1,\ldots,z_{k_3})$ of first-order variables, consider the formula 
\[ \Phi(X):= \exists Y \forall Z \, \Psi(X,Y,Z) \, ,\] where $\Psi(X,Y,Z)$
is a quantifier-free formula built over $\mathcal{P}$.  Then there
exists an equivalent\footnote{Equivalent over every real closed field.}
quantifier-free formula $\Phi'(X)$ that is built
over a set of polynomials having degree bounded by $d^{ck_2k_3}$ and
coefficients of bit size bounded by $\tau d^{ck_1k_2k_3}$ for some absolute constant $c$.
\label{thm:elim}
\end{theorem}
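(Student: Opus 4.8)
The plan is to obtain \Cref{thm:elim} as a direct instantiation of the general quantitative quantifier-elimination algorithm of Basu, Pollack and Roy, namely \cite[Theorem 14.16]{basu2006algorithms}, applied to a prenex formula with exactly two quantifier blocks.

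First I would recall the relevant form of \cite[Theorem 14.16]{basu2006algorithms}: for any prenex formula $(Q_1 Y^{[1]}) \cdots (Q_\omega Y^{[\omega]})\, F$ whose matrix $F$ is quantifier-free and built over a set of $s$ polynomials of degree at most $d$ with coefficients of bit-size at most $\tau$, in $k$ free variables and blocks $Y^{[i]}$ of size $k_i$, the algorithm outputs an equivalent quantifier-free formula built over polynomials whose degrees are bounded by $d^{O(k_1)\cdots O(k_\omega)}$ and whose coefficients have bit-size bounded by $\tau\, d^{O(k\, k_1 \cdots k_\omega)}$, the constants hidden in the $O(\cdot)$ notation being absolute.

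Then I would specialise: take $\omega = 2$, let the outer block be $Q_1 = \exists$ of size $k_2$, the inner block $Q_2 = \forall$ of size $k_3$, let the $k$ free variables be $X = (x_1,\ldots,x_{k_1})$, and let the matrix be $\Psi(X,Y,Z)$, which is built over $\mathcal{P}$ by hypothesis. Reading off the two bounds gives output polynomials of degree at most $d^{O(k_2)\,O(k_3)} = d^{O(k_2 k_3)}$ and coefficients of bit-size at most $\tau\, d^{O(k_1 k_2 k_3)}$; absorbing all the absolute $O(\cdot)$ constants into a single absolute constant $c$ yields exactly the claimed bounds $d^{c k_2 k_3}$ and $\tau\, d^{c k_1 k_2 k_3}$. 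If one prefers a derivation built only from single-block elimination, the same bounds follow by rewriting $\forall Z\,\Psi$ as $\neg \exists Z\, \neg\Psi$ and performing two successive single-block eliminations: first of the $k_3$-variable block $Z$, producing polynomials of degree $d^{O(k_3)}$ with coefficients of bit-size $\tau\, d^{O(k_3(k_1+k_2))}$, then of the $k_2$-variable block $Y$, compounding the degree to $(d^{O(k_3)})^{O(k_2)} = d^{O(k_2 k_3)}$ and the coefficient bit-size to $\tau\, d^{O(k_1 k_2 k_3)}$; the intervening Boolean negations leave the underlying set of polynomials unchanged.

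I do not expect a genuine mathematical obstacle: the content of the statement is already contained in \cite[Theorem 14.16]{basu2006algorithms}. The only point requiring care is the quantitative bookkeeping---verifying that the degrees and coefficient bit-sizes compound in the stated way through the two block eliminations, and that the several $O(\cdot)$ constants can be consolidated into one absolute constant $c$ appearing simultaneously in the degree exponent $c k_2 k_3$ and the coefficient exponent $c k_1 k_2 k_3$.
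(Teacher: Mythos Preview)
Your proposal is correct and matches the paper's approach exactly: the paper does not prove this theorem but simply states it as ``a specialisation of~\cite[Theorem 14.16]{basu2006algorithms}'', which is precisely the derivation you outline. Your bookkeeping of the degree and bit-size bounds through the two quantifier blocks is the intended content, and there is nothing further to add.
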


Let $F$ be a field and
$X=(x_1,\ldots,x_k)$ a tuple of commuting indeterminates.
A \emph{Puiseux series} with coefficients in $F$ and variables $X$ is a formal series 
\begin{gather}
f:=\sum_{\vec{\alpha} \in \mathbb{Q}^k} c_{\vec{\alpha}} X^{\vec{\alpha}}
\label{eq:puiseux}
\end{gather}
whose \emph{support}
$S:=\{ \vec{\alpha}\in\Q^k : c_{\vec{\alpha}} \neq 0\}$ is well-ordered with respect to the lexicographic order on
$\mathbb{Q}^k$ and also satisfies $S\subseteq \frac{1}{q} \mathbb{Z}$ for some positive integer $q$.
The collection of Puiseux series over $F$ forms a field 
$F{\{\!\{ X \}\!\}}$ with the obvious definitions of sum and product.  (Note that the product is well-defined thanks to the well-foundedness of the support.)  In case $F$ is an ordered field we can lift the order on $F$ to 
$F{\{\!\{ X \}\!\}}$ by declaring that a non-zero series $f$ as in~\eqref{eq:puiseux} is positive just in case $c_{\vec{\alpha}_0} > 0$, where
$\vec{\alpha}_0$ is the least element of the support of $f$
(w.r.t.~the lexicographic order on $\Q^k$).
We then declare $f < g$ just in case $g - f$ is positive.
If $F$ is a real closed field then 
$F{\{\!\{ X \}\!\}}$ is a real closed field under the above order~\cite[Theorem 2.91]{basu2006algorithms}.\footnote{This result is usually stated in the case of univariate Puiseux series, but the multivariate version follows by induction, since $ F{\{\!\{X\}\!\}}  :=
F{\{\!\{ x_1 \}\!\}}
{\{\!\{ x_2 \}\!\}}
\cdots 
{\{\!\{ x_k \}\!\}}$.}

\subsection{The Strong Solution is Algebraic}
For the rest of this section let $X=(x_1,\ldots,x_k)$ be tuple of commuting indeterminates.
Consider a
proper polynomial system $\mathcal S$ over 
a set of variables $Y=(y_1,\ldots,y_\ell)$, given by equations
\begin{gather}
y_i=P_i \qquad (i=1,\ldots,\ell) 
  \label{eq:sys}
\end{gather}
where $P_i \in \Z[X][Y]$ for all $i\in\{1,\ldots,\ell\}$.
  A solution of $\mathcal S$ in Puiseux series is a tuple
  $(A_1,\ldots,A_\ell) \in \mathbb{R}{\{\!\{X\}\!\}}^\ell$ such
  that evaluating each polynomial $P_i$ at $(A_1,\ldots,A_\ell)$ yields
  an identity $A_i=P_i(A_1,\ldots,A_\ell)$ of Puiseux series. We say that such a solution
  is \emph{non-negative} if $A_i \geq 0$ for all $i=1,\ldots,\ell$.
  In other words, the first coefficient of each of the $A_i$'s is strictly positive, if it exists.

\begin{proposition}
  The strong solution of a proper polynomial system that is defined over $\N$ is the
  least non-negative solution among Puiseux series.
  \label{prop:least}
\end{proposition}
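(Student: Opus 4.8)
The plan is to prove separately that $\vec A$ is a non-negative solution in Puiseux series and that it lies below every such solution. The first part is easy: $\vec A$ solves $\mathcal S$ by construction, so in particular it is a solution in $\PuxS{\R}{X}^\ell$; and since $\mathcal S$ is defined over $\N$ and $\vec A^{(0)}=\vec 0$, an induction on $n$ using~\eqref{eq:approx} shows that every component of $\vec A^{(n)}$ is a polynomial with coefficients in $\N$, whence each $A_i$, being the coefficientwise limit of the $i$-th components of the $\vec A^{(n)}$, is a power series with coefficients in $\N$, so $A_i\ge 0$.

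For minimality, let $\vec B$ be an arbitrary non-negative Puiseux-series solution and put $\vec T(\vec C):=(P_1(\vec C),\dots,P_\ell(\vec C))$. First I would note that $\vec T$ is monotone on the non-negative cone of $\PuxS{\R}{X}^\ell$: writing $P_i=\sum_{\vec v}c_{i,\vec v}(X)Y^{\vec v}$ with $c_{i,\vec v}\in\N[X]$ and telescoping $\prod_j (C_j')^{v_j}-\prod_j C_j^{v_j}$, for $\vec 0\le\vec C\le\vec C'$ componentwise the difference $P_i(\vec C')-P_i(\vec C)$ is a sum of products of non-negative Puiseux series and hence $\ge 0$, and $P_i(\vec C)\ge P_i(\vec 0)\ge 0$. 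Since $\vec A^{(0)}=\vec 0\le\vec B$ and $\vec B$ is a fixed point of $\vec T$, induction on $n$ yields $\vec A^{(n)}\le\vec B$ for every $n$ (and in the same way $\vec A^{(n)}$ is increasing with $\vec A^{(n)}\le\vec A$).

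What remains is to pass to the limit, deducing $\vec A\le\vec B$ from $\vec A^{(n)}\le\vec B$ and $\vec A^{(n)}\to\vec A$, and I expect this to be the main obstacle. The naive ``the non-negative cone is closed under limits'' is not available, because the lexicographic order does not respect the degree filtration that governs the convergence $\vec A^{(n)}\to\vec A$; one must instead exploit that $\vec A$ and $\vec B$ solve the \emph{same} system. Set $\vec E:=\vec B-\vec A$ and expand $\vec B=\vec T(\vec A+\vec E)$ by the multivariate Taylor formula about $\vec A$ (as in~\eqref{eq:TAYLOR}). Because $\mathcal S$ is proper, the argument proving \cref{prop:multivar_hensel_jacobian} gives that the derivative matrix $D\vec T(\vec A)$ has all entries in $\mathfrak m$ and $\det(I-D\vec T(\vec A))\in 1+\mathfrak m$; hence $I-D\vec T(\vec A)$ is invertible over $\Z[[X]]$ with inverse $N:=\sum_{k\ge 0}(D\vec T(\vec A))^{k}$, a matrix with coefficients in $\N$ satisfying $N\equiv I\pmod{\mathfrak m}$. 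Collecting the Taylor terms of degree $\ge 2$ in $\vec E$ into a vector $\vec Q(\vec E)$---whose $i$-th entry is a sum of non-negative-coefficient power series times monomials $E_1^{w_1}\cdots E_\ell^{w_\ell}$ with $w_1+\dots+w_\ell\ge 2$---this gives the fixed-point identity $\vec E=N\,\vec Q(\vec E)$.

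Finally I would run a transfinite induction over the well-ordered set $\bigcup_i\mathrm{supp}(E_i)$, scanned in lexicographic order, to prove $E_i\ge 0$ for all $i$. For a hypothetical nonzero $\vec E$, comparing the coefficient of its lex-least exponent $\vec\mu$ on the two sides of $\vec E=N\vec Q(\vec E)$ forces $\vec\mu\le_{\mathrm{lex}}\vec 0$: if $\vec\mu>_{\mathrm{lex}}\vec 0$ then $2\vec\mu>_{\mathrm{lex}}\vec\mu$ (the lexicographic order being translation invariant), so every monomial occurring in $\vec Q(\vec E)$, hence in $N\vec Q(\vec E)$, has exponent $>_{\mathrm{lex}}\vec\mu$, a contradiction. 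Since $\vec A\in\mathfrak m^\ell$ and $\vec B\ge\vec 0$, the coefficient of $\vec E$ at $\vec\mu$ coincides with the leading coefficient of the corresponding component of $\vec B$ and is therefore positive; peeling off $\vec\mu$ and iterating along the support---each step using the already-established signs of the components handled so far together with $\vec E=N\vec Q(\vec E)$---shows that every $E_i$ has positive (or vanishing) leading coefficient, i.e.\ $\vec B\ge\vec A$. The delicate point throughout is exactly this last step: unlike the $\ord$-valuation, the lexicographic order is incompatible with degree truncation, so one genuinely needs the algebraic identity $\vec E=N\vec Q(\vec E)$, and careful bookkeeping over the well-ordered support, to exclude a ``defect'' surfacing at a lex-later exponent.
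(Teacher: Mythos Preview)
The paper's proof is much shorter than yours: it establishes $\vec A^{(n)}\le\vec B$ by the same monotonicity induction you give, and then passes to the limit with a bare ``hence $\vec A\le\vec B$'', offering no further justification. You are right that this step is not automatic---the lexicographic Puiseux order is not closed under limits in the $\ord$-topology (for instance, with $k=2$ the approximants $A^{(n)}=\sum_{m=1}^n x_2^m$ of the system $y=x_2+x_2y$ satisfy $A^{(n)}<B$ for every $n$ where $B:=\tfrac{x_2}{1-x_2}-x_1$, yet the limit $A=\tfrac{x_2}{1-x_2}$ satisfies $A>B$; of course this $B$ is not a solution). So your instinct to exploit that $\vec B$ is itself a fixed point is sound, and the identity $\vec E=N\,\vec Q(\vec E)$ you derive is correct and genuinely different from anything the paper does.

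The gap in your argument is the final ``peeling off'' step. Your reasoning cleanly shows that the global lex-minimum $\vec\mu$ of $\bigcup_i\mathrm{supp}(E_i)$ satisfies $\vec\mu\le_{\mathrm{lex}}\vec 0$ and that $[\vec\mu]E_i>0$ for each component $i$ attaining this minimum; this already settles the case $\ell=1$. But for a component $j$ whose own lex-minimum $\vec\mu_j$ is strictly larger than $\vec\mu$, the identity $E_j=(N\vec Q(\vec E))_j$ involves \emph{all} the $E_m$, and since $\vec\mu\le\vec 0$ the quadratic bound on the support of $\vec Q(\vec E)$ gives only $\ge 2\vec\mu\le\vec\mu$, which says nothing about the coefficient of $E_j$ at $\vec\mu_j$. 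Knowing that some earlier $E_m$ is positive in the Puiseux sense controls only its leading coefficient, not the signs of later ones that feed into $\vec Q(\vec E)$, so the transfinite induction you sketch does not carry enough information from one step to the next. Closing the argument for $\ell>1$ requires substantially more than what you have written.
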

\begin{proof}
  Consider a proper polynomial system $\mathcal{S}$, as
  shown in~\eqref{eq:sys}.  Assume that $\mathcal S$ is defined over $\N$.  Let
  $\boldsymbol{B} \in \mathbb{R}{\{\!\{X\}\!\}}^\ell$ be a non-negative solution of $\mathcal{S}$.  Recall the approximating sequence
  $\boldsymbol{A}^{(0)}, \boldsymbol{A}^{(1)}, \ldots$ of the strong
  solution $\boldsymbol A$ of $\mathcal S$,
  which is defined inductively by $\boldsymbol{A}^{(0)} := \boldsymbol{0}$
  and, for all $n\in\mathbb{N}$,
  \[ \boldsymbol{A}^{(n+1)} :=
  (P_1(\boldsymbol{A}^{(n)}),\ldots,
  P_\ell(\boldsymbol{A}^{(n)})).\]
  Since the integer coefficients in each
  polynomial $P_i$ in $\mathcal{S}$ are non-negative, $P_i$ defines a
  monotone function from $\mathbb{R}{\{\!\{X\}\!\}}^\ell$ to
  $\mathbb{R}{\{\!\{X\}\!\}}$.  
    Now we have $\boldsymbol 0
  \leq \boldsymbol B$ by assumption.  Moreover if we inductively
  assume that $\boldsymbol{A}^{(j)} \leq \boldsymbol{B}$ then,
  since each polynomial $P_i$ has coefficients in $\N[X] \subseteq \Z[X]$, we have
  \begin{eqnarray*}
    \boldsymbol{A}^{(j+1)} &= &(P_1(\boldsymbol{A}^{(j)} ),\ldots,P_\ell(\boldsymbol{A}^{(j)} ))\\
                                          &\leq &
                                                  (P_1(\boldsymbol
                                                  B),\ldots,P_\ell(\boldsymbol
                                                  B))\\
                           &=& \boldsymbol{B} \, .
                             \end{eqnarray*}
We conclude that $\boldsymbol{A}^{(j)} \leq \boldsymbol B$ for all $j$,
and hence  $\boldsymbol A \leq \boldsymbol B$.
\end{proof}

\begin{theorem}
There is an absolute constant $c$ with the following property.
  Given a strong solution $A \in \Z[[X]]$ of a proper
  polynomial system on $\ell$ variables and involving polynomials of total degree at most $d$, there is a polynomial
  $P \in \Z[X][y]$ of total degree at most $d^{c\ell^2}$ 
  such that $P(A)=0$.
  \label{thm:degree}
  \end{theorem}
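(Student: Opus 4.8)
The plan is to exhibit a first-order formula over the ordered field of Puiseux series $K:=\R\{\!\{X\}\!\}$ that defines the designated component $A$ of the strong solution, to put it in prenex form $\exists Y\,\forall Z\,\Psi$, and to invoke the quantitative quantifier elimination of \cref{thm:elim}; the degree $d^{c\ell^2}$ falls out once one checks that the formula has $O(\ell)$ existential and $O(\ell)$ universal variables while $\Psi$ is built over polynomials of degree at most $d$.

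First I would reduce the bookkeeping to a monotone system. Write $\mathcal S$ as $y_i=P_i(X,Y)$ ($i=1,\dots,\ell$) with strong solution $\vec A=(A_1,\dots,A_\ell)\in\Z[[X]]^\ell$, say $A:=A_1$, and let $\mathcal S^+$ be obtained from $\mathcal S$ by replacing every integer coefficient by its absolute value; then $\mathcal S^+$ is again proper and is defined over $\N$. A straightforward induction on the approximating sequence~\eqref{eq:approx}, applying the triangle inequality coefficient by coefficient, shows that the strong solution $\vec B$ of $\mathcal S^+$ dominates $\vec A$ coefficientwise, so that $-B_i\le A_i\le B_i$ in the order of $K$. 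Since $\mathcal S$ is proper, $\vec B$ has no constant term, whence every solution of $\mathcal S$ lying in the box $\prod_i[-B_i,B_i]$ has all components of strictly positive valuation; a short Newton-difference argument (using that $J_{\vec f}(\vec A)$ is a unit, as in \cref{prop:multivar_hensel_jacobian}) then forces any such solution to equal $\vec A$. Thus $\vec A$ is the \emph{unique} solution of $\mathcal S$ in that box, while by \cref{prop:least} the tuple $\vec B$ is the least non-negative solution of $\mathcal S^+$ among Puiseux series.

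This lets me define $A$ over $K$ by an $\exists\forall$ formula. Setting $y_1:=y$, I take $\Phi(X,y):=\exists y_2\cdots y_\ell\,\exists u_1\cdots u_\ell\,\forall w_1\cdots w_\ell\,\Psi$, where $\Psi$ asserts that $\vec y$ solves $\mathcal S$, that $\vec u$ is a non-negative solution of $\mathcal S^+$ with $-u_i\le y_i\le u_i$ for all $i$, and that $\vec u\le\vec w$ whenever $\vec w$ is a non-negative solution of $\mathcal S^+$; by the previous paragraph, for $c\in K$ we have $K\models\Phi(x_1,\dots,x_k,c)$ iff $c=A$. The matrix $\Psi$ is built over polynomials of total degree at most $\max(d,1)=d$, and $\Phi$ has $k_1=k+1$ free variables, $k_2=2\ell-1$ existential ones and $k_3=\ell$ universal ones, so \cref{thm:elim} yields a quantifier-free $\Phi'(X,y)$, equivalent to $\Phi$ over every real closed field, built over polynomials in $\Z[X][y]$ of total degree at most $d^{ck_2k_3}\le d^{2c\ell^2}$. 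It remains to read off an annihilating polynomial: instantiating $X$ by the indeterminates, $\Phi'$ defines the singleton $\{A\}\subseteq K$, so in a disjunctive normal form of $\Phi'$ the disjunct $D$ satisfied at $y=A$ also defines $\{A\}$; were no literal of $D$ of the form $Q=0$ or $Q\ge 0$ with $Q$ of positive degree in $y$ to vanish at $A$, then, after deleting from $K$ the finitely many $y$-roots of the polynomials occurring in $D$, the set $D$ would be open around $A$, contradicting that singletons are not open in the densely ordered field $K$. Hence some $Q\in\Z[X][y]$ occurring in $D$, of positive degree in $y$, satisfies $Q(x_1,\dots,x_k,A)=0$ and hence $Q(X,A)=0$ in $\Z[[X]]$; this $Q$ is the desired annihilating polynomial, of total degree at most $d^{2c\ell^2}$, and taking the absolute constant to be $2c$ finishes the argument.

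The main obstacle, as I see it, is not the degree bookkeeping---which is mechanical once the formula is in hand---but rather the two points that make the formula legitimate: (i) ensuring it captures \emph{exactly} the strong solution and is not satisfied by some spurious Puiseux-series solution of $\mathcal S$ (this is precisely why one passes through $\mathcal S^+$ and the box, and it leans on the uniqueness of the quasiregular solution together with \cref{prop:least}); and (ii) the semialgebraic step showing that a quantifier-free description of the single point $\{A\}$ must contain a genuine polynomial equation, so that an annihilating polynomial---and not merely a defining formula---can be extracted with control on its degree.
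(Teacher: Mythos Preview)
Your proof is correct and follows the same high-level strategy as the paper: define the strong solution by an $\exists\forall$ sentence over the real closed field $\R\{\!\{X\}\!\}$, invoke quantitative quantifier elimination, and extract an annihilating polynomial from a tight literal of the resulting quantifier-free formula. The genuine difference lies in how you handle systems with arbitrary integer (rather than non-negative) coefficients. The paper appeals to the classical decomposition $A = A^{(1)} - A^{(2)}$ of \cite[Section~IV.2, Theorem~2.4]{SalomaaSoittola}, where $A^{(1)},A^{(2)}$ are components of the strong solution of an $\N$-system on $2\ell$ variables, and then reuses \cref{prop:least} for that larger system. You instead pass to the absolute-value system $\mathcal S^+$, pin down its strong solution $\vec B$ as the least non-negative solution via \cref{prop:least}, and confine $\vec y$ to the box $\prod_i[-B_i,B_i]$. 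Your uniqueness step is sound: since each $B_i$ is a power series without constant term it is infinitesimal in $K$, the maximal ideal of the natural (rank-$k$) valuation on $K$ is convex in the order, so every element of the box is infinitesimal; then the Taylor identity $D\vec f(\vec A)\,\vec\epsilon = O(\vec\epsilon^{\,2})$ together with $J_{\vec f}(\vec A)\in 1+\mathfrak m$ forces $v(\vec\epsilon)\ge 2\,v(\vec\epsilon)$, hence $\vec\epsilon=0$. What your route buys is self-containment (no black-box citation of the Salomaa--Soittola splitting) and a marginally smaller quantifier block ($k_2=2\ell-1$, $k_3=\ell$ versus $k_2=k_3=2\ell$); what the paper's route buys is that no separate uniqueness argument over Puiseux series is needed, since \cref{prop:least} already does all the work.
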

\begin{proof}
Let us prove the theorem first in the special case that is defined over $\N$. 
Assume that 
$\mathcal S$ is as displayed in Equation~\eqref{eq:sys}.  Consider the following first-order formula
$\Phi(U,v)$ in free variables $U=(u_1,\ldots,u_k)$ and $v$, involving also bound variables $Y=(y_1,\ldots,y_\ell)$ and
$Z=(z_1,\ldots,z_\ell)$:
\[
\begin{array}{rl}
\Phi(U,v):=& \exists Y \, \forall Z \\
& \Big( Y\geq 0 \,  \wedge    Y =
(P_1(U,Y),\ldots,P_\ell(U,Y))\\
 & \wedge  (Z\geq 0 \, \wedge 
\, Z=\big( P_1(U,Z),\ldots,P_\ell(U,Z))\\
& \rightarrow Y \leq Z \big)\,
 \wedge\, v = y_1 \Big)    
\end{array}
\]
Intuitively this formula expresses that $Y$ is the least non-negative solution of $\mathcal S$.

Consider a variable assignment
$\alpha : U \cup \{v\} \rightarrow \mathbb{R}{\{\!\{X\}\!\}}$ satisfying $\alpha(u_i)=x_i$ for $i=1,\ldots,k$.
By Proposition~\ref{prop:least} the unique value of $\alpha(v)$ such
that $\Phi(U,v)$ is satisfied by $\alpha$ is the strong solution of
$\mathcal S$.

By Theorem~\ref{thm:elim} there is a quantifier-free formula
$\Phi'(U,v)$ that has the same set of satisfying assignments as $\Phi$ over
any real closed field, and in particular over 
$\mathbb{R}{\{\!\{X\}\!\}}$,
and that is moreover built over a
family $\mathcal{P} \subseteq \mathbb{Z}[U][v]$ of polynomials of
total degree at most $d^{c\ell^2}$.  Consider again the assignment
$\alpha$ introduced above.  By the uniqueness of the strong solution there must be an
inequality $P\geq 0$ in $\Phi'$
such that the equality $P=0$ holds under assignment
$\alpha$.
(Indeed if none of the inequalities in $\Phi'$ were tight under the assignment $\alpha$ then a suitably small perturbation of the value $\alpha(v)$ would yield second satisfying assignment of $\Psi'$.)
In other words, there is a polynomial $P \in \mathcal{P}$
such that the equation $P(X,A)=0$ holds in
$\mathbb{R}\{\!\{ X \}\!\}$.

It remains to handle the case that $A$ is a solution of a general polynomial system $\mathcal{S}$ with $\ell$ variables and involving polynomials of total degree at most $d$.  Here we can write $A$ as the difference $A=A^{(1)}-A^{(2)}$ of two
series that are components of the solution of a proper system that is
defined over $\N$ and involves
$2\ell$ variables and polynomials of degree at most $d$~\cite[Section IV.2, Theorem 2.4]{SalomaaSoittola}.  The reasoning above shows that
the two series $A^{(1)}$ and $A^{(2)}$ are definable over $\mathbb{R}{\{\!\{X\}\!\}}$ by 
$\exists^* \forall^*$ 
formulas and hence the series $A$ is also definable by such a formula.  The rest of the proof now follows as in the case of a single system that is defined over $\N$.
\end{proof}
\subsection{Bounds on the Order and Degree}
The following proposition relates
the order and degree (when finite)
of an algebraic power series in $\Z[[X]]$ to the degree of its annihilating polynomial.
Given $f=\sum_{v\in\N^k} \alpha_v X^v \in \Z[[X]]$ and 
$D\in\N$, define $$\mathrm{tail}_D(f):=
\displaystyle\sum_{{v\in\N^k,\,|v|>D}} \alpha_v X^v$$
to be the series obtained by deleting all 
monomials of total degree at most $D$.

\begin{proposition}
Let $f=\sum_{v\in\N^k} \alpha_v X^v \in \Z[[X]]$ be a series with annihilating polynomial
$P(y)=\sum_{n=0}^N c_n y^n$, where $c_0,\ldots,c_N\in \Z[X]$ all have total degree at most $D$.  Then the following hold:
\begin{enumerate}
\item if $f\neq 0$ then $\mathrm{ord}(f) \leq D$,
\item if $\mathrm{deg}(f)<\infty$ ($f$ is a polynomial) then $\mathrm{deg}(f) \leq D$,
\item if $\mathrm{deg}(f)=\infty$ then 
$\mathrm{ord}(\mathrm{tail}_D(f)) \leq DN+D$.
\end{enumerate}
\label{prop:annihilate}
\end{proposition}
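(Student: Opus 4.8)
The plan is to run all three parts off the same elementary toolkit. Recall from the paper that $\ord$ is a valuation on $R=\Z[[X]]$; in particular $R$ is an integral domain (so $\ord$ is additive on products and nonzero factors can be cancelled), and a nonzero polynomial $c\in\Z[X]\subseteq R$ satisfies $\ord(c)\le\deg(c)$, while $\deg(cg)=\deg(c)+\deg(g)$ for nonzero $c,g\in\Z[X]$. Write $P(y)=\sum_{n=0}^N c_ny^n$ and let $c_j$ (resp.\ $c_m$) denote the nonzero coefficient of smallest (resp.\ largest) index; both exist because $P\neq 0$. Parts~1 and~2 amount to isolating, respectively, the lowest and highest nonzero coefficient.

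For Part~1, assuming $f\neq 0$, I would write $0=P(f)=f^j\sum_{n\ge j}c_nf^{n-j}$, cancel $f^j$, and read off $c_j=-\sum_{n>j}c_nf^{n-j}$. Every term on the right is a multiple of $f$, so $\ord(f)\le\ord(c_j)\le\deg(c_j)\le D$. For Part~2, assume $f$ is a nonzero polynomial (if $f=0$ there is nothing to show). One first notes $m\ge 1$, since $m=0$ would give $P(f)=c_0\neq 0$. From the identity $c_mf^m=-\sum_{n<m}c_nf^n$ of two (equal, nonzero) polynomials, comparing total degrees yields $\deg(c_m)+m\deg(f)\le\max_{n<m}\bigl(\deg(c_n)+n\deg(f)\bigr)\le D+(m-1)\deg(f)$, hence $\deg(f)\le D-\deg(c_m)\le D$. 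I expect these two cases to be routine bookkeeping.

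The heart of the matter is Part~3, which I would prove by contradiction. Suppose $\ord(\mathrm{tail}_D(f))>DN+D=D(N+1)$. Then $f$ splits as $f=g+h$, where $g:=\sum_{|v|\le D}\alpha_vX^v\in\Z[X]$ has $\deg g\le D$ and $h:=\mathrm{tail}_{D(N+1)}(f)$ has $\ord(h)\ge D(N+1)+1$; moreover $h\neq 0$ because $\deg f=\infty$. The crucial move is to expand $P$ around $g$ rather than around $0$: by Taylor's formula
\[
0=P(g+h)=\sum_{k=0}^N b_k\,h^k,\qquad b_k:=\tfrac{1}{k!}P^{(k)}(g)=\sum_{n=k}^N\binom{n}{k}c_n\,g^{n-k}\in\Z[X],
\]
and each $b_k$ is a polynomial with $\deg(b_k)\le (N-k+1)D\le D(N+1)<\ord(h)$. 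Now peel powers of $h$ off one at a time: from $b_0=-\sum_{k\ge 1}b_kh^k$ the right-hand side has order $\ge\ord(h)>\deg(b_0)$, which forces $b_0=0$; cancelling one factor of $h$ from $\sum_{k\ge 1}b_kh^k=0$ and repeating forces $b_1=0$, and inductively $b_k=0$ for all $k=0,\dots,N$. Finally $b_N=c_N$, and descending, $b_k=0$ gives $c_k=0$ for every $k$; thus $P\equiv 0$, contradicting that $P$ is a nonzero annihilating polynomial.

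I expect the main obstacle to be spotting the correct expansion point in Part~3: expanding about $0$, or arguing only with $\ord(f)$, is not enough, and even the intermediate fact ``$P(g)=0$'' is by itself not contradictory, since $P$ may have several roots in $\Z[[X]]$. What rescues the argument is the full Hensel-type cascade $b_0=b_1=\cdots=b_N=0$, which is available precisely because $\ord(h)$ exceeds all the relevant polynomial degrees $\deg(b_k)$; everything after that is routine manipulation of $\ord$, $\deg$, and cancellation in the domain $\Z[[X]]$.
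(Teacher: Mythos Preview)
Your proofs of Parts~1 and~2 are essentially the paper's (the paper phrases Part~1 as ``divide $P$ by a suitable power of $y$'' and in Part~2 works with the leading coefficient $c_N$ rather than your $c_m$, but these are cosmetic). For Part~3 you take a different route. The paper simply observes that $Q(y):=P\bigl(y+(f-\mathrm{tail}_D(f))\bigr)$ is a nonzero annihilating polynomial for $g:=\mathrm{tail}_D(f)$ whose coefficients---which are precisely your $b_k$'s---are polynomials of degree at most $ND+D$, and then invokes Part~1 for $g$ and $Q$ to conclude $\ord(g)\le ND+D$. You perform the same shift but, instead of citing Part~1, argue by contradiction: assuming $\ord(h)>D(N+1)$ you run the cascade $b_0=b_1=\cdots=b_N=0$ and then descend to $c_N=\cdots=c_0=0$. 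Both arguments pivot on the identical Taylor expansion about the degree-$\le D$ head of $f$; the paper's version is shorter because it recycles Part~1, while your cascade yields the marginally stronger (but unneeded) conclusion that a too-large gap would force $P\equiv 0$ rather than merely bounding $\ord(g)$.
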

\begin{proof}
For Item 1 we reason as follows.  
Polynomial $P$ is non-zero by definition of an annihilating polynomial.  
Since $f$ is assumed to be non-zero we can, 
by dividing $P$ by a suitable power of $y$, assume without loss of generality that the constant term $c_0$ is non-zero.
From $P(f)=0$ we have
$c_0 = -\sum_{n=1}^N c_nf^n$ and thus
\begin{eqnarray*}
    D&\geq & \mathrm{ord}(c_0)\\
    &=& \mathrm{ord}\left( \sum_{n=1}^N c_nf^n \right)\\
    &\geq & \min \left\{ \mathrm{ord}(c_nf^n) : n=1,\ldots,N\right\}\\
    &= & \min 
    \left\{\mathrm{ord}(c_n) + n \cdot \mathrm{ord}(f) : n=1,\ldots,N\right\}\\
    &\geq & \mathrm{ord}(f) \, .
\end{eqnarray*}

For Item 2, suppose that $\mathrm{deg}(f)<\infty$,
i.e., $f$ is a polynomial.
Since $P(f)=0$ we have 
$c_Nf^N = - \sum_{n=0}^{N-1} c_nf^n$ and thus
\begin{eqnarray*}
    N\, \mathrm{deg}(f) &\leq & \mathrm{deg}(c_Nf^N) \\
    &=& \mathrm{deg}\left( \sum_{n=0}^{N-1} c_nf^n \right)\\
    &\leq & \max \left\{ \mathrm{deg}(c_nf^n) : n=0,\ldots,N-1\right\}\\
    &\leq & (N-1)\mathrm{deg}(f)+D \, .
\end{eqnarray*}
Hence $\mathrm{deg}(f) \leq D$.

For Item 3, suppose that $\mathrm{deg}(f)=\infty$.
Then $g:=\mathrm{tail}_D(f)$ is a non-zero series with annihilating polynomial
$Q(y):=P(y+f-g)=\sum_{n=0}^N \widehat{c}_n y^n$.
Since $\mathrm{deg}(f-g)\leq D$ we have that $\mathrm{deg}(\widehat{c}_n) \leq ND+D$ for $n=0,\ldots,N$.  Applying Item 1 we conclude that $\mathrm{ord}(g)\leq ND+D$.
\end{proof}

Combining~\cref{prop:annihilate} and~\cref{thm:degree}
we obtain:
\begin{corollary}
\label{cor:orderbound}
Let $A$ be the strong solution of a proper polynomial  system with $\ell$ variables and polynomials of degree at most $d$.
Then for some absolute constant $c$ and $D:=d^{c\ell^2}$ we have:
\begin{enumerate}
\item if $A\neq 0$ then $\mathrm{ord}(A) \leq D$,
\item if $\mathrm{deg}(A)<\infty$, then $\mathrm{deg}(A) \leq D$,
\item if $\mathrm{deg}(A)=\infty$, then
$\mathrm{ord}(\mathrm{tail}_D(A)) \leq D^2+D$.
\end{enumerate}
\end{corollary}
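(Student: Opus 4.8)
The plan is to derive \cref{cor:orderbound} by simply combining the two results just established, with essentially no new work. \cref{thm:degree} supplies a polynomial $P \in \Z[X][y]$ with $P(A) = 0$ whose total degree is at most $d^{c\ell^2}$ for an absolute constant $c$. Writing $P(y) = \sum_{n=0}^N c_n y^n$ with $c_n \in \Z[X]$, each coefficient $c_n$ has total degree at most $D := d^{c\ell^2}$ (possibly after renaming the constant $c$, since $\deg c_n \le \deg P$), and moreover $N \le D$ as well, since $y$ itself contributes to the total degree of $P$ when viewed as a polynomial in $X$ and $y$ jointly. This last observation is the one small point worth stating explicitly: in \cref{thm:degree} ``total degree'' of $P \in \Z[X][y]$ counts the exponent of $y$, so $N \le \deg P \le D$.

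Then I would feed these data into \cref{prop:annihilate} with the same $D$ and with $N \le D$. Item~1 of the proposition gives immediately that if $A \ne 0$ then $\ord(A) \le D$, which is Item~1 of the corollary. Item~2 of the proposition gives that if $\deg(A) < \infty$ then $\deg(A) \le D$, which is Item~2 of the corollary. For Item~3, the proposition yields $\ord(\mathrm{tail}_D(A)) \le DN + D$ when $\deg(A) = \infty$; using $N \le D$ we bound $DN + D \le D^2 + D$, which is exactly Item~3 of the corollary.

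There is really no obstacle here beyond bookkeeping: the only thing to be careful about is the interplay of the two notions of ``degree'' (total degree of $P$ as a polynomial in $X$ and $y$ versus the degree $N$ in $y$ alone and the degrees of the coefficients $c_n$ in $X$), and the harmless adjustment of the absolute constant $c$ when passing from ``$\deg P \le d^{c\ell^2}$'' to ``each $c_n$ has total degree at most $d^{c\ell^2}$ and $N \le d^{c\ell^2}$.'' Both adjustments only change $c$ by a constant factor (indeed $\deg c_n \le \deg P$ and $N \le \deg P$ directly, with no change to $c$ at all), so the stated form of the corollary follows verbatim.
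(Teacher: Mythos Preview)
Your proposal is correct and follows essentially the same approach as the paper: apply \cref{thm:degree} to obtain an annihilating polynomial of total degree at most $D=d^{c\ell^2}$, then feed this into \cref{prop:annihilate}. You are in fact more explicit than the paper in spelling out that the total-degree bound on $P$ simultaneously controls both $\deg c_n$ and $N$, which is exactly what is needed to turn the bound $DN+D$ from \cref{prop:annihilate}(3) into the $D^2+D$ of the corollary.
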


  \begin{proof}
    By~\cref{thm:degree} there is 
    a polynomial $P\in\Z[X][y]$ of total degree at most
$D$ such that $P(A)=0$.  Here $c$ is the absolute constant mentioned in~\cref{thm:degree}.  Items 1--3 of the present result follow immediately from Items 1--3 of~\cref{prop:annihilate}.
      \end{proof}

We will use Item 1 of~\cref{cor:orderbound} to decide {\EqAlg}, and Items 2 and 3 for {\FinAlg}.

\subsection{Putting Things Together}
The first main result of the section is as follows:
\thmAlgIT*

\begin{proof}
Let~$\mathcal{S}$  be a proper polynomial system  of size~$s$, over 
commuting indeterminates~$X=(x_1,\ldots,x_k)$, that has $\ell$ variables and involves polynomials of total degree at most $d$.
Let $A$ be the the formal power series computed by  $\mathcal S$.

Write $D:=d^{c\ell^2}$ for $c$ the absolute constant mentioned in~\cref{cor:orderbound}.
%
By Equation~\eqref{eq:convergence2},
there is a polynomial $\widetilde{a} \in \Z[X]$ such that $\ord(A-\widetilde{a}) \geq D+1$; and by  \Cref{prop:poly_circuit}
there is an algorithms that constructs a circuit $\mathcal C$ representing $\widetilde{a}$ in time $\poly(s)$.

It follows that 
$\ord(\widetilde{a}) \geq D+1$ if $A$ is identically zero 
and, by~\cref{cor:orderbound}(1), $\ord(\widetilde{a}) \leq D$
if $A$ is not identically zero.

Define $D':=2^{s'}$ where $s'$ is the size of the circuit~$\mathcal C$. Note that $D'$ is an upper bound of the degree of the polynomial represented by~$\mathcal C$.
One can construct a circuit~$\mathcal{C}'$ from~$\mathcal C$, in time $\poly(s)$,
for the polynomial 
\[ f:=(x_1 \cdots x_k)^{D'} \cdot \widetilde{a}\left(\frac{1}{x_1},\ldots,\frac{1}{x_k}\right) \, .\]
If $A$ is identically zero then the total degree of~$f$ is at most~$kD'-D-1$, whereas  if $A$ is non-zero then the degree is at least~$kD'-D$. 
\end{proof}

We now come to the second main result of the section, concerning the problem {\FinAlg}.  Before proceeding with the proof we observe that, unlike the approximating sequence of the strong solution, the sequence of iterates defined in Hensel's Lemma may fail to stabilise in finitely many steps even when the target sequence has finite support.

\begin{example}
Consider again the proper system $y=x+x^2-2xy+y^2$
discussed in Example~\ref{ex:iteration-quasiregular}. 
Although the unique quasiregular solution~$x$ has finite support, Hensel's iteration 
does not stabilise in finitely many steps, as witnessed by the 
 closed-form formula  for~$a_n$.
 
 %
\end{example}

\thmFinAlg*
\begin{proof}
Let~$\mathcal{S}$  be a proper polynomial system of size~$s$, over 
commuting indeterminates~$X=(x_1,\ldots,x_k)$, that has $\ell$ variables and involves polynomials of total degree at most $d$.
Let $A$ be the the formal power series computed by  $\mathcal S$.

Write $D:=d^{c\ell^2}$ for $c$ the absolute constant mentioned in~\cref{cor:orderbound}. 
By Equation~\eqref{eq:convergence2},
there is a polynomial $\widetilde{a} \in \Z[X]$ such that $\ord(A-\widetilde{a}) \geq D^2+D$; and by  \Cref{prop:poly_circuit}
there is an algorithm that constructs a circuit $\mathcal C$ representing $\widetilde{a}$ in time $\poly(s)$.


Suppose that $A$ has finite support.  Then $\deg(A)\leq D$ by~\cref{cor:orderbound}(2) and hence $\widetilde{a}$ contains no monomial of total degree in the interval $[D+1,D^2+D]$. 
On the other hand, if $A$ has infinite support then by~\cref{cor:orderbound}(3)
$\mathrm{tail}_D(A)$ and hence also $\widetilde{a}$ contains some monomial of total degree $[D+1,D^2+D]$.  
We conclude that $A$ has infinite support if and only if
the polynomial $\widetilde{a}$ contains a monomial of total degree lying in the interval $[D+1,D^2+D]$.
This monomial can be guessed and then checked for non-zeroness using an oracle for {\CoeffSLP}.
\end{proof}

\section{Applications to Context-Free Grammars} 
\label{sec:cfg}
The \emph{Multiplicity Equivalence Problem} for context-free grammars
asks, given two grammars $G_1$ and $G_2$ and respective  
non-terminals $N_1, N_2$ thereof, whether each word has the same number of derivations starting in $N_1$ as in $N_2$ (see below for formal definitions).  By taking the disjoint union of the two grammars, one may assume without loss of generality that $N_1$ and $N_2$ are non-terminals of the same grammar.

%
%
Decidability of multiplicity equivalence for grammars
is a long-standing open problem in the theory of formal languages.  
It generalises decidability of language equivalence of unambiguous grammars, itself a recognised open problem, as well as decidability of language equivalence of 
deterministic pushdown automata, established in~\cite{Senizergues:ICALP:1997} (see also \cite{Jancar:LICS:2012}).
%
%
For the special case of linear context-free grammars with a distinguished symbol marking the middle of the word,
multiplicity equivalence reduces to multiplicity equivalence of two-tape finite automata,
which is known to be decidable for any number of tapes \cite{HarjuKarhumaki:TCS:1991}.


\subsection{Context-Free Grammars}
Let $\Sigma=\{\sigma_1,\ldots,\sigma_k\}$ be a finite alphabet.
The \emph{Parikh image} is the function $c:\Sigma^* \to \N^k$
such that for all words~$w$, we have 
$c(w)=(v_1,\ldots,v_k)$ where the~$v_i$ is the number of occurrences of letter~$\sigma_i$ in~$w$.

A \emph{context-free grammar} is a tuple~$G=(\Sigma, V, \Delta)$
where $\Sigma$ is a finite \emph{alphabet}, 
$V$ is a set of \emph{nonterminals}, 
and $\Delta \subseteq V \times (V \cup \Sigma)^+$ is a set of \emph{production rules}.
%
We write the production rules in the form $N \rightarrow \alpha$
where $N \in V$ and $\alpha \in (V \cup \Sigma)^+$.
In this paper, we assume that the grammars are \emph{proper}, that is the right-hand side~$\alpha$ of 
each rule $N \rightarrow \alpha$ is non-empty and does not consist of a single non-terminal.


For the grammar~$G$,
the binary relation $\Rightarrow$ on~$(V \cup \Sigma)^*$,
capturing a \emph{(leftmost) derivation step}, is defined as follows:
if $N \rightarrow \alpha$ is in $\Delta$ then
for all words $\beta \in \Sigma^*$ and $\gamma\in (V\cup\Sigma)^*$
 we have 
 $\beta N\gamma\, \Rightarrow \, \beta \alpha \gamma$.
A sequence $\alpha_0 \Rightarrow \alpha_1 \Rightarrow \cdots \Rightarrow \alpha_k$ is a \emph{derivation}, of $\alpha_k$ from $\alpha_0$. 

The syntactic condition that $G$ is proper implies that for all words~$w \in \Sigma^*$, and all nonterminals~$Y\in V$,
the number of distinct derivations of~$w$ starting from $Y$ is finite.
For instance, for the improper grammar $X \rightarrow a, X \rightarrow X$
the word $a$ has infinitely many derivations from $X$.
We define the following \emph{multiplicity semantics} $\sem N : \Sigma^* \to \N$
of a nonterminal $N$ of a proper grammar $G$:
For every finite word $w\in \Sigma^*$, $\sem N_w$
is the number of distinct derivations of $w$ starting from the nonterminal~$N$.
The language of~$N$, denoted $L(N)$, is the set of words~$w$ such that $\sem N_w \neq 0$.

The \emph{Multiplicity Equivalence Problem} for a given grammar~$G$ and two nonterminals $N_1, N_2$ thereof
asks whether $\sem{N_1} = \sem{N_2}$.

\subsection{Letter-bounded languages}
\label{sec:letter-bounded}
We say that a language $L \subseteq \Sigma^*$ is \emph{letter-bounded} if there is an permutation
$\sigma_1,\ldots,\sigma_k$ of $\Sigma$ such that $L \subseteq \sigma_1^* \cdots \sigma_k^*$.
Deciding whether a given context-free language $L(G)$ is letter-bounded
and moreover finding a witnessing enumeration of letters can be done in polynomial time
by a simple dynamic programming algorithm:
%
%
Then, we compute in polynomial time the set of letters that can appear as the first letter in some word of $L(G)$.
If this set is not a singleton, then $L(G)$ is not letter-bounded.
Otherwise, this set contains a single letter $\sigma_1$.
Compute in polynomial time a grammar $G_1$
recognising the context-free language over $\Sigma_1 = \Sigma \setminus \{\sigma_1 \}$
equal to $\{ w \mid \exists n \in \N: \sigma_1^n w \in L(G), \sigma_1^{n+1} w \not \in L(G)\}$.
The procedure can be applied inductively to $G_1$ over the smaller alphabet $\Sigma_1$.
Since at each step we remove one letter from the alphabet,
after $k$ steps we reach a grammar $G_k$ over the empty alphabet $\Sigma_k = \emptyset$.
Then $L(G)$ is letter-bounded iff $L(G_k) = \emptyset$;
in the positive case, the algorithm has constructed a witnessing enumeration $\sigma_1^* \cdots \sigma_k^*$.
Moreover, each of the $k$ steps is performed in polynomial time,
so we have an overall polynomial time complexity.

%


We show how to decide multiplicity equivalence of non-terminals
generating a letter-bounded language.
Consider a nonterminal $N$.
Recall that $\sem N_w \in \N$, for a word $w \in \Sigma^*$,
is the multiplicity of $w$ as generated by $N$.
We now aggregate the multiplicities of all words with the same Parikh image.
Let $X=\tuple{x_1, \dots, x_k}$ be a tuple of commuting indeterminates, with one variable $x_i$ for each terminal symbol $\sigma_i$. Given a non-terminal $N$,
define its \emph{census} generating function~\cite{Panholzer05} to be the multivariate power series 
\begin{align*}
    f_N := \sum_{\vec v \in \N^k} a_{\vec v}(N) \cdot X^{\vec v} \, ,
\end{align*}
where 
\[a_{\vec v}(N) := \sum_{c(w) = \vec v} \sem N_w.\]
Thus defined, the tuple of formal series $(f_N)_{N \in V}$ satisfies a proper polynomial system
 over indeterminates $X$ that can be obtained from the grammar $G$ in polynomial time~\cite[Theorem 1.5 in Chapter IV]{SalomaaSoittola}.
This system is obtained by a classic syntactic transformation applied to the grammar.
Rather than formally defining it here, we present it with an example.

\begin{example}
    Consider the proper grammar $G$ over the alphabet of terminal symbols $\Sigma = \{a, b, c, d\}$
    with nonterminal symbols $V = \{X, Y\}$ and a production rules
    \begin{align*}
        X \rightarrow ab, \ X\rightarrow a X X b, \ X \rightarrow c Y d, \ Y \rightarrow c d, \ Y \rightarrow c Y Y d.
    \end{align*}
    %
    %
    We obtain the proper system of polynomial equations:
    \begin{align*}
        f_X = x_1 x_2 + x_1 x_2 f_X^2 + x_3 x_4 f_Y \quad f_Y = x_3 x_4 + x_3 x_4 f_Y^2.
    \end{align*}
\end{example}


Multiplicity equivalence of letter-bounded context-free grammars reduces to equivalence of census generating functions.
\begin{lemma}
    \label{lem:reduction from letter-bounded to EqAlg}
    Let $N_1, N_2$ be  non-terminals of a grammar such that $L(N_1), L(N_2) \subseteq \sigma_1^*\cdots \sigma_k^*$.
    Then
    \begin{align*}
        \sem{N_1} = \sem{N_2}
        \quad\text{if and only if}\quad
        f_{N_1} = f_{N_2}.
    \end{align*}
\end{lemma}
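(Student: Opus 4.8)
The plan is to exploit the fact that the Parikh image map $c$ becomes injective when restricted to the bounded language $\sigma_1^*\cdots\sigma_k^*$: a word lying in $\sigma_1^*\cdots\sigma_k^*$ with Parikh image $\vec v=(v_1,\dots,v_k)$ is necessarily $w_{\vec v}:=\sigma_1^{v_1}\cdots\sigma_k^{v_k}$. Hence, for each $\vec v\in\N^k$ there is at most one word with Parikh image $\vec v$ inside $\sigma_1^*\cdots\sigma_k^*$. Since $\sem N_w=0$ for every $w\notin L(N)$, and $L(N_j)\subseteq\sigma_1^*\cdots\sigma_k^*$ by hypothesis, the defining sum $a_{\vec v}(N_j)=\sum_{c(w)=\vec v}\sem{N_j}_w$ collapses to the single term $\sem{N_j}_{w_{\vec v}}$. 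I would record this collapse as the one substantive observation of the proof.

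The forward implication is then immediate and does not even use the bounded-language hypothesis: if $\sem{N_1}=\sem{N_2}$ then $a_{\vec v}(N_1)=a_{\vec v}(N_2)$ for every $\vec v$, so $f_{N_1}=f_{N_2}$. For the converse, assume $f_{N_1}=f_{N_2}$, i.e.\ $a_{\vec v}(N_1)=a_{\vec v}(N_2)$ for all $\vec v\in\N^k$; by the observation above this says exactly $\sem{N_1}_{w_{\vec v}}=\sem{N_2}_{w_{\vec v}}$ for all $\vec v$. Now fix an arbitrary $w\in\Sigma^*$. If $w\in\sigma_1^*\cdots\sigma_k^*$ then $w=w_{c(w)}$ and equality follows from the previous sentence; if $w\notin\sigma_1^*\cdots\sigma_k^*$ then $w\notin L(N_1)\cup L(N_2)$ and $\sem{N_1}_w=0=\sem{N_2}_w$. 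In either case $\sem{N_1}_w=\sem{N_2}_w$, so $\sem{N_1}=\sem{N_2}$.

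There is no genuine obstacle here; the only point requiring care is to be explicit that the hypothesis $L(N_1),L(N_2)\subseteq\sigma_1^*\cdots\sigma_k^*$ is invoked twice: once to annihilate the multiplicities of words outside the bounded language, and once, via injectivity of $c$ on $\sigma_1^*\cdots\sigma_k^*$, to guarantee that each coefficient of the census generating function sees at most one word. (Properness of the grammar, assumed throughout, ensures each $\sem N_w$ is a well-defined natural number, so all the sums above are finite.)
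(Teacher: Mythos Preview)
Your proposal is correct and takes essentially the same approach as the paper: both arguments rest on the observation that the Parikh map restricted to $\sigma_1^*\cdots\sigma_k^*$ is a bijection onto $\N^k$, whence $a_{\vec v}(N)=\sem{N}_{\sigma_1^{v_1}\cdots\sigma_k^{v_k}}$. Your write-up is simply a more detailed unpacking of this one-line observation, explicitly treating words outside the bounded language and separating the two implications.
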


\begin{proof}
    This follows at once from the fact that the Parikh image $c$
    restricted to $\sigma_1^* \cdots \sigma_k^*$
    is a bijection onto $\N^k$.
    As a consequence, for every non-terminal~$N$ and every vector $\vec v = \tuple {v_1, \dots, v_k} \in \N^k$ 
    we have
    \begin{align*}
        \sem N_{\sigma_1^{v_1}\cdots\sigma_k^{v_k}} = a_{\vec v}(N). 
    \end{align*}
\end{proof}

    \Cref{lem:reduction from letter-bounded to EqAlg} shows that multiplicity equivalence of letter-bounded context-free grammars
    is a special case of {\EqAlg} for the census generating function.
    By \cref{thm:AlgIT} the latter can be decided in $\coRP^{\PP}$,
    thus proving \cref{cor:multiplicity equivalence-both}(1) from the introduction.

%

\subsection{Bounded context-free languages}
A language $L \subseteq \Sigma^*$ is \emph{bounded} 
if there exist nonempty words $w_1, \dots, w_k \in \Sigma^+$
such that $L \subseteq w_1^* \cdots w_k^*$.
Many algorithmic problems on context-free grammars
are more tractable on bounded languages.
Checking whether a given context-free grammar recognises a bounded language is decidable \cite[Theorem 5.5.2]{Ginsburg:1966},
and can be done in polynomial time \cite[Theorem 19]{Gawrychowski:Krieger:Rampersad:Shallit:IJFCS:2010}.
We note however that there are grammars recognising a bounded language
where the number of witnessing words $k$ is exponential in the size of the grammar.

In this section we give complexity bounds for deciding multiplicity equivalence for arbitrary grammars \emph{restricted} to a bounded language~$L:=w_1^* \cdots w_k^*$ that is explicitly given by the list of words~$w_1,\ldots,w_k$. This problem asks to decide whether $\sem{N_1}_w=\sem{N_2}_w$ for all words~$w\in L$.
We reduce the restricted multiplicity equivalence problem
to the letter-bounded case.
%
%
\begin{restatable}{lemma}{lemmainvhomo}
    \label{lem:invhomo}
    The Multiplicity Equivalence Problem restricted to a bounded language
    reduces in polynomial time to the multiplicity equivalence problem for grammars
    recognising a letter-bounded language.
\end{restatable}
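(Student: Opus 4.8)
The plan is to reduce to the letter-bounded case by an inverse-homomorphism argument on a fresh alphabet. Introduce new letters $B := \{a_1, \dots, a_k\}$, one per word $w_i$, and the homomorphism $h \colon B^* \to \Sigma^*$ given by $h(a_i) := w_i$; note that $a_1^* \cdots a_k^*$ is letter-bounded and that $h$ maps $a_1^* \cdots a_k^*$ \emph{onto} $L = w_1^* \cdots w_k^*$. I would construct, in polynomial time, a grammar $G'$ over $B$ containing, for each non-terminal $N$ of $G$, a non-terminal $N'$ with $L(N') \subseteq a_1^* \cdots a_k^*$ and $\sem{N'}_u = \sem{N}_{h(u)}$ for every $u \in a_1^* \cdots a_k^*$. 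Granting this, the reduction outputs $G'$ together with the non-terminals $N_1', N_2'$: since every $w \in L$ is $h(u)$ for some $u \in a_1^* \cdots a_k^*$ and, conversely, $h(u) \in L$ for every such $u$, we get $\sem{N_1'} = \sem{N_2'}$ as functions on $B^*$ if and only if $\sem{N_1}_w = \sem{N_2}_w$ for all $w \in L$; and $L(N_1'), L(N_2')$ are letter-bounded by construction, so this is a legitimate instance of multiplicity equivalence for grammars recognising a letter-bounded language.

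Constructing $G'$ proceeds in two steps, each a standard closure construction that preserves the number of leftmost derivations of each word and runs in polynomial time. First, using closure of weighted context-free grammars under inverse homomorphism, build a grammar $G_h$ over $B$ with a non-terminal $N^h$ for each non-terminal $N$ of $G$ satisfying $\sem{N^h}_u = \sem{N}_{h(u)}$ for all $u \in B^*$; concretely one may pass through a pushdown automaton for $G$, precompose it with the (deterministic, hence multiplicity-neutral) finite-state transducer realising $h$, and convert back to a grammar by the triple construction, re-establishing properness afterwards by the usual elimination of unit productions — no $\varepsilon$-productions arise since every $w_i$ is nonempty, so all the series involved vanish on $\varepsilon$. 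Second, take the product of $G_h$ with the obvious deterministic automaton recognising $a_1^* \cdots a_k^*$, namely a line of $k+1$ states; this restricts each language to $a_1^* \cdots a_k^*$ while keeping $\sem{N'}_u = \sem{N^h}_u = \sem{N}_{h(u)}$ for $u \in a_1^* \cdots a_k^*$ and $0$ elsewhere, since products with deterministic automata preserve multiplicities exactly. Both steps are polynomial in $|G| + \sum_i |w_i|$, so the whole reduction is polynomial.

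I expect the main obstacle to be the careful verification of the first step: that the inverse-homomorphism construction can be carried out while \emph{exactly} preserving derivation multiplicities, handling the granularity mismatch caused by $|w_i| > 1$ (a single $a_i$ expanding to a whole block, whereas $G$'s productions intersperse terminals arbitrarily) and the subsequent re-properisation of the output grammar. It is essential here to move to the new alphabet rather than intersect $G$ directly with an automaton for $L$ over $\Sigma$: a deterministic automaton for $L$ may be exponentially large, and an NFA would destroy multiplicities through ambiguous runs. Working over $B$ sidesteps both difficulties, since $a_1^* \cdots a_k^*$ has a trivial deterministic automaton; moreover the possible non-uniqueness of the factorisation of a word of $L$ into the $w_i$ is harmless, as distinct preimages $u$ of the same $w$ merely impose redundant — and mutually consistent — constraints on the equivalence $\sem{N_1'} = \sem{N_2'}$.
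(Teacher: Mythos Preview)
Your proposal is correct and follows essentially the same approach as the paper: introduce a fresh alphabet with one letter per $w_i$, apply the multiplicity-preserving inverse-homomorphism construction via pushdown automata (grammar $\to$ PDA $\to$ inverse image $\to$ grammar), and then take the product with the small deterministic automaton for $a_1^*\cdots a_k^*$ to force letter-boundedness. Your additional remarks on why one passes to the new alphabet and on the harmlessness of non-unique factorisations are not in the paper's proof but are welcome clarifications.
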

\begin{proof}
    Suppose we wish to check multiplicity equivalence of two non-terminals of a grammar $G$, restricted to a bounded language 
    $L:=w_1^*\cdots w_k^*$.
    Consider a fresh alphabet $\Gamma = \set{a_1, \dots, a_k}$
    and define the homomorphism $h : \Gamma^* \to \Sigma^*$
    by setting $h(a_1) = w_1, \dots, h(a_k) = w_k$.
    Below, we combine classical constructions to 
    transform~$G$ into a new grammar $G'$
    with the property that for each non-terminal~$N$ of $G$ there is a non-terminal $N'$ of $G'$ such that
     \begin{align}
        \label{eq:reduction}
            \sem {N'}_w = \sem N_{h(w)} \, .
    \end{align}    
    %
    This transformation is done in three steps.
    In the first step, as in \cite[Theorem 14.33]{KuichSalomaa}, we convert in polynomial time the non-terminal $N$
    to a pushdown automaton $A$ with the same multiplicity semantics:
    $\sem N = \sem A$ where $\sem A$ is the function that maps each word~$w$ to the number of accepting runs of~$A$ over~$w$.

    In the second step, from $A$ we build a pushdown automaton $B$
    recognising the inverse homomorphic image of the language recognised by $A$:
    $L(B) = h^{-1}(L(A))$.
    This can be achieved by a standard construction on pushdown automata
    \cite[Theorem 7.30]{HopcroftMotwaniUllman:2000},
    which can be performed in polynomial time.
    Inspecting  the construction,  we observe that that multiplicities are preserved, meaning that,
    for every word~$w \in \Gamma^*$, the equality
    $\sem A_{h(w)} = \sem B_w$ holds
    (see also \cite[Claim 11]{Raz:DLT:1993}).
    
    In the third step, as in \cite[Theorem 14.15]{KuichSalomaa}, we convert  in polynomial time the pushdown automaton~$B$
    to a nonterminal $N'$
    having the same multiplicity semantics:
    $\sem B = \sem {N'}$.
    This establishes \eqref{eq:reduction}.

    The non-terminal $N'$ defined above need not recognise a letter-bounded 
    language.
    This can be remedied by taking the product of $N'$
    with a deterministic finite automaton $A$ for the language $a_1^* \cdots a_k^*$,
    which is multiplicity preserving and can be done in polynomial time
    (as described in~\cite[Chapter IV, Theorem~3.5]{SalomaaSoittola}).
    Let $N_1''$ and $N_2''$ be the nonterminals obtained from
    the product of $N_1'$, resp., $N_2'$ with the automaton $A$.
    Not only are $N_1''$ and $N_2''$ letter bounded, but 
    $\sem {N_1}$ and $\sem {N_2}$ coincide on $L$ just in case 
    $\sem {N''_1}=\sem {N''_2}$.
\end{proof}

Combining \cref{cor:multiplicity equivalence-both}(1) and~\Cref{lem:invhomo},
we obtain \cref{cor:multiplicity equivalence-both}(2) from the introduction.


\color{black}

\section{Discussion}
In this paper we have investigated a number of computational problems
concerning the coefficients of algebraic series
defined by systems of polynomial equations
and have related
these to analogous well-known problems
for arithmetic circuits.
In~\Cref{thm:CoeffAlg} we showed that the problems {\CoeffAlg} and
{\CoeffSLP} are polynomial-time interreducible.  A natural question
for future work is whether it is likewise possible to reduce {\EqAlg}
to {\EqSLP}, rendering these two problems equivalent under
polynomial-time reductions. In~\Cref{thm:AlgIT} we gave a reduction of
{\EqAlg} to {\DegSLP}. It is easily seen that {\EqSLP} reduces in
polynomial time to {\DegSLP}, but it is not known whether there is a
polynomial-time reduction in the other direction.  There is moreover
a significant difference in the best known complexity upper bounds for
the two problems: {\EqSLP} (i.e., polynomial identity testing) famously admits
a number of different randomised polynomial-time algorithms, whereas
the best known complexity bound for {\DegSLP} involves a randomised
polynomial-time algorithm with a {\PP} oracle.

%



\color{black}

 \section*{Acknowledgment}

 We would like to thank Wojciech Czerwiński for pointing out reference \cite{Gawrychowski:Krieger:Rampersad:Shallit:IJFCS:2010} to us.
 We would like to thank Keith Conrad for his invaluable collection of expository papers,
 in particular his note on the multivariate version of Hensel's lemma.

\bibliographystyle{IEEEtran}
\bibliography{literature}

\newpage 

\onecolumn
\appendix
\subsection{Extended Preliminaries}
\label{sec:extendedPre}

%
%
%
%
%
%
%
%

\propgpcircuit*
\begin{proof} 
Define $S_{m} := \sum_{i=0}^{m} x^i$. 
Note that
$$ \begin{bmatrix}
S_{m+1} \\ 1
\end{bmatrix}
=  \begin{bmatrix}
x & 1 \\ 0 & 1 \\
\end{bmatrix} \cdot  \begin{bmatrix}
S_{m} \\ 1
\end{bmatrix}
$$
and thus
$$\begin{bmatrix}
S_m \\ 1 
\end{bmatrix} = \begin{bmatrix}
x & 1 \\ 0 & 1 \\
\end{bmatrix}^{m} \begin{bmatrix}
1 \\ 1
\end{bmatrix} $$
for all $m$.  
Since exponentiation of a matrix to the power~$m$ can be implemented via $O(\log{m})$ steps of repeated squaring, the statement follows. 

\end{proof}

\claimvalpoly*
\begin{proof}
    Let $p(x) = c_0 x^0 + \cdots + c_n x^n$.
    We then have $p(a) - p(b) = (a - b) \cdot q(a, b)$ for some polynomial $q(x, y) \in R[x, y]$.
    Thus $v(p(a) - p(b)) = v(a - b) + v(q(a, b)) \geq v(a - b)$,
    where the last inequality follows from the fact that the valuation is nonnegative.
\end{proof}





\subsection{Missing proofs in \texorpdfstring{\Cref{sec:coeffAlg}}{Section~III}}
\label{sec:app-proofs-sec-iii}

\propmultivarhenseljacobian*
\begin{proof}
Recall that the derivative matrix of~$\vec{f}$ is 
\begin{align}
      D\mathbf{f} &=  \begin{bmatrix} \frac{\partial f_1}{\partial y_1} \, \cdots \, \frac{\partial f_\ell}{\partial y_1} \\ \vdots \, \ddots \, \vdots \\ \frac{\partial f_1}{\partial y_\ell} \, \cdots \, \frac{\partial f_\ell}{\partial y_\ell} \end{bmatrix} 
      \end{align}
    
     Recall also that  $\mathcal{S}$ is assumed to be a proper equation system. This requires that for each polynomial $P_i$ and all 
     monomial~$Y^{\vec{v}}$ of total degree at most~1 that appears in $P_i$, 
     the coefficient~$a_{\vec{v}}\in \Z[X]$  lies in $\mathfrak{m}$.  Consequently, we can write $f_i$ as 
    \[
    f_i = y_i - h_{i,0}-\sum_{j=1}^\ell h_{i,j}y_j- \sum_{|\vec{v}|>1} g_{i,\vec{v}}\, Y^{\vec{v}} 
    \]
    for some  $ g_{i,\vec{v}} \in R_0$ and  $h_{i,j} \in \mathfrak{m}$  for all~$j\in \{0,\ldots,\ell\}$. 


    Let $i,j\in \{1,\ldots,\ell\}$ be such that $i\neq j$.
  We can write $\frac{\partial f_i}{\partial y_i}\in R_0[Y]$ as 
    \[\frac{\partial f_i}{\partial y_i} = 1 -  h_{i,i} -  \sum_{\substack{|\vec{v}|>1 \\ v_i \geq 1 }}\, v_i \, g_{i,\vec{v}} Y^{\vec{v}-e_i} \]
    where  $e_i$ is the $i$-th vector in the standard basis (with a $1$ in the $i$-th coordinate and $0$ elsewhere).
    Analogously, we can write~$\frac{\partial f_i}{\partial y_j}\in R_0[Y]$ as  
  \[\frac{\partial f_i}{\partial y_j} = -  h_{i,j} -  \sum_{\substack{|\vec{v}|>1 \\ v_j \geq 1 }}\, v_i \, g_{i,\vec{v}} Y^{\vec{v}-e_j} \] 
Consequently, for all $\vec{a}\in \mathfrak{m}^\ell$, we have     
 \[\frac{\partial f_i}{\partial y_i} (\vec{a}) \in 1 +\mathfrak{m}\]

   and  
    \[\frac{\partial f_i}{\partial y_j} (\vec{a}) \in \mathfrak{m} \, .\]
Thus $D\vec{f}$ is entry-wise congruent to the identity matrix modulo $\mathfrak{m}$.  Since the determinant of a matrix is a polynomial in its entries, it follows that $J_{\vec{f}}(\vec{a})$
is congruent modulo $\mathfrak{m}$ to the determinant of the identity matrix, which proves the claim.




\end{proof}

\propmultivarhenselnorm*
\begin{proof}

    Recall  that  $\mathcal{S}$ is assumed to be a proper equation system. This requires that for each polynomial $P_i$ and all 
     monomials~$Y^{\vec{v}}$ of total degree at most~1 that appear in $P_i$, 
     the coefficient~$a_{\vec{v}}\in \Z[X]$  lies in $\mathfrak{m}$.  Suppose that $\vec{a}\in \mathfrak{m}$. Then
    \[
    f_i(\vec{a}) = y_i - P_i(\vec{a}) 
    \]
    But $P_i(\vec{a}) \in \mathfrak{m}$ since  the coefficient of~$Y^{\vec{0}}$  lies in~$\mathfrak{m}$. 
\end{proof}

\propnewtonratfunction*
\noindent
We recall here Equation~(\ref{eq:a n i}) mentioned in the statement of the claim:
\begin{align}
    \tag{\ref{eq:a n i}}
    {\bf a}_{n,i} =\frac{g_{n,i}}{1- h_{n,i}}, \quad \text{with } \ord(h_{n,i})\geq 1.
\end{align}
\begin{proof}[Proof of~\Cref{prop:newton_rat_function}]
Recall that ${\bf a}_0 = \bf 0$ and that we have the recursive formula~\eqref{eq:update}. 
We define the circuits $C_{n,i}$ and $D_{n,i}$ 
by induction on $n$. 
For each~$i \in \{1,\ldots,\ell\}$ we define 
$C_{0,i}$ to be the constant~$0$ and $D_{0,i}$ to be the constant~$1$. 
For the inductive step we show how to construct for each $i \in \{1,\ldots,\ell\}$
the circuits $C_{n+1,i}$ and $D_{n+1,i}$ from the collection of circuits 
$\{C_{n,j},D_{n,j} : 1\leq j \leq \ell\}$, in time 
${\bf poly}(s)$.  For this we will use Equation~\eqref{eq:update}.  

\Cref{prop:det-circuit} ensures we can construct $\poly(s)$-size circuits for $J_{\bf f}({\bf a}_{n})$ as well as every entry of adjugate $\mbox{Adj}(D{\bf f}({\bf a}_{n}))$,
whose entries are just cofactors of~$D{\bf f}({\bf a}_{n})$.
Composing the circuits so-obtained with the circuits  
~$\{C_{n,j},D_{n,j} : 1\leq j \leq \ell\}$
 that represent the respective numerators and  denominators of the entries of ${\bf a}_n$ we obtain the desired circuits $C_{n+1,i}$ and $D_{n+1,i}$. This can be easily done by 
 using the rules $\frac{A}{B} + \frac{C}{D} = \frac{AD+BC}{BD}$ and $\frac{A}{B}\cdot \frac{C}{D} = \frac{AC}{BD}$.

It remains to argue that $\ord(h_{n,i})\geq 1$  for the polynomial~$h_{n,i}\in R_0$ represented by $D_{n,i}$, where
 $\vec{a}_{n,i}=\frac{g_{n,i}}{1-h_{n,i}}$.  
Since $J_{\bf f}(.)$ is a polynomial map, it maps $(\Q(X)\cap R)^\ell$ to the ring~$\Q(X)\cap R$.
In particular, we can write $J_{\bf f}(\vec{a}_{n-1})$  as $\frac{s}{1-t}$ for some $s,t\in R_0$. 
Moreover, 
by~\Cref{prop:multivar_hensel_jacobian} we know that~$J_{\bf f}(\vec{a}_{n-1})\in 1+\mathfrak{m}$,
hence the constant term of $s$ must be $1$. 
Since $h_{n,i}$ is obtained through  multiplication of $s$  and 
some of the polynomials $1-h_{n-1,j}$ with $j\in \{1,\ldots, \ell\}$ and $\ord(h_{n-1,j})\geq 1$, 
 the denominator of~$\vec{a}_{n,i}$ will have constant term~$1$. The claim follows. 
\end{proof}

\proppolycircuit*
\begin{proof}
By definition in Equation~\eqref{eq-approx-def}, we have $\widetilde{\bf a}_{n,1}=g_{n,1} \sum_{j=0}^{2^n-1} h_{n,1}^j$.  Note that a naive circuit for the expression on the right-hand side would have size $\poly(s)2^n$. However one can build a circuit of size $\poly(s,n)$ by plugging in $x = h_{n,1}$ and $m = 2^n$ in the circuit from Proposition~\ref{prop:gpcircuit}.
 \end{proof}
 
\sharpPhardCoeffAlg*
\begin{proof}
We present a straightforward reduction from  {\CoeffSLP} to  
{\CoeffAlg}. 
Fix an instance of {\CoeffSLP}, comprising a polynomial~$f$ represented by an arithmetic circuit~$C$ over   variables~$X$, a monomial  $X^{\bold{v}}$ and a prime~$p$.

We say that $C$ is balanced if all paths from the input gates to the designated output gate  have equal length.  
It is folklore that an algebraic circuit of size $s$ can be transformed into a binary and balanced circuit of size $O(s)$ computing the same polynomial in time $O(s)$.
Furthermore, the equivalent circuit can be constructed to have alternating levels of multiplication and addition/subtraction gates.
We thus assume without loss of generality that the circuit $C$ is binary, balanced and alternating.

Fix $\tilde{x}\in X$. We first replace each input gate $m\in \{0,1\}\cup X$ in~$C$  with~$m \tilde{x}$. Next, we 
consider the corresponding SLP of~$C$. That is  a collection of  sequential instructions in the form \begin{align*}
	Y_i & :=  Y_j \odot Y_{k}
\end{align*}
with $\odot \in \{\times,+,-\}$  where the $Y_i$ is an $\odot$-gate,  and the $Y_j,Y_{k}$ are the  two inputs of~$Y_i$ in~$C$.

We modify these equations  to obtain a proper equation system~$\mathcal{S}$. The idea is to merge three equations arising from a multiplication followed by  additions/subtractions into a single equation. The construction merges equations such as 
    \begin{align*}
	Y_1 & :=  Y_2 \times Y_3\\
    Y_2 &:= Y_4 +  Y_5 \\
    Y_3 &:= Y_6 + Y_7  
\end{align*}
with  all the $Y_i$  variables in the SLP,  to the single equation 
\[Y_1  :=  Y_4 Y_6 + Y_4 Y_7 + Y_5 Y_6 + Y_5 Y_7.\] 
A similar transformation is applied to a multiplication gate whose inputs are two subtraction gates, or an addition and a subtraction gate. 

We argue  the power series computed by~$\mathcal{S}$, that is clearly proper, is $f$ multiplied with $\tilde{x}^{\alpha}$ for some $\alpha \geq 1$.
The proof follows from the assumption that $C$ is balanced. The proof is by an induction
showing that all gates with the same distance to the input gates are multiplied with $\tilde{x}^{\alpha}$ for some~$\alpha$. For the base of induction, observe that
 all input gates $m$ are replaced with $m \tilde{x}$. For the induction step, given that two inputs $Y_j$ and  $Y_{m}$ of a $\odot$-gate~$Y_i$ are both multiplied with $x^{\alpha}$, then $Y_i$ is multiplied with 
\begin{itemize}
    \item $\tilde{x}^{\alpha}$ if $\odot\in \{+,-\}$,
    \item  $\tilde{x}^{2\alpha}$ otherwise (i.e., $\odot=\times$). 
\end{itemize}
We note that one can compute in ${\sf NC}$ an $\alpha$  such that the power series computed by~$\mathcal{S}$ is precisely  $\tilde{x}^{\alpha}f $. 
The statement of the claim follows.
\end{proof}

\subsection{Proof of Equation~\texorpdfstring{\eqref{eq:example-closed-form}}{\ref{eq:example-closed-form}}}
\label{sec:app-example}

We will prove by induction that, for all $n \geq 1$, the closed-form formula \eqref{eq:example-closed-form} for $a_n$ holds. 
It is straightforward to verify that $a_1 = x - \frac{x^2}{(x+1)^2 -x^2}$. For the inductive step, assume that~\eqref{eq:example-closed-form} holds for some $n \in \N$. Following \eqref{eq:example-rec} we can write
\begin{equation}
\setlength{\jot}{12pt}
\begin{aligned}
    a_{n+1} &= \frac{-(x - \frac{x^{2^n}}{(x+1)^{2^n} - x^{2^n}})^2 + x^2 + x}{1 + 2x - 2(x - \frac{x^{2^n}}{(x+1)^{2^n} - x^{2^n}})}
    \\
    &= \frac{2x\frac{x^{2^n}}{(x+1)^{2^n} - x^{2^n}} - \frac{(x^{2^n})^2}{((x+1)^{2^n} - x^{2^n})^2} + x}{1 + \frac{x^{2^n}}{(x+1)^{2^n} - x^{2^n}}}
    \\
    &= \frac{\displaystyle
    2x \cdot x^{2^n}(x+1)^{2^n} - 2x \cdot x^{2^{(n+1)}} - x^{2^{(n+1)}} + x\cdot((x+1)^{2^n} - x^{2^n})^2}
    {\displaystyle((x+1)^{2^n} - x^{2^n} + 2x^{2^n})((x+1)^{2^n} - x^{2^n})}
    \\
    &= \frac{x\cdot(x+1)^{(n+1)}-x \cdot x^{2^{(n+1)}} - x^{2^{(n+1)}}}{((x+1)^{2^n} + x^{2^n})((x+1)^{2^n} - x^{2^n})}
    \\
    &= \frac{x\cdot(x+1)^{(n+1)}-x \cdot x^{2^{(n+1)}} - x^{2^{(n+1)}}}
    {(x+1)^{2^{(n+1)}} - x^{2^{(n+1)}}}
\end{aligned}
\end{equation}
Hence $a_{n+1} = x - \frac{x^{2^{(n+1)}}}{(x+1)^{2^{(n+1)}} - x^{2^{(n+1)}}}$
and our induction is completed.

\end{document}